\title{Determining the Number of Communities in Sparse and Imbalanced Settings}
\author[1]{Zhixuan Shao\thanks{zhxshao@ucdavis.edu}}
\author[1]{Can M. Le\thanks{canle@ucdavis.edu}}
\affil[1]{Department of Statistics, University of California, Davis}
\date{}
\newtheorem{theorem}{Theorem}
\newtheorem{proposition}{Proposition}
\newtheorem{conjecture}{Conjecture}
\newtheorem{lemma}{Lemma}
\newcommand{\E}{\mathbb{E}}
\newcommand{\Prob}{\mathbb{P}}
\newcommand{\Var}{\mathrm{Var}}
\newcommand{\Phat}{\widehat{P}}
\newcommand{\tg}{\textsl{g}}
\newcommand{\Hc}{\underline{H}}
\newcommand{\Dc}{\underline{D}}
\newcommand{\dc}{\underline{d}}
\newcommand{\Bc}{\underline{B}}
\newcommand{\Ac}{\underline{A}}
\newcommand{\vc}{\underline{v}}
\newcommand{\lambdac}{\underline{\lambda}}
\newcommand{\Hce}{\widehat{\underline{H}}}
\newcommand{\Dce}{\widehat{\underline{D}}}
\newcommand{\dce}{\widehat{\underline{d}}}
\newcommand{\Ace}{\widehat{\underline{A}}}
\newcommand{\An}{\widetilde{A}}
\newcommand{\Hn}{\widetilde{H}}
\newcommand{\Ane}{\widehat{\widetilde{A}}}
\newcommand{\In}{\bm{I}_n}
\begin{document}

\maketitle

\begin{abstract}
    Community structures represent a crucial aspect of network analysis, and various methods have been developed to identify these communities.
    However, a common hurdle lies in determining the number of communities $K$, a parameter that often requires estimation in practice.
    Existing approaches for estimating $K$ face two notable challenges: the weak community signal present in sparse networks and the imbalance in community sizes or edge densities that result in unequal per-community expected degree.
    We propose a spectral method based on a novel network operator whose spectral properties effectively overcome both challenges.
    This operator is a refined version of the non-backtracking operator, adapted from a ``centered" adjacency matrix.
    Its leading eigenvalues are more concentrated than those of the adjacency matrix for sparse networks, while they also demonstrate enhanced signal under imbalance scenarios, a benefit attributed to the centering step.
    This is justified, either theoretically or numerically,
    under the null model $K=1$, in both dense and ultra-sparse settings.
    A goodness-of-fit test based on the leading eigenvalue can be applied to determine the number of communities $K$.  
    
\end{abstract}

\section{Introduction}
Network data have wide-ranging applications in various real-world problems, including social networks (Facebook friendship, LinkedIn following, etc.), biological networks (gene network, gene-protein network), information networks (email network, World Wide Web), among others. 
One key feature in many of these networks is the presence of communities, which naturally partition the network into clusters of nodes with high internal connectivity.
These community structures offer valuable insights for network analysis.
It is, therefore, of utmost interest to identify these communities.

Typically, a generative model is essential for approaching the problem theoretically.
Due to its simplicity, the stochastic block model (SBM) \citep{holland1983stochastic} is arguably the most widely used framework for modeling community structures. In this model, each node is assigned a latent block label, and the connecting probabilities are determined by their block memberships.
An important extension of this model is
the degree-corrected SBM (DCSBM) \citep{karrer2011stochastic},
which introduces a degree parameter for each node, 
allowing for the modeling of degree heterogeneity.

Much research effort has been devoted to the problems of estimating the latent
block memberships in an SBM or DCSBM. There are mainly two types of methods. The first type is based on maximizing likelihood or modularities \citep{newman2006modularity, bickel2009nonparametric, karrer2011stochastic, zhao2012consistency}. 
Since the problem of optimizing over all possible label assignments is computationally infeasible, many variants are proposed, including 
pseudo-likelihood \citep{amini2013pseudo}, 
convex optimization \citep{abbe2015exact, chen2018convexified, amini2018semidefinite, li2021convex},
and variational inference \citep{latouche2012variational, celisse2012consistency, bickel2013asymptotic}, to list a few.

The second type is referred to as spectral methods.
It applies a standard clustering method (e.g., $k$-means) on relevant eigenvectors of a matrix (also called a graph operator) built from the graph.
The most common choices use the adjacency matrix and the Laplacian matrix
\citep{mcsherry2001spectral, fishkind2013consistent, lei2015consistency, jin2015fast}. 
However, it is noticed that the adjacency matrix or the Laplacian matrix does not concentrate well under sparsity \citep{le2017concentration}, particularly in the ultra-sparse regime when the average expected degree is of constant order. 
This observation has led to the development of regularized versions of these matrices \citep{chaudhuri2012spectral, rohe2011spectral, sarkar2015role, le2015sparse}.
On the other hand, the spectrum of two different matrices, the non-backtracking matrix
\citep{krzakala2013spectral, bordenave2015non, gulikers2017non} and the Bethe-Hessian matrix \citep{saade2014spectral, dall2019revisiting}, is much better behaved under this regime.
In particular, the non-backtracking matrix arises in connection to belief propagation \citep{watanabe2009graph, decelle2011asymptotic}, whose spectrum is shown to detect communities all the way down to the theoretical limit in the symmetric case \citep{krzakala2013spectral, bordenave2015non, mossel2018proof, massoulie2014community}.

However, most of these methods assume that the number of communities $K$ is given \textit{a priori} as an input, which is often unknown in practice.
Several methods are proposed to estimate $K$, which can be considered a model selection problem.
Likelihood-based methods \citep{saldana2017many, wang2017likelihood, hu2020corrected} use BIC-type criteria, 
which requires optimizing the maximum likelihood and are computationally challenging.
\citet{yan2018provable} proposed a semi-definite programming approach which estimates $K$ and recovers the memberships simultaneously.
A Bayesian approach with a new prior and an efficient sampling scheme was proposed by \citet{riolo2017efficient}.
More recently, \citet{ma2021determining} combined spectral clustering with binary segmentation to derive a new estimate of $K$ based on a pseudo likelihood ratio.
Cross-validation approaches were proposed by \citet{chen2018network, li2020network}.
While they have shown consistency guarantee for SBM, they require estimating communities on many random network splits, which is also computationally costly.
Alternatively, spectral methods are typically simple yet computationally efficient. 
Of these, the USVT method \citep{chatterjee2015matrix} estimates $K$ by simply thresholding the eigenvalues of the adjacency matrix appropriately.
A more recent rank inference method was proposed by \citet{han2023universal} via residual sampling.
It is worth mentioning that many of these methods are essentially goodness-of-fit tests, which can be applied sequentially or recursively to determine the number of communities $K$.
In particular, a goodness-of-fit test based on the leading eigenvalue of a normalized adjacency matrix was proposed separately by \citet{bickel2016hypothesis} and \citet{lei2016goodness}, which we will introduce in more detail in Section~\ref{section:normalized_adjacency}.
More recently, \citet{zhang2023adjusted} proposed a goodness-of-fit test for DCSBM based on an adjusted chi-square statistic, and another step-wise goodness-of-fit test was proposed by \citet{jin2023optimal} based on SCORE \citep{jin2015fast}.
 
Sparsity is one major challenge that confronts all the above methods. Specifically, most of them require the growth of the average degree to be no slower than $\log n$.
Inspired by the favorable spectral properties of the non-backtracking matrix and the Bethe-Hessian matrix,
the approach that directly counts the ``informative" eigenvalues of these matrices has been proposed by \citet{le2022estimating, dall2021unified, hwang2023estimation}.
For the method of counting the informative eigenvalues of the non-backtracking matrix, its consistency in the ultra-sparse regime with constant average degree is implied by \citet{bordenave2015non} for SBM and extended to DCSBM by \citet{gulikers2017non}.
Furthermore, its consistency in the semi-dense regime with average degree growing faster than $\log n$ has been shown in \citet{le2022estimating}. 
Additionally, within the same semi-dense regime, \citet{le2022estimating} have also demonstrated the consistency of counting the eigenvalues of the Bethe-Hessian matrix with certain heuristic-based choices of the scale parameter $r$.
\citet{dall2021unified} further explored the ultra-sparse regime with degree heterogeneity (DCSBM), proposing an intuitive choice of $r$ such that the largest isolated eigenvalue of the Bethe-Hessian matrix is zero.

However, most of these counting approaches assume an equal expected degree in each community, a setting, from a theoretical point of view, considered most challenging as the degrees alone do not contain information about the latent structure.
However, as noted in \citet{le2022estimating}, a limitation emerges in practice when dealing with unbalanced communities in terms of size or edge density—--the spectrum of these matrices cannot effectively distinguish informative eigenvalues from the bulk, leading to a severe underestimate of $K$.
As demonstrated in Figure~\ref{fig:compare_nonbacktracking_normalized_adjacency_under_community_size_imbalance} and Figure~\ref{fig:compare_nonbacktracking_normalized_adjacency_under_P_imbalance}, where $K=2$ and we have unequal connecting probabilities in two communities, 
the second largest eigenvalue of the non-backtracking matrix is swamped within the bulk.
However, in the meantime, the normalized adjacency matrix, as proposed in \citet{bickel2016hypothesis}  and \citet{lei2016goodness}, clearly shows the existence of an informative eigenvalue despite the imbalance, signaling a second community.
We argue in Section~\ref{section:centering_enhances_signal} that the success of the latter under imbalance can be attributed to the centering process. 

Until very recently, no spectral method has emerged that effectively addresses both challenges---sparsity and imbalance---simultaneously.
\citet{hwang2023estimation} considered counting the eigenvalues of the Bethe-Hessian matrix in the sub-logarithmic sparse regime without requiring equal per-community expected degree.
They propose to choose the parameter $r$ from an oracle interval and show its consistency for estimating $K$.
They also propose an empirical estimate of $r$, which lies in the aforementioned interval with high probability.

Despite the recent literature focusing on the Bethe-Hessian matrix, we maintain that the non-backtracking matrix retains several advantages over it.
Firstly, it is guaranteed effective even under the sparser regime with constant order average degree.
Secondly, and perhaps more crucially for practical applications, it is tuning-free, without the need to choose the scale parameter $r$ or any additional hyperparameters for tuning $r$ as seen in \citet{hwang2023estimation}.

Previous considerations have naturally inspired us to explore the integration of the distinct strengths of two existing methods, intuitively driven by their two key ingredients: non-backtracking and centering/normalization.
We propose a refined version of the non-backtracking operator, adapted from a ``centered" adjacency matrix.
Similar to the conventional non-backtracking matrix, its more concentrated spectrum better captures the signal under the ultra-sparse regime with a bounded average degree.
On the other hand, it demonstrates enhanced signal under imbalance scenarios, a benefit attributed to the centering process.

The rest of the article is organized as follows.
In Section~\ref{section:preliminaries}, we introduce the model and notations and cover the non-backtracking matrix and normalized adjacency matrix in more detail.
In Section~\ref{section:proposed_operator:nonbacktracking_with_centering}, we define the proposed spectrum operator and introduce a goodness-of-fit test based on this operator. We will analyze its null distribution and asymptotic power and delve deeper into the centering mechanism. 
In Section~\ref{section:numerical_experiment}, we provide an extensive simulation study on various settings and compare our proposed test with existing ones.
In Section~\ref{section:determine_K_in_practice} we discuss how to apply this goodness-of-fit test to determine $K$ in practice and present a real-data example.
Finally, we make some comments and discussions in Section~\ref{section:discussions}.

\section{Preliminaries}
\label{section:preliminaries}
We denote by $n$ the number of nodes in the network.
Recall that $A$ is the $n \times n$ symmetric network adjacency matrix with no self-loops.
Let $d_i = \sum_{j=1}^n A_{ij}$ be the degree of node $i$.
Let $P \coloneqq \E[A] $ represent the edge connection probabilities. That is, we assume the edges $A_{ij}$ are independent Bernoulli variables with probabilities given by the entries $P_{ij}$. 

An SBM on $n$ nodes with $K$ communities is parameterized by a membership vector $\bm{g}\in \{1, ..., K\}^n$ and a symmetric block-wise edge probability matrix $Q \in [0,1]^{K \times K}$. 
For this paper, we assume that the node memberships $\bm{g}$ are fixed and unknown.
The edges $A_{ij}$ are then independent Bernoulli random variables with parameters determined by the node memberships,
\begin{equation*}
     \Prob (A_{ij} = 1) = P_{ij} = Q_{g_i g_j}, \quad \forall i \ne j.
\end{equation*}
We assume the model is assortative, that is, $Q_{ii} \ge Q_{ij}$ when $j\ne i$.

For a vector $\bm{m} = (m_1, m_2, ..., m_n)^\top$, let $\mathrm{diag}(\bm{m})$ denote the $n\times n$ diagonal matrix with $i$th diagonal entry being $m_i$. For any $n\times n$ symmetric matrix $U$, let $\lambda_j(U)$ denote its $j$th largest real eigenvalue, arranged in descending order: $\lambda_1(U) \ge \lambda_2(U) \ge \cdots \ge \lambda_n(U)$.
For any $n \times n$ matrix $V$, not necessarily symmetric, we denote its eigenvalues by $\{\mu_i(V)\}_{i=1}^n$, ordered by their real parts in descending orders: $\mathrm{Re}\{\mu_1(V)\} \ge \mathrm{Re}\{\mu_2(V)\} \ge \cdots \ge \mathrm{Re}\{\mu_n(V)\}$.
For a vector $v \in \mathbb{R}^d$, we denote its $\ell_2$ norm as $\|v\|$, and its $\ell_1$ norm as $\|v\|_1$.
For a matrix $A$, we denote its operator norm as $\|A\|$, defined as $\|A\| = \max_{\|v \| =1} \|Av\|$.

\subsection{Non-backtracking matrix fails in imbalanced settings}
Let $m$ be the number of edges in an undirected network.
To construct the non-backtracking matrix, we represent the edge between node $i$ and node $j$ by two directed edges, one from $i$ to $j$ and the other from $j$ to $i$.
The $2m \times 2m$ non-backtracking matrix $B$, indexed by these directed edges, is defined by
\begin{equation}
    B_{i \to j,\, k \to l} = \left\{ \begin{array}{ll}
        1 & \textrm{if } j=k \textrm{ and } i \ne l \\
        0 & \textrm{otherwise.}
    \end{array}
    \right. \label{eq:conventional_nonbacktracking}
\end{equation}
It has been shown, e.g., in \citet{angel2015non, krzakala2013spectral}, that the spectrum of $B$ is the set $\{\pm 1\} \cup \{\mu: \det(\mu^2 I - \mu A + D - I) = 0\}$, or equivalently, the set $\{\pm 1\} \cup \{\textrm{eigenvalues of } H\}$, where
\begin{equation}
    H = \left(\begin{array}{cc}
       A  & \In - D \\
       \In  & \bm{0}
    \end{array} \right). \label{eq:definition_of_H}
\end{equation}
We call this $2n \times 2n$ matrix $H$ the \textit{non-backtracking spectrum operator} for $A$.
Here 
$D = \mathrm{diag}(\bm{d})$ is $n \times n$ diagonal matrix with degrees $d_i$ on its diagonal.

Since $B$, or equivalently, $H$, is not symmetric, its eigenvalues are generally non-real.
Moreover, they are different from its singular values---while the nontrivial singular values of $B$, $\{d_i - 1\}_{i \in \mathcal{V}}$, are controlled by the node degrees, its eigenvalues are not \citep{krzakala2013spectral}.
This asymmetry is precisely why its spectral properties are superior under sparsity.

From the graph theory perspective, unlike the adjacency matrix $A$, the spectrum of $B$ is not sensitive to high-degree nodes because a walk starting at some node $i$ cannot turn around and return to itself immediately.
It is of interest to study the powers of these matrices, motivated by the method of moment.
In the case of the adjacency matrix, powers are counting walks from one node to another, and these get multiplied around high-degree nodes since the walk can come in and out of such nodes in multiple ways.
Instead, by the construction of the non-backtracking matrix, taking powers forces a directed edge to leave to another directed edge that does not backtrack, preventing such amplifications around high-degree nodes.
So, the non-backtracking gives a way to mitigate the degree-variations and to avoid \textit{localized} eigenvector under sparsity.

Formally, 
under the constant degree regime with $\max_{i} \E[d_i] = O(1)$,
the leading eigenvalues of $A$ (or $\Ac$ below) are dictated by the nodes of the highest degrees, and the corresponding eigenvectors are \textit{localized} around these nodes \citep{benaych2019largest, hiesmayr2023spectral}.
In particular, the adjacency matrix no longer concentrates under this ultra-sparse regime \citep{krivelevich2003largest, le2017concentration}. With high probability,
\begin{equation}
    \| A \| = (1+o(1))\sqrt{d_{\max}} = (1 + o(1)) \sqrt{\frac{\log n}{\log \log n}}.
    \label{eq:adjacency_not_concentrate}
\end{equation}
Intuitively, for SBM with constant-order expected degrees, while the amount of signal is held constant, the amount of noise grows as $n$ increases. As a result, the bulk of uninformative eigenvalues will swamp the signal related to community structure, making its existence difficult to detect.

On the other hand, the spectrum of $B$ behaves much better.
It satisfies the \textit{weak Ramanujan property} \citep{bordenave2015non}, which roughly says that the leading eigenvalues of $B$ concentrate around non-zero eigenvalues of $\E[A]$ and the bulk is contained in a circle of radius $\sqrt{\|B\|} \approx \sqrt{\lambda_1(\E[A])}$.
Then, the signal can be detected as long as the informative eigenvalues are ``visible" and separate from the bulk.
It is shown that the method of counting informative eigenvalues of $B$ succeeds all the way down to the \textit{detectability threshold} (defined in e.g., \citet{massoulie2014community, abbe2018community}) in the symmetric case. 

Despite its favorable spectral property under sparsity, there is a notable drawback of the non-backtracking matrix in practice. 
As is noticed in \citet{le2022estimating}, when the communities are unbalanced in terms of size or edge density, it will be significantly more challenging to distinguish informative eigenvalues from the bulk, leading to a severe underestimate of $K$ by counting its leading eigenvalues.
To illustrate, in Figure~\ref{fig:compare_nonbacktracking_normalized_adjacency_under_community_size_imbalance} and Figure~\ref{fig:compare_nonbacktracking_normalized_adjacency_under_P_imbalance},
where we have unequal connecting probabilities in two communities, 
the second largest eigenvalue of the non-backtracking matrix is swamped within the bulk.
However, in the meantime, the \textit{normalized adjacency matrix}, as proposed in \citet{bickel2016hypothesis} and \citet{lei2016goodness}, clearly suggests the existence of an informative eigenvalue despite the imbalance, signaling a second community.
Its effectiveness under imbalance can be attributed to centering, as will be discussed in Section~\ref{section:centering_enhances_signal}. 

We will next take a closer look at the normalized adjacency matrix.

\subsection{Normalized adjacency matrix and Tracy-Widom distribution fail in sparse settings}
\label{section:normalized_adjacency}
Let us introduce the normalized adjacency matrix $\An$ considered in \citet{bickel2016hypothesis} and \citet{lei2016goodness}:
\begin{equation*}
    \An_{ij} = \frac{A_{ij} - P_{ij}}{\sqrt{(n-1) P_{ij} (1-P_{ij})}},\quad i\ne j \quad \textrm{and } \quad 
    \An_{ii} = 0, \forall i. 
\end{equation*}
If the model $P$ is correctly specified, i.e., $P = \E[A]$, then $\An$ is a \textit{generalized Wigner matrix} \citep{erdos2011universality}, satisfying $\E[\An] = \bm{0}$ and $\sum_j \Var (\An_{ij}) = 1$ for all $i$.
It is a well-known result \citep{erdHos2013spectral,benaych2016lectures} that the empirical distribution of its eigenvalues converges weakly to the \textit{semicircle law} when $n\min_{i,j}\{P_{ij}\}
 \gg \log n$,
\begin{equation*}
    \rho_{sc}(x) = \frac{1}{2\pi} \sqrt{[4 - x^2]_+}.
\end{equation*}
Furthermore, the asymptotic distribution of the extreme eigenvalues of $\An$ has also been well studied.
Based on results in \citet{erdHos2012spectral, lee2014necessary}, when $n\min_{i,j}\{P_{ij}\} \gg n^{2/3}$, we have
\begin{equation}
    n^{2/3}\{\lambda_1(\An) - 2\} \overset{d}{\to} TW_1,
    \label{eq:Tracy-Widom_limit_of_lambda_An}
\end{equation}
where $TW_1$ denotes the Tracy–Widom distribution with index 1.
When $A$ is generated from an Erd\H{o}s–R\'{e}nyi model $G(n,p)$ with $P_{ij} = p, \, \forall i \ne j$, 
it is further shown by \citet{lee2018local} that the Tracy-Widom limit still holds when $np \gg n^{1/3}$, except with a deterministic shift of $(np)^{-1}$,
\begin{equation}
    n^{2/3}\{\lambda_1(\An) - (2 + (np)^{-1})\} \overset{d}{\to} TW_1.
    \label{eq:Tracy-Widom_limit_with_deterministic_shift}
\end{equation}
This deterministic shift is quite noticeable for finite $n$ that is not large enough.
See Figure~\ref{fig:null_distributions_with_scaling} where $n \lesssim 10^3$.

In practice, $P$ is unknown, so a test statistic cannot be directly constructed from $\An$.
A natural substitute is a plug-in version with $P$ replaced by some estimator $\widehat{P}$.
The empirical normalized adjacency matrix is defined as
\begin{equation*}
    \Ane_{ij} = \frac{A_{ij} - \Phat_{ij}}{\sqrt{(n-1) \Phat_{ij} (1-\Phat_{ij})}},\quad i\ne j \quad \textrm{and } \quad 
    \Ane_{ii} = 0, \forall i.
\end{equation*}
It is proposed by \citet{lei2016goodness} and \citet{bickel2016hypothesis} to use $\lambda_1(\Ane)$ as the test statistic for the goodness-of-fit test for an SBM with $K_0$ communities.

Under the null hypothesis, where we assume $P=\E[A]$ is an SBM with $K_0$ communities, it is shown by \citet{lei2016goodness} that when the community label estimate $\widehat{\bm{g}}$ is consistent and we use a plug-in estimator $\widehat{P}$, the leading eigenvalue of $\Ane$ is asymptotically equivalent to the leading eigenvalue of $\An$. Namely,
\begin{equation*}
    \lambda_1(\Ane) = \lambda_1(\An) + o_{\Prob}(n^{-2/3}), \quad \textrm{under } H_0:K=K_0.
\end{equation*}
Therefore, $\lambda_1(\Ane)$ also has the Tracy-Widom limit like in~\eqref{eq:Tracy-Widom_limit_of_lambda_An}.
This validates the Type-I error control of the Tracy-Widom test based on $\Ane$.
Furthermore, \citet{lei2016goodness} provides asymptotic power guarantee when $H_0$ underestimates $K$.
That is, if the network is generated by the alternative model $\mathrm{SBM}(\bm{g}^{(n)}, Q^{(n)})$ with $K$ blocks (assuming $K$ fixed), while one estimates $P$ by $\widehat{P}$ assuming some $K_0 < K$, then there exists a lower bound for the growth rate of $\lambda_1(\Ane)$,
which holds regardless of the structure of $Q^{(n)}$ and the particular method used to estimate the membership $\bm{g}^{(n)}$.

Towards the end of this section, we highlight that the normalized adjacency $\An$ still faces concentration issues \eqref{eq:adjacency_not_concentrate} under sparsity, much like its predecessor, the adjacency matrix $A$.
That means spectral algorithms based on $\An$ also fail significantly above the detectability threshold.
In Figure~\ref{fig:compare_nonbacktracking_normalized_adjacency_under_sparsity}, where the network is generated from a sparse balanced SBM with $K=2$, the largest eigenvalue of $\An$ fails to detect the signal, while the spectrum of the non-backtracking matrix distinctly reveals the presence of the second community.

\section{Novel non-backtracking matrix with centering}
\label{section:proposed_operator:nonbacktracking_with_centering}
Inspired by the benefits brought by centering and non-backtracking, respectively, we consider a variant of the non-backtracking operator based on the following centered adjacency matrix $\Ac$:
\begin{equation*}
    \Ac_{ij} = A_{ij} - P_{ij}, \quad i\ne j, \quad \textrm{and } \quad 
    \Ac_{ii} = 0, \quad \forall i.
\end{equation*}
Define
\begin{equation}
    \Hc = \left(\begin{array}{cc}
       \Ac  & \In - \Dc \\
        \In & \bm{0}
    \end{array} \right),
    \label{eq:definition_of_Hc}
\end{equation}
where $\Dc = \mathrm{diag}(\bm{\dc})$ is an $n\times n$ diagonal matrix with $\bm{\dc} \in \mathbb{R}^n$ and $\dc_i \coloneqq \sum_{j=1}^n \Ac_{ij}$ being its $i$th diagonal entry.
We have $\E[\Dc]=\bm{0}$ when $P_{ij} = \E[A_{ij}], \, \forall i\ne j$.

For practical use, we still rely on an empirical version of $\Hc$ with $P$ replaced by some $\widehat{P}$ that can be calculated from the observed network $A$, 
\begin{equation*}
    \Hce = \left(\begin{array}{cc}
       \Ace  & \In - \Dce \\
        \In & \bm{0}
    \end{array} \right),
\end{equation*}
with $\Ace_{ij} = A_{ij} - \widehat{P}_{ij}$, $\Dce = \mathrm{diag}(\bm{\dce})$, and $\dce_i \coloneqq \sum_{j=1}^n \Ace_{ij}$ similarly defined.
A natural choice of $\widehat{P}$ will be some estimator of $P$ assuming $K=K_0$ for some pre-specified $K_0$.
We will sometimes denote $\Hce$ by $\Hce_{(K_0)}$ when it is necessary to explicitly show that the $\widehat{P}$ in $\Hce$ is estimated assuming $K=K_0$.

We propose to use the leading eigenvalue of $\Hce$ as the test statistic for a goodness-of-fit test of an SBM with $K_0$ communities.
When testing $H_0: K = K_0$ vs. $H_1: K>K_0$, 
we consider the following decision rule:
\begin{equation*}
    \textrm{Reject } H_0 \textrm{ if } \mu_1(\Hce_{(K_0)}) > t_{n, K_0},
\end{equation*}
where $t_{n,K_0}$ is some threshold to be determined.
The intuition is as follows.
Under the null hypothesis $H_0:K = K_0$, if the estimator $\widehat{P}$ closely approximates $P$, then $\Ace$ is a ``noise" matrix with almost mean-zero entries, and the largest eigenvalue of $\Ace$ or $\Hce$ should be on the edge of the bulk.
On the other hand, under the alternative hypothesis $H_1:K>K_0$, we expect the corresponding eigenvalue to become informative and therefore separated from the bulk. 

We provide theoretical justification for our proposed operator for the rest of this section.
Our analysis will focus on the simple task of testing $H_0: K = 1$ vs. $H_1: K>1$,
which essentially boils down to a goodness-of-fit test for an Erd\H{o}s–R\'{e}nyi model $G(n,p)$ with no community structure.
As discussed later in Section~\ref{section:sequential_or_recursive_testing}, one way to estimate $K$ is by recursively applying this simple test and bi-partitioning the network \citep{li2022hierarchical}. 
Under $H_0: K = 1$, both $P$ and $\widehat{P}$ reduce to a matrix with off-diagonal entries being the population and empirical edge density, respectively, 
\begin{equation}
    \Ac_{ij} = A_{ij} - p,\quad \textrm{ and } \quad \Ace_{ij} = A_{ij} - \widehat{p}, \, \quad \forall i \ne j, \label{eq:definition_of_Ac_and_Ace}
\end{equation}
where $\widehat{p} = \frac{1}{n(n-1)}\sum_{i,j} A_{ij}$ estimates $p$ very accurately as $|\widehat{p} - p| = O_{\Prob}(n^{-1}p^{1/2})$.

Here, we briefly review some related work on goodness-of-fit tests of the Erd\H{o}s–R\'{e}nyi model against SBM alternatives, including but not limited to the following studies.
\citet{wang2017likelihood, hu2020corrected} proposed likelihood-based approaches and BIC-type criteria. 
As already introduced in Section~\ref{section:normalized_adjacency}, \citet{bickel2016hypothesis} and \citet{lei2016goodness} considered the spectral approach that uses the leading eigenvalue of $\Ane$. 
\citet{gao2017testing1, gao2017testing2} proposed a test based on frequencies of three-node subgraphs. 
\citet{banerjee2017optimal} introduced a linear spectral statistic to test $H_0: K = 1$ vs. $H_1 : K = 2$ under the SBM.
Many of these methods can also be extended to DCSBM alternatives. 
However, for simplicity, we will focus on SBM alternatives in this study and leave its potential as a goodness-of-fit test for DCSBM for future work.

\subsection{Null distribution under $H_0: K=1$}
Let the Erd\H{o}s–R\'{e}nyi model $G(n, p)$ be the null model with no community structure.
Inspired by the idea in \citet{wang2023limiting}, 
we consider the rescaled conjugation of $\Hc$ which has the same eigenvalue with $\Hc/\sqrt{\alpha}$, where $\alpha = (n-1)p + 1$.
Note that $\Dc = D - (n-1)p \In = D - (\alpha - 1)\In$.
Define
\begin{eqnarray*}
    \widetilde{H} & \coloneqq& \frac{1}{\sqrt{\alpha}} 
    \left(\begin{array}{cc}
       \frac{1}{\sqrt{\alpha}} \In  &  \bm{0} \\
       \bm{0}  &  \In
    \end{array} \right)
    \left(\begin{array}{cc}
       \Ac  &  \In - \Dc \\
       \In  &  \bm{0}
    \end{array} \right)
    \left(\begin{array}{cc}
       \sqrt{\alpha}  \In  &  \bm{0} \\
       \bm{0}  &  \In
    \end{array} \right) \nonumber\\
    & = &\left(\begin{array}{cc}
       \frac{1}{\sqrt{\alpha}}  \Ac  &   \alpha^{-1} (\In - \Dc) \\
       \In  &  \bm{0}
    \end{array} \right) = \left(\begin{array}{cc}
       \frac{1}{\sqrt{\alpha}}  \Ac  &  \In - \alpha^{-1} D \\
       \In  &  \bm{0}
    \end{array} \right) . 
\end{eqnarray*}
We have the decomposition 
\begin{equation*}
    \widetilde{H} = \widetilde{H}_0 + E,
\end{equation*}
where
\begin{equation*}
    \widetilde{H}_0 = \left(\begin{array}{cc}
        \frac{1}{\sqrt{\alpha}}  \Ac  &  \bm{0} \\
       \In  &  \bm{0}
    \end{array}\right)
\quad \textrm{ and } \quad 
    E = \left(\begin{array}{cc}
        \bm{0} & \alpha^{-1} (\In - \Dc) \\
        \bm{0} & \bm{0}
    \end{array}\right) =  \left(\begin{array}{cc}
       \bm{0}  &  \In - \alpha^{-1} D \\
       \bm{0}   &  \bm{0}
    \end{array} \right).
\end{equation*}
The nonzero eigenvalues of $\widetilde{H}_0$ are simply given by the eigenvalues of $\frac{1}{\sqrt{\alpha}} \Ac$.

\subsubsection{Dense regime}
When the network is sufficiently dense, note that $\E [\alpha^{-1} (\In - \Dc)] = \alpha^{-1} \In$, and $\Var (\alpha^{-1} (\In - \Dc)) = \alpha^{-2} \Var(D) \approx \alpha^{-1} \In$. 
That means when $\alpha \to \infty$ fast enough, the matrix $E$ becomes negligible and can be treated as a perturbation.
In particular, it can be shown that $\|E\| = o_{\Prob}(1)$ when $\alpha /\log n \to \infty$.
As a result, the leading eigenvalue of $\widetilde{H}$ converges to that of $\frac{1}{\sqrt{\alpha}}\Ac$.
Moreover, when the average degree grows faster than $n^{2/3}$, 
we anticipate that $\mu_1(\widetilde{H})$ converges to $\lambda_1(\frac{1}{\sqrt{\alpha}}\Ac)$ fast enough so that it also has the Tracy-Widom limit.
However, demonstrating this stronger assertion relies on the following conjecture.
\begin{conjecture}[Concentration of $\vc_1^\top \Dc \ \vc_1$]
\label{conjecture:concentration_of_v1Dv1}
    Suppose a network $A$ is generated from an Erd\H{o}s-R\'{e}nyi model $G(n, p)$.
    Denote $\vc_1$ as the eigenvector of $\Ac$ corresponding to the largest eigenvalue.
    Assume $\alpha = (n-1)p + 1 = \Omega (n^{2/3})$. 
    Then
    \begin{equation}
        |\vc_1^\top \Dc \ \vc_1| = o_{\Prob}(n^{\epsilon}), \label{eq:v1Dv1_is_op_n^eps}
    \end{equation}
    for some arbitrarily small $\epsilon > 0$.
\end{conjecture}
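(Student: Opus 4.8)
The plan is to rewrite the quadratic form as a degree‑weighted sum over nodes and then argue that the leading eigenvector of $\Ac$, being genuinely ``random,'' does not correlate with the degree fluctuations. Writing $v_i$ for the $i$‑th entry of $\vc_1$ and $\dc_i = \sum_{j}\Ac_{ij} = d_i - (n-1)p$ for the centered degree, I would start from
\[
    \vc_1^\top \Dc\, \vc_1 \;=\; \sum_{i=1}^n \dc_i\, v_i^2 \;=\; \big\langle \Ac\,\mathbf 1,\ \vc_1 \circ \vc_1\big\rangle \;=\; \sum_{i<j}(A_{ij}-p)\big(v_i^2+v_j^2\big),
\]
where $\vc_1 \circ \vc_1$ denotes the entrywise square, a probability vector. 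Since $\Ac = A - \E[A]$, the rescaled matrix $\alpha^{-1/2}\Ac$ is a generalized Wigner matrix \citep{erdos2011universality}, and $\alpha = \Omega(n^{2/3})$ lies far above the density threshold needed for the local semicircle law and edge eigenvector delocalization \citep{benaych2016lectures}; in particular $\|\vc_1\|_\infty = O_{\Prob}(n^{-1/2+o(1)})$, so $\sum_i v_i^4 \le \|\vc_1\|_\infty^2 = O_{\Prob}(n^{-1+o(1)})$.

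The heuristic behind the conjecture is that \emph{if} $\vc_1$ were independent of $\dc = (\dc_1,\dots,\dc_n)$, then, conditioning on $\vc_1$, the sum $\sum_i \dc_i v_i^2$ would have mean zero and conditional variance
\[
    (n-1)p(1-p)\sum_i v_i^4 \;+\; p(1-p)\Big(1-\sum_i v_i^4\Big) \;\le\; \alpha\sum_i v_i^4 + 1 \;=\; O_{\Prob}(n^{o(1)}),
\]
using $\mathrm{Cov}(\dc_i,\dc_j) = p(1-p)$ for $i\ne j$ and $\alpha \le n$, so Chebyshev's inequality would give $\vc_1^\top\Dc\,\vc_1 = O_{\Prob}(n^{o(1)}) = o_{\Prob}(n^\epsilon)$. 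It is worth stressing that the delocalization bound alone is not enough: over all unit vectors $u$ with $\|u\|_\infty = O(n^{-1/2+o(1)})$ one already has $\sup_u |u^\top \Dc\, u| \gtrsim \sqrt{\alpha}\ge n^{1/3}$ (take $u$ uniform on the highest‑degree half of the nodes), which is not $o(n^\epsilon)$ for small $\epsilon$; so a valid argument must exploit that $\vc_1$ is the \emph{extremal eigenvector}, not merely a delocalized vector.

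To decouple $\vc_1$ from $\dc$ I would work at the resolvent level. With $G(z) = (\Ac - z\In)^{-1}$, the quantity $v_i^2$ is the residue of $-G_{ii}(z)$ at $z = \lambda_1(\Ac)$, so $\vc_1^\top\Dc\,\vc_1 = -\mathrm{Res}_{z=\lambda_1}\,\mathrm{Tr}[\Dc\, G(z)]$ with $\mathrm{Tr}[\Dc\,G(z)] = \sum_{i,j}\Ac_{ij}G_{ii}(z)$. For the mean, a Schur‑complement expansion isolating row $i$ — which is independent of $\Ac^{(i)}$, the matrix $\Ac$ with row and column $i$ deleted, and hence of its resolvent and leading eigenvector — shows that the only contribution surviving the expectation over row $i$ comes from the third moment $\E[(A_{ij}-p)^3] = p(1-p)(1-2p)$ of a centered Bernoulli; since $\lambda_1 \approx 2\sqrt{\alpha}$ and $\alpha \approx np$, this gives $\E[\vc_1^\top\Dc\,\vc_1] = O(np/\lambda_1^2) = O(1)$. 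For the fluctuations I would bound $\E\,\big|\vc_1^\top\Dc\,\vc_1 - \E[\,\cdot\,]\big|^2$ by a second‑moment / fluctuation‑averaging estimate built on the isotropic local law for $\langle \mathbf 1, G(z)\mathbf 1\rangle/n$ and for $G_{ii}(z)$, tracked at the edge scale $\eta \sim n^{-2/3}\sqrt{\alpha}$.

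The hard part is this last step. Naive Cauchy--Schwarz or operator‑norm bounds on $\langle\Ac\,\mathbf 1,\vc_1\circ\vc_1\rangle$ lose a factor of order $\sqrt{n}$, so one must genuinely harvest the square‑root cancellation in a weakly dependent sum while controlling the leave‑one‑out errors uniformly in $i$ right at the spectral edge, the regime where local laws are least forgiving. I expect the conjecture to hold throughout $\alpha = \Omega(n^{2/3})$ — with the true order being $O_{\Prob}(\mathrm{polylog}\,n)$, indeed $O_{\Prob}(1)$ after centering — but a complete proof seems to require the full strength of edge isotropic/eigenvector local laws and fluctuation averaging, which is why we leave it as a conjecture.
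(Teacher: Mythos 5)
You were asked to prove what the paper states as Conjecture~\ref{conjecture:concentration_of_v1Dv1}, and the paper itself offers no proof: it supports the claim only with simulations (Figure~\ref{fig:growth_rate_of_v1Dv1}) and a pointer to Quantum Unique Ergodicity results, explicitly noting that those results control $v_i^\top B v_i$ for \emph{deterministic} $B$, whereas here the observable $\Dc$ is random and strongly dependent on $\Ac$. Your proposal ends in the same place, and you say so honestly. The algebraic reduction $\vc_1^\top \Dc\,\vc_1 = \langle \Ac \mathbf{1}, \vc_1 \circ \vc_1\rangle$, the delocalization bound $\sum_i v_i^4 = O_{\Prob}(n^{-1+o(1)})$, and the conditional-variance heuristic are all fine as far as they go, but the heuristic assumes precisely the independence that fails: the leading eigenvector of $\Ac$ is tilted toward high-degree vertices, and this tilt is not a technicality --- in the ultra-sparse regime it is exactly what drives the paper's Proposition~\ref{proposition:order_of_vDv_constant_degree}, where $\vc_1^\top D \vc_1 \asymp \log n/\log\log n$ because $\vc_1$ localizes on the highest-degree node. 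The step that would control this dependence in the regime $\alpha = \Omega(n^{2/3})$ --- your ``fluctuation averaging built on edge isotropic local laws,'' together with the leave-one-out estimate of the mean via third moments --- is asserted rather than executed, and that is the entire content of the conjecture. So there is a genuine gap, and it coincides with the gap the paper itself acknowledges: no decoupling of $\vc_1$ from $\Dc$ is actually proved.

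Two elements of your sketch are nonetheless valuable and go slightly beyond what the paper says. The observation that delocalization alone cannot suffice --- since $\sup_u |u^\top \Dc\, u|$ over unit vectors with $\|u\|_\infty = O(n^{-1/2+o(1)})$ is already of order $\sqrt{\alpha} \gtrsim n^{1/3}$ --- correctly pins down that any proof must be a QUE-type statement for the \emph{extremal} eigenvector against the \emph{random}, $\Ac$-correlated observable $\Dc$, which sharpens the paper's diagnosis of why existing eigenstate-thermalization results do not apply directly. And the residue/resolvent reformulation identifies the technology (isotropic local laws and fluctuation averaging at the edge scale $\eta \sim n^{-2/3}\sqrt{\alpha}$) that a genuine proof would likely require. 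But neither the third-moment computation of $\E[\vc_1^\top\Dc\,\vc_1]$ nor the second-moment bound on its fluctuations is established at that scale, so as written your text is a credible research plan, not a proof; the statement remains a conjecture after your argument exactly as it does in the paper.
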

The proof of this conjecture turns out to be highly nontrivial due to the challenge that $\Dc$ is dependent on $\Ac$. Recent developments in random matrix theory, particularly in the study of \textit{Quantum Unique Ergodicity} (QUE) for Wigner matrices \citep{cipolloni2021eigenstate, adhikari2023eigenstate}, show that $v_i^\top B v_i$ converges to $\frac{1}{n} \mathrm{tr}(B)$ with high probability for any \textit{deterministic} matrix $B$. 
However, extending these results to our specific scenario is highly technical and beyond the scope of this paper.

\begin{figure}[ht]
\centering
\begin{minipage}{0.49\textwidth}
\centering
\includegraphics[width=1.0\textwidth]{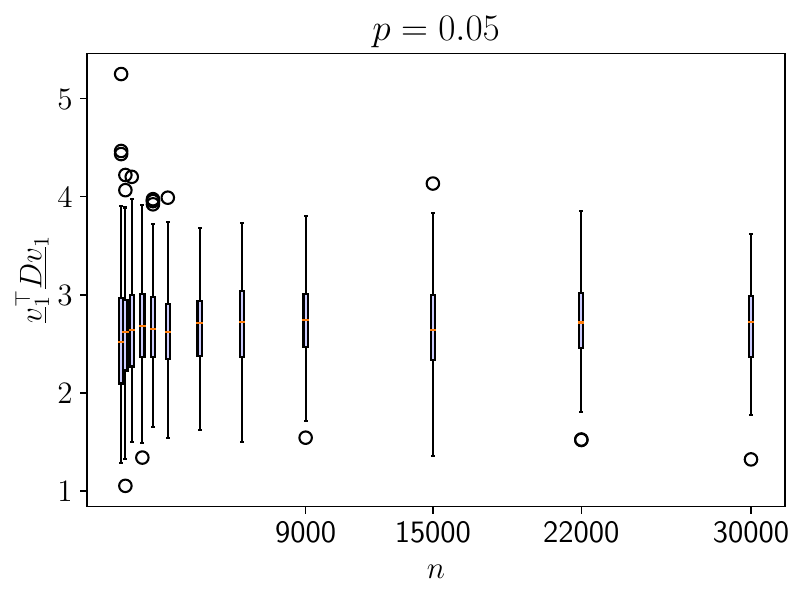}
\end{minipage}
\begin{minipage}{0.49\textwidth}
\centering
\includegraphics[width=1.0\textwidth]{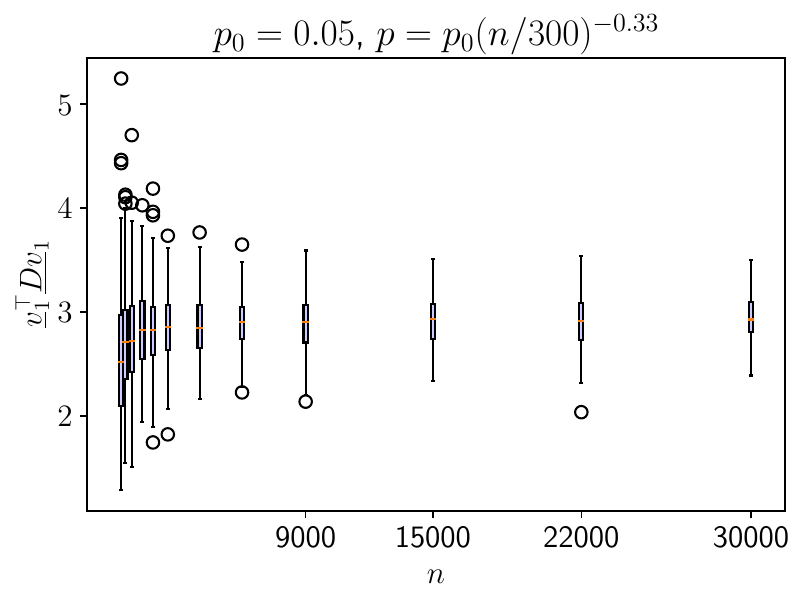}
\end{minipage}
\begin{minipage}{0.49\textwidth}
\centering
\includegraphics[width=1.0\textwidth]{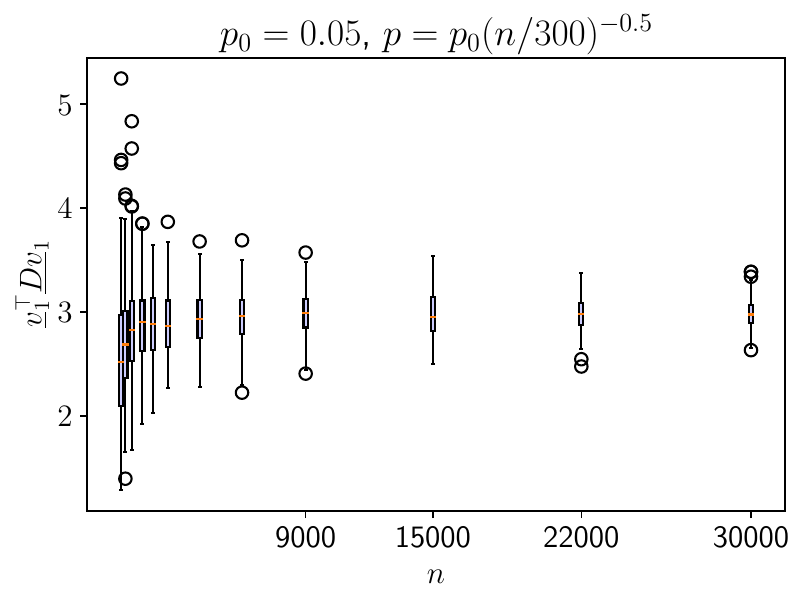}
\end{minipage}
\begin{minipage}{0.49\textwidth}
\centering
\includegraphics[width=1.0\textwidth]{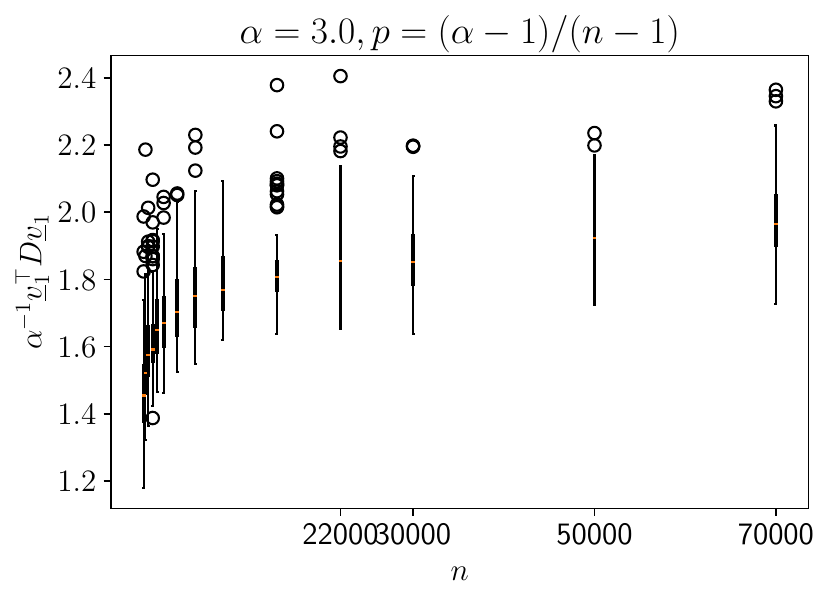}
\end{minipage}
\caption{Growth rate of $\vc_1^\top \Dc \ \vc_1$ under different $p(n)$. Here we take $p\asymp 1$, $p\asymp n^{-1/3}$, $p\asymp n^{-1/2}$ and $p\asymp n^{-1}$, respectively.
All of them appear to align with the bound \eqref{eq:v1Dv1_is_op_n^eps} in Conjecture~\ref{conjecture:concentration_of_v1Dv1}.}
\label{fig:growth_rate_of_v1Dv1}
\end{figure}

Nevertheless, numerical simulations in Figure~\ref{fig:growth_rate_of_v1Dv1} strongly support the claim of Conjecture~\ref{conjecture:concentration_of_v1Dv1}.
Moreover, the condition $\alpha=\Omega(n^{2/3})$ may be stronger than necessary.
Across different values of $\gamma$ for which $\alpha\asymp n^{\gamma}$,  the growth rate of $|\vc_1^\top \Dc \ \vc_1|$ consistently appears to be bounded by \eqref{eq:v1Dv1_is_op_n^eps},
even including sparser cases with $\alpha \prec n^{2/3}$.
In Section~\ref{section:constant_degree_regime_null_distribution}, we will show that when $\alpha = O(1)$, i.e., under the constant degree regime, we have $|\vc_1^\top \Dc \ \vc_1| \asymp |\vc_1^\top D \vc_1| \asymp \frac{\log n}{\log \log n} $.

The following proposition formally describes the convergence of $\mu_1(\Hc)$ to $\lambda_1(\Ac)$.
\begin{proposition}[Tracy-Widom limit of $\mu_1(\Hc)$]
\label{proposition:convergence_of_H_to_TW1}
    Suppose a network $A$ is generated from an Erd\H{o}s-R\'{e}nyi model $G(n, p)$.
    Assume $\alpha / \log n \to \infty$.
    We have
    \begin{equation}
         \alpha^{-1/2} |\mu_1(\Hc)- \lambda_1(\Ac)| = |\mu_1(\widetilde{H})- \mu_1(\widetilde{H}_0) | 
          = O_{\Prob}(\sqrt{\log n} \alpha^{-1/2})
          =o_{\Prob}(1),
        \label{eq:difference_between_mu_1_H_and_lambda_1_A_general}
    \end{equation}
    and
    \begin{equation}
         \widehat{\alpha}^{-1/2} |\mu_1(\Hce)- \lambda_1(\Ace)|
          = O_{\Prob}(\sqrt{\log n} \alpha^{-1/2})
          =o_{\Prob}(1).
          \label{eq:difference_between_mu_1_Hce_and_lambda_1_Ace_general}
    \end{equation}
    Moreover, if we assume $\alpha = \Omega (n^{2/3 + \epsilon})$ for some $\epsilon>0$, 
    Conjecture~\ref{conjecture:concentration_of_v1Dv1} implies that
    \begin{equation}
        \alpha^{-1/2} |\mu_1(\Hc)- \lambda_1(\Ac) | = o_{\Prob}(n^{-2/3}),
        \label{eq:difference_between_mu_1_H_and_lambda_1_A}
    \end{equation}
    and
    \begin{equation}
        \widehat{\alpha}^{-1/2} |\mu_1(\Hce)- \lambda_1(\Ace) | = o_{\Prob}(n^{-2/3}).
         \label{eq:difference_between_mu_1_Hce_and_lambda_1_Ace}
    \end{equation}
\end{proposition}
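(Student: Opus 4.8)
The plan is to recast both statements as a self-consistent eigenvalue equation for a symmetric matrix. By the same block-determinant identity that relates $B$ to $H$, for $\mu\neq 0$ the number $\mu$ is an eigenvalue of $\Hc$ iff $\det(\mu^{2}\In-\mu\Ac-(\In-\Dc))=0$, i.e.\ iff $\mu\in\mathrm{spec}(\Ac+\mu^{-1}(\In-\Dc))$; applying this after the rescaling-conjugation of the excerpt, $\mu$ is a nonzero eigenvalue of $\Hn$ iff $\mu\in\mathrm{spec}(\Theta(\mu))$, where $\Theta(\mu):=\alpha^{-1/2}\Ac+\mu^{-1}(\In-\alpha^{-1}D)$, whereas the nonzero eigenvalues of $\Hn_{0}$ are exactly those of $\alpha^{-1/2}\Ac$. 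Hence $\mu_{1}(\Hn_{0})=\alpha^{-1/2}\lambda_{1}(\Ac)=:\nu$ and $|\mu_{1}(\Hn)-\mu_{1}(\Hn_{0})|=\alpha^{-1/2}|\mu_{1}(\Hc)-\lambda_{1}(\Ac)|$, which is the first equality in \eqref{eq:difference_between_mu_1_H_and_lambda_1_A_general}. I would record at the outset the three inputs the argument rests on: $\max_{i}|\dc_{i}|=O_{\Prob}(\sqrt{\alpha\log n})$ (Bernstein plus a union bound over the $n$ nodes, using $\alpha/\log n\to\infty$); concentration of $\|\Ac\|$ in this regime, giving $\nu=2+o_{\Prob}(1)$; and therefore $\|\In-\alpha^{-1}D\|=\alpha^{-1}\max_{i}|1-\dc_{i}|=O_{\Prob}(\sqrt{\log n/\alpha})$, the operator norm of the perturbation appearing in $\Theta(\mu)$.

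For the weak bounds \eqref{eq:difference_between_mu_1_H_and_lambda_1_A_general} and \eqref{eq:difference_between_mu_1_Hce_and_lambda_1_Ace_general} I would sandwich $\mu_{1}(\Hn)$ around $\nu$. Since $\alpha^{-1/2}\Ac$ is normal, any eigenvalue $\mu$ of $\Hn$ with $|\mu|\geq\nu/2$ obeys $(\mu\In-\alpha^{-1/2}\Ac)v=\mu^{-1}(\In-\alpha^{-1}D)v$ for some unit $v$, so $\mathrm{dist}(\mu,\mathrm{spec}(\alpha^{-1/2}\Ac))\leq|\mu|^{-1}\|\In-\alpha^{-1}D\|=O_{\Prob}(\sqrt{\log n/\alpha})$; eigenvalues with $|\mu|<\nu/2$ have real part below $\nu/2$; hence $\mathrm{Re}\,\mu_{1}(\Hn)\leq\nu+O_{\Prob}(\sqrt{\log n/\alpha})$. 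For a matching lower bound, Weyl's inequality shows $\mu\mapsto\lambda_{1}(\Theta(\mu))$ maps a neighborhood $[\nu-\delta,\nu+\delta]$ (with $\delta\asymp\sqrt{\log n/\alpha}$) into itself, and the Hellmann--Feynman formula bounds its derivative by $\mu^{-2}\|\In-\alpha^{-1}D\|=o_{\Prob}(1)$, so it is a contraction with a unique fixed point $\mu^{\ast}$; $\mu^{\ast}$ is then a real eigenvalue of $\Hn$ with $|\mu^{\ast}-\nu|\leq\|\In-\alpha^{-1}D\|$, giving $\mu_{1}(\Hn)\geq\mu^{\ast}\geq\nu-O_{\Prob}(\sqrt{\log n/\alpha})$. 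Combining, $|\mu_{1}(\Hn)-\nu|=O_{\Prob}(\sqrt{\log n}\,\alpha^{-1/2})=o_{\Prob}(1)$, which is \eqref{eq:difference_between_mu_1_H_and_lambda_1_A_general}. The empirical \eqref{eq:difference_between_mu_1_Hce_and_lambda_1_Ace_general} follows by the identical argument with $\widehat\alpha,\Ace,\dce$ in place of $\alpha,\Ac,\dc$: from $|\widehat p-p|=O_{\Prob}(n^{-1}p^{1/2})$ one has $\widehat\alpha=(1+o_{\Prob}(1))\alpha$, $\|\Ace-\Ac\|=O_{\Prob}(\sqrt{\alpha/n})$ and $\max_{i}|\dce_{i}-\dc_{i}|=O_{\Prob}(\sqrt{\alpha/n})$, all negligible beside the $\sqrt{\alpha\log n}$ terms.

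For the sharp bounds \eqref{eq:difference_between_mu_1_H_and_lambda_1_A} and \eqref{eq:difference_between_mu_1_Hce_and_lambda_1_Ace} the weak estimate is too lossy (it far exceeds $n^{-2/3}$ as $\alpha$ approaches $n^{2/3}$), and I would argue that the deviation is governed by the single quadratic form $\vc_{1}^{\top}\Dc\,\vc_{1}$ rather than by $\|\In-\Dc\|$. Formally expanding the fixed-point relation $\mu^{\ast}=\lambda_{1}(\alpha^{-1/2}\Ac+c(\mu^{\ast})(\In-\Dc))$, $c(\mu)=\mu^{-1}\alpha^{-1}$, to second order about the \emph{simple} top eigenvalue $\nu$ of $\alpha^{-1/2}\Ac$ and using that $\vc_{1}$ is orthogonal to the other eigenvectors of $\Ac$ (so $\vc_{1}^{\top}R_{\perp}\vc_{1}=\vc_{1}^{\top}\Dc\,R_{\perp}\vc_{1}=0$, with $R_{\perp}$ the reduced resolvent of $\alpha^{-1/2}\Ac$ at $\nu$), one is led to
\begin{equation*}
\mu^{\ast}=\nu+c(\mu^{\ast})\bigl(1-\vc_{1}^{\top}\Dc\,\vc_{1}\bigr)+c(\mu^{\ast})^{2}\sqrt{\alpha}\sum_{j\geq 2}\frac{|\vc_{j}^{\top}\Dc\,\vc_{1}|^{2}}{\lambda_{1}(\Ac)-\lambda_{j}(\Ac)}+(\text{higher order}).
\end{equation*}
Conjecture~\ref{conjecture:concentration_of_v1Dv1} controls the first correction, $|c(\mu^{\ast})(1-\vc_{1}^{\top}\Dc\,\vc_{1})|=O_{\Prob}(n^{\epsilon_{0}}\alpha^{-1})=o_{\Prob}(n^{-2/3})$ whenever $\alpha=\Omega(n^{2/3+\epsilon})$ and $\epsilon_{0}<\epsilon$; the quadratic term is of the same size once one has, in addition, a quantum-unique-ergodicity--type bound on the off-diagonal forms $|\vc_{j}^{\top}\Dc\,\vc_{1}|^{2}=O_{\Prob}(n^{\epsilon_{0}}\alpha/n)$ and edge rigidity $\lambda_{1}(\Ac)-\lambda_{j}(\Ac)\gtrsim\sqrt{\alpha}\,(j/n)^{2/3}n^{-\epsilon_{0}}$ of the generalized Wigner matrix $\alpha^{-1/2}\Ac$. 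Equation \eqref{eq:difference_between_mu_1_Hce_and_lambda_1_Ace} would then follow by propagating the $O_{\Prob}(\sqrt{\alpha/n})$ perturbations, the rank-one $\bm{1}\bm{1}^{\top}$-term distinguishing $\Ace$ from $\Ac$ being delocalization-suppressed against the near-edge eigenvectors.

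The hard part is making this last step rigorous. Because $\alpha^{-1/2}\Ac$ has \emph{no macroscopic spectral gap} ($\lambda_{1}(\Ac)-\lambda_{2}(\Ac)\asymp\sqrt{\alpha}\,n^{-2/3}$), the formal series above does not converge term by term --- $\|\In-\alpha^{-1}D\|\cdot\|R_{\perp}\|$ is of order $\sqrt{\log n}\,n^{1/3-\epsilon/2}\gg1$ for small $\epsilon$ --- and the crude second-order bound $\|\In-\alpha^{-1}D\|^{2}\|R_{\perp}\|$ is only $O(n^{-\epsilon}\log n)$, which is \emph{not} $o(n^{-2/3})$. A correct argument must instead exploit that a unit vector carrying an atypically large $|v^{\top}\Dc v|$ is necessarily concentrated on extreme-degree nodes, where $|v^{\top}\Ac v|$ is small by delocalization of the top eigenvectors of $\Ac$, so such a vector cannot realize $\lambda_{1}(\Theta(\mu^{\ast}))$; quantifying this needs control of $\vc_{i}^{\top}\Dc\,\vc_{j}$ for \emph{all} $i,j$ labeling near-edge eigenvectors --- a strengthening of Conjecture~\ref{conjecture:concentration_of_v1Dv1} --- together with rigidity and delocalization, and since $\Dc$ is a function of $\Ac$ this carries exactly the difficulty flagged after Conjecture~\ref{conjecture:concentration_of_v1Dv1}. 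Accordingly, the sharp bounds can be established only conditionally on the conjecture.
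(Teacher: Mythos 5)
Your handling of the first equality and of the weak bounds \eqref{eq:difference_between_mu_1_H_and_lambda_1_A_general}--\eqref{eq:difference_between_mu_1_Hce_and_lambda_1_Ace_general} is correct but genuinely different from the paper. The paper writes $\widetilde{H}=\widetilde{H}_0+E$ and applies the classical first-order eigenvalue perturbation formula (Theorem 4.4 of Demmel) with the explicit left/right eigenvectors $y_1^\top=(\vc_1^\top,\bm{0})$ and $x_1=(\vc_1^\top,\lambdac_1^{-1}\vc_1^\top)^\top$, so that $y_1^\top x_1=1$, $y_1^\top E x_1=\alpha^{-1}\lambdac_1^{-1}(1-\vc_1^\top\Dc\,\vc_1)$, and then it bounds $|1-\alpha^{-1}\vc_1^\top D\vc_1|\le\|E\|=\max_i|\alpha^{-1}d_i-1|=O_{\Prob}(\sqrt{\log n}\,\alpha^{-1/2})$ together with $\lambdac_1=2+o_{\Prob}(1)$, absorbing everything else into an $O(\|E\|^2)$ remainder. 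Your route --- the quadratic/self-consistent eigenvalue identity for $\Hn$, a distance-to-spectrum bound for the normal matrix $\alpha^{-1/2}\Ac$ for the upper bound, and a fixed-point/contraction argument producing a real eigenvalue of $\Hn$ near $\nu$ for the lower bound --- reaches the same $O_{\Prob}(\sqrt{\log n}\,\alpha^{-1/2})$ rate while avoiding any unquantified remainder, and is arguably more self-contained than the paper's; your replacement estimates for the empirical versions match what the paper does in one line.

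On the sharp bounds \eqref{eq:difference_between_mu_1_H_and_lambda_1_A}--\eqref{eq:difference_between_mu_1_Hce_and_lambda_1_Ace}, however, you do not deliver the implication the proposition asserts. The paper's proof is short: it keeps the same perturbation identity, observes that $\|E\|^2=O_{\Prob}(\log n/\alpha)=o_{\Prob}(n^{-2/3})$ once $\alpha=\Omega(n^{2/3+\epsilon})$, and bounds the first-order term $\alpha^{-1}\lambdac_1^{-1}|1-\vc_1^\top\Dc\,\vc_1|=o_{\Prob}(n^{-2/3})$ directly from Conjecture~\ref{conjecture:concentration_of_v1Dv1}; no second-order expansion, reduced resolvent, off-diagonal quadratic forms, rigidity, or delocalization enters. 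Your objection --- that the constant hidden in the $O(\|E\|^2)$ remainder involves the reduced resolvent at a gapless Tracy--Widom-scale edge, so the crude bound is of order $n^{-\epsilon}\log n$ and not $o(n^{-2/3})$ for small $\epsilon$ --- is a legitimate criticism of the paper's own argument, which simply takes the remainder at face value. But as a proof of the statement as written (Conjecture~\ref{conjecture:concentration_of_v1Dv1} implies \eqref{eq:difference_between_mu_1_H_and_lambda_1_A} and \eqref{eq:difference_between_mu_1_Hce_and_lambda_1_Ace}) your proposal stops short: you conclude that strengthened inputs (control of $\vc_i^\top\Dc\,\vc_j$ for all near-edge indices, rigidity, delocalization) are required and leave the implication unestablished. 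So either adopt the paper's elementary remainder estimate as it stands, or supply the additional estimates you list; as written, the ``Moreover'' part is not proved.
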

As a result, when $\alpha /\log n \to \infty$, \eqref{eq:difference_between_mu_1_H_and_lambda_1_A_general} implies
\begin{equation*}
    \frac{\mu_1(\Hc)} {\sqrt{(n-1) p (1 - p)}} 
    \xrightarrow{\Prob} \frac{\lambda_1(\Ac)} {\sqrt{(n-1) p (1 - p)}}
    = 2+o_{\Prob}(1).
\end{equation*}
Moreover, when $\alpha = \Omega (n^{2/3 + \epsilon})$, \eqref{eq:difference_between_mu_1_H_and_lambda_1_A} implies that the convergence of $\mu_1(\Hc)$ to $\lambda_1(\Ac)$ is quick enough so that it also has the Tracy-Widom limit.
\begin{equation*}
     n^{2/3} \left(\frac{\mu_1(\Hc)} {\sqrt{(n-1) p (1 - p)}}-2\right) \xrightarrow{\Prob} n^{2/3} \left(\frac{\lambda_1(\Ac)} {\sqrt{(n-1) p (1 - p)}}-2 \right) \xrightarrow{d} TW_1.
\end{equation*}
Similarly, \eqref{eq:difference_between_mu_1_Hce_and_lambda_1_Ace_general} and \eqref{eq:difference_between_mu_1_Hce_and_lambda_1_Ace} imply that the same convergence holds for $\mu_1(\Hce)$ and $\lambda_1(\Ace)$.

The proof of Proposition~\ref{proposition:convergence_of_H_to_TW1} is given in Appendix~\ref{section:proof_of_proposition_tracy_widom_limit} via classical perturbation theory.
The main idea is to consider an approximation for $\mu_1(\widetilde{H})$, which can be decomposed into the sum of $\mu_1(\widetilde{H}_0)$ and a correction term. 
Below, we offer some insights into this approximation.

\begin{figure}
\centering
\begin{minipage}{0.49\textwidth}
\centering
\includegraphics[width=1.0\textwidth]{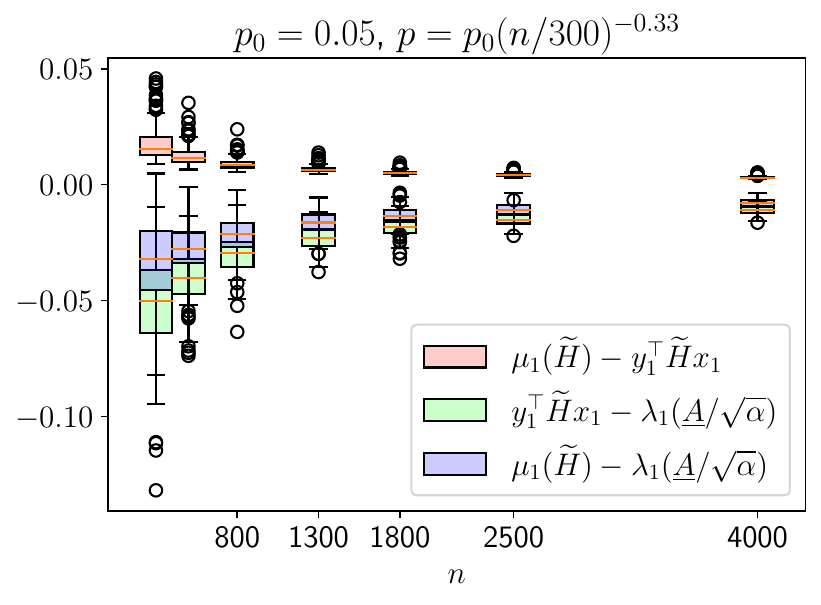}
\end{minipage}
\begin{minipage}{0.49\textwidth}
\centering
\includegraphics[width=1.0\textwidth]{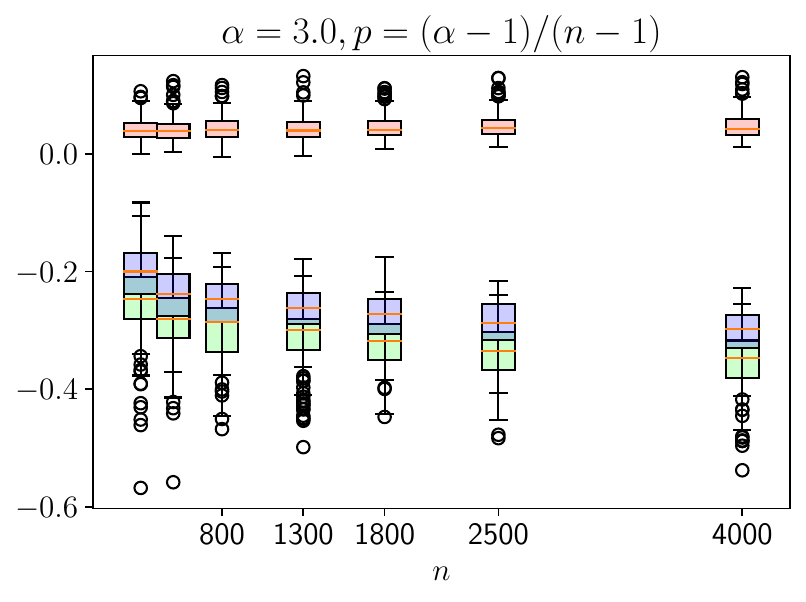}
\end{minipage}
\caption{Asymptotic order of the pairwise differences between $\mu_1(\widetilde{H})$, $y_1^{\top} \widetilde{H} x_1$, and $\lambda_1(\frac{\Ac}{\sqrt{\alpha}})$. 
We take $p\asymp n^{-1/3}$ in the left panel representing the denser regime, and $p\asymp n^{-1}$ in the right panel representing the ultra-sparse regime.
In both regimes, $y_1^{\top} \widetilde{H} x_1$ closely approximates $\mu_1(\widetilde{H})$.
The main contributor to the difference between $\mu_1(\widetilde{H})$ and $\lambda_1(\frac{\Ac}{\sqrt{\alpha}})$ comes from $y_1^{\top} \widetilde{H} x_1 - \lambda_1(\frac{\Ac}{\sqrt{\alpha}})$.
}
\label{fig:order_of_difference_among_approximations}
\end{figure}

It can be verified 
that $\mu_1(\widetilde{H}_0)= \lambda_1(\frac{\Ac}{\sqrt{\alpha}}) = y_1^\top \widetilde{H}_0  x_1$, with left eigenvector $y_1$ and right eigenvector $x_1$ of $\widetilde{H}_0$ given by
\begin{equation*}
    y_1^\top = (\vc_1^\top,\, \bm{0}), \quad x_1 = \left( \begin{array}{c}
        \vc_1  \\
        \lambdac_1^{-1} \vc_1
    \end{array} \right), 
\end{equation*}
where $\lambdac_1$ denotes $\lambda_1(\frac{\Ac}{\sqrt{\alpha}})$, and $\vc_1$ is the corresponding eigenvector.
If we use $y_1^\top \widetilde{H}  x_1$ to approximate $\mu_1(\widetilde{H})$ (this approximation is empirically supported in Figure~\ref{fig:order_of_difference_among_approximations}, and rigorously justified by Theorem~\ref{theorem:eigenvalue_perturbation} in Appendix~\ref{section:proof_of_proposition_tracy_widom_limit} when $\|E\| = o_{\Prob}(1)$), then
\begin{eqnarray}
    y_1^\top \widetilde{H}  x_1 & = & (\vc_1^\top,\, \bm{0}) \left(\begin{array}{cc} \frac{1}{\sqrt{\alpha}}\Ac & \In - \alpha^{-1} D \\ \In & \bm{0}\end{array}\right) \left( \begin{array}{c}
        \vc_1  \\
        \lambdac_1^{-1} \vc_1
    \end{array} \right) \nonumber\\
    & = & \lambdac_1 + \lambdac_1^{-1}(1 - \alpha^{-1} \vc_1^\top D \vc_1) \label{eq:y1_H_x1_form_1}\\
    & = & \lambdac_1 + \alpha^{-1}\lambdac_1^{-1}(1 - \vc_1^\top \Dc \ \vc_1). \label{eq:y1_H_x1_form_2}
\end{eqnarray}
It is obvious that this approximation converges to $\lambdac_1$ if $\alpha$ grows fast enough.
In particular, when $\frac{\alpha}{\log n} \to \infty$, 
the largest eigenvalue of $\Ac$ is concentrated \citep{benaych2020spectral}, 
$$\lambda_1\left(\frac{\Ac}{\sqrt{\alpha}}\right)  = 2 +  o_{\Prob}(1).$$
Also, since the graph is almost regular \citep{le2017concentration}, by Lemma 3.5 in \citet{wang2023limiting}, we have
$$|\alpha^{-1} \vc_1^\top \Dc \ \vc_1| \le \alpha^{-1}\max_i|d_i - (n-1)p| = O_{\Prob}(\alpha^{-1/2} \sqrt{\log n}) = o_{\Prob}(1),$$
which implies that the second term in \eqref{eq:y1_H_x1_form_1} goes to zero in probability.
Namely, $\mu_1(\widetilde{H}) - \lambda_1(\frac{\Ac}{\sqrt{\alpha}}) \approx  y_1^\top \widetilde{H}  x_1 - \lambda_1(\frac{\Ac}{\sqrt{\alpha}}) = o_{\Prob}(1)$. 
Nevertheless, it does require an upper bound for $\vc_1^\top \Dc \ \vc_1$ as tight as in Conjecture~\ref{conjecture:concentration_of_v1Dv1} in order to show that $\mu_1(\widetilde{H})$ converges to $\lambda_1(\frac{\Ac}{\sqrt{\alpha}})$ fast enough for it to have a Tracy-Widom limit.

\subsubsection{Ultra-sparse regime with constant degree}
\label{section:constant_degree_regime_null_distribution}
Next, we turn our focus to the ultra-sparse regime, with constant order average degree.
In this regime, $\|E\|$ does not vanish, thereby invalidating previous theoretical justification for $\mu_1(\widetilde{H}) \approx y_1^{\top} \widetilde{H} x_1$.
Nevertheless, simulation suggests that $y_1^{\top} \widetilde{H} x_1$ continues to closely approximate $\mu_1(\widetilde{H})$, as depicted in the right panel of Figure~\ref{fig:order_of_difference_among_approximations}.
The approximation error $\mu_1(\widetilde{H}) - y_1^{\top} \widetilde{H} x_1$ appears to be negligible, contributing minimally to the difference between $\mu_1(\widetilde{H})$ and $\lambda_1(\frac{\Ac}{\sqrt{\alpha}})$. The primary focus in this section will be on the major contributor to this difference, namely $y_1^{\top}\widetilde{H} x_1 - \lambda_1(\frac{\Ac}{\sqrt{\alpha}})$.

For the approximation $y_1^\top \widetilde{H}  x_1$, the second term in \eqref{eq:y1_H_x1_form_1} or \eqref{eq:y1_H_x1_form_2} no longer vanishes when the graph is ultra-sparse, under the constant degree regime where $np \to d >1$.
The semicircle law no longer holds, and $\lambda_1(\Ac)$ is no longer concentrated.
Instead, it is dominated by the largest degree in the graph \citep{krivelevich2003largest, benaych2019largest}:
$$\lambda_1(\Ac) \asymp \sqrt{d_{\max}} \asymp \sqrt{\frac{\log n}{\log \log n}}.$$
Moreover, its corresponding eigenvector is localized around the highest degree node \citep{benaych2019largest, hiesmayr2023spectral}.
We have the upper bound $ \vc_1^\top D \vc_1 \le d_{\max} = O\left( \frac{\log n}{\log \log n}\right)$. 
Together with the lower bound in the next proposition,
we have $\vc_1^\top D \vc_1 = \Theta \left(\frac{\log n}{\log \log n} \right)\gg 1$.
This very slow but diverging growth rate aligns with our observation in the bottom right panel in Figure~\ref{fig:growth_rate_of_v1Dv1}. 
\begin{proposition}[Partial cancellation]
\label{proposition:order_of_vDv_constant_degree}
    Suppose network $A$ is generated from an Erd\H{o}s-R\'{e}nyi model $G(n, \frac{d}{n})$, namely with expected degree held constant when $n\to \infty$.
    Then
    \begin{equation*}
        \vc_1^\top D \vc_1 \ge  \frac{\log n}{ 2\log \log n}(1 + o_{\Prob}(1)),
    \end{equation*}
    where $\vc_1$ is the eigenvector of $\Ac$ corresponding to its largest eigenvalue.
\end{proposition}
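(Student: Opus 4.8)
The plan is to read off $\vc_1^\top D \vc_1$ from the known localization of the leading eigenvector of a sparse Erd\H{o}s--R\'enyi adjacency matrix on the vertices of near-maximal degree. Write $\Delta := d_{\max}$; classically $\Delta = (1+o_{\Prob}(1))\tfrac{\log n}{\log\log n}$, and for a fixed small $\delta>0$, setting $\mathcal H_\delta := \{v : d_v \geq (1-\delta)\Delta\}$, whp every $v\in\mathcal H_\delta$ has a star-like one-step neighbourhood (a tree with side-degrees $o(\sqrt\Delta)$) and the neighbourhoods $B_1(v)$, $v\in\mathcal H_\delta$, are pairwise disjoint. I would first record that $\lambda_1(\Ac) = (1+o_{\Prob}(1))\sqrt{\Delta}$: the lower bound by inserting the normalized ``star'' vector centred at a maximal-degree vertex into the Rayleigh quotient of $\Ac$ (centering only subtracts $p$ from the off-diagonal entries, changing the quotient by $O((n-1)p) = O(1)$), and the upper bound from $\|\Ac\| \le \|A\| + (n-1)p = (1+o_{\Prob}(1))\sqrt{\Delta}$ via~\eqref{eq:adjacency_not_concentrate}. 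Since $D\succeq\bm 0$ with $D_{vv}\ge(1-\delta)\Delta$ on $\mathcal H_\delta$, the proposition reduces to showing, for each fixed $\delta>0$,
\begin{equation*}
    \sum_{v\in\mathcal H_\delta} (\vc_1)_v^2 \;\geq\; \tfrac12 - o_{\Prob}(1).
\end{equation*}

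Next I would pass from $\Ac$ to $A$. Rearranging $\Ac\vc_1 = \lambda_1(\Ac)\vc_1$ gives $A\vc_1 = \lambda_1(\Ac)\vc_1 + r$ with $r = p(\bm J - \In)\vc_1$ (here $\bm J$ is the all-ones matrix), so $\|r\| \le (n-1)p = O(1)$: thus $\vc_1$ is an approximate eigenvector of $A$ with eigenvalue $\lambda_1(\Ac) = (1+o_{\Prob}(1))\sqrt{\Delta}$ and residual $O(1)$. Expanding $\vc_1$ in the eigenbasis $\{\psi_i\}$ of $A$, the identity $\sum_i (\mu_i(A)-\lambda_1(\Ac))^2\langle\vc_1,\psi_i\rangle^2 = \|r\|^2 = O(1)$, together with the fact that this eigenvalue sits at scale $\sqrt{\Delta}\to\infty$, far above the would-be bulk edge $2\sqrt{d} = O(1)$, shows that $\vc_1$ places $1-o_{\Prob}(1)$ of its $\ell_2$ mass on eigenvectors of $A$ with eigenvalue $(1+o(1))\sqrt{\Delta}$. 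By the sparse Erd\H{o}s--R\'enyi localization theory (e.g.~\citet{benaych2019largest, hiesmayr2023spectral}), every such eigenvector is, up to $o(1)$ in $\ell_2$, supported on $B_1(v)$ for a vertex $v$ with $d_v = (1+o(1))\Delta$ and has the star profile there: the leaf equation forces $x_j = (1+o(1))x_v/\sqrt{\Delta}$ for $j\sim v$, so $1 + o(1) = \|x\|^2 = (1+o(1))x_v^2(1 + d_v/\Delta) = (2+o(1))x_v^2$, i.e.\ $x_v^2 = \tfrac12 + o(1)$.

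Finally I would assemble these pieces. Write $\vc_1 = \sum_l c_l\psi_l + o_{\Prob}(1)$ with $\sum_l c_l^2 = 1 - o_{\Prob}(1)$, each $\psi_l$ one of the localized $A$-eigenvectors above, centred at a distinct vertex $v_l\in\mathcal H_\delta$. Pairwise disjointness of the $B_1(v_l)$ gives $(\psi_{l'})_{v_l} = o(1)$ for $l'\ne l$, hence $\sum_l(\vc_1)_{v_l}^2 = \sum_l c_l^2 (\psi_l)_{v_l}^2 + o_{\Prob}(1) = \tfrac12\sum_l c_l^2 + o_{\Prob}(1) = \tfrac12 + o_{\Prob}(1)$, and therefore $\sum_{v\in\mathcal H_\delta}(\vc_1)_v^2 \ge \tfrac12 - o_{\Prob}(1)$. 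Plugging back, $\vc_1^\top D\vc_1 \ge (1-\delta)\Delta(\tfrac12 - o_{\Prob}(1))$; letting $\delta\downarrow 0$ and substituting $\Delta = (1+o_{\Prob}(1))\tfrac{\log n}{\log\log n}$ gives the claimed bound.

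The hard part is the localization step: there is no spectral gap at the edge (the eigenvalues of $A$ attached to high-degree vertices essentially fill $[2\sqrt{d}, \sqrt{\Delta}]$), so Davis--Kahan is unavailable and one must combine the ``small residual $\Rightarrow$ mass on near-edge eigenvectors'' estimate with a citation of the sparse random-matrix localization theory to identify and describe those eigenvectors. Extra care is also needed because the maximal degree can be attained, or nearly attained, by several vertices, so $\vc_1$ is genuinely a mixture of several $\psi_l$ and the sum over $\mathcal H_\delta$ --- rather than a single vertex --- must be retained throughout; the remaining ingredients, the local tree computation and the bookkeeping, are routine.
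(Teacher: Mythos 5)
Your overall strategy differs from the paper's: the paper first applies the localization result (Theorem 1.4 of \citet{hiesmayr2023spectral}) to the top eigenvector $v_1$ of the \emph{uncentered} adjacency $A$, getting $v_1^\top D v_1 \ge \frac12 d_{\max}(1+o_{\Prob}(1))$ directly, and then shows $\langle v_1,\vc_1\rangle = 1-o_{\Prob}(1)$ by an eigenvector perturbation expansion for the rank-one centering $\frac dn\bm{1}\bm{1}^\top$, using two quantitative inputs: an $\ell_1$-bound $\|v_1\|_1 = O_{\Prob}(\sqrt n(\log n)^{-1.9})$ (so the eigenvalue shift $|\lambda_1(A)-\lambda_1(\Ac)|\le \frac dn\|v_1\|_1^2$ is polylog-small) and the inverse-polylog spacing of the top eigenvalues of $A$ (Lemma 7.5 of \citet{hiesmayr2023spectral}). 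Incidentally, your closing remark that no usable gap exists is not quite right: the spacing between the localized top eigenvalues, though only $\Omega((\log n)^{-1.5})$, is exactly what the paper exploits, because the effective perturbation is even smaller.

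The genuine gap in your route is the final assembly step. From $\|r\|=O(1)$ and $\lambda_1(\Ac)\asymp\sqrt{\Delta}$, the residual identity only forces $1-o_{\Prob}(1)$ of the mass of $\vc_1$ into a spectral window of width $w_n$ around $\lambda_1(\Ac)$ with $w_n\to\infty$ (you need $w_n\to\infty$ to make $\|r\|^2/w_n^2=o(1)$). That window corresponds to vertices of degree at least roughly $\Delta - 2w_n\sqrt{\Delta}$, and the number of such vertices --- hence of localized states $\psi_{l}$ in the window --- is of order $(\log n)^{c\sqrt{\Delta}}$, i.e.\ superpolylogarithmic in $n$. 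The localization theory gives you, per eigenvector, an $\ell_2$ tail bound $\eta_n$ that is only polylogarithmically small, so the off-diagonal terms you discard in
\begin{equation*}
    \sum_{l}(\vc_1)_{v_l}^2 \;=\; \sum_l \Bigl(c_l(\psi_l)_{v_l} + \sum_{l'\ne l}c_{l'}(\psi_{l'})_{v_l} + (\mathrm{remainder})_{v_l}\Bigr)^2
\end{equation*}
are not controlled: each $(\psi_{l'})_{v_l}$ is only $O(\eta_n)$, and neither the naive bound $\sqrt{|L|}\,\eta_n$ nor Cauchy--Schwarz with $\sum_{l'}(\psi_{l'})_{v_l}^2\le 1$ yields $o_{\Prob}(1)$ once $|L|$ is superpolylogarithmic; since you need a \emph{lower} bound on a sum of squares, these cross terms could in principle cancel the diagonal contribution $\sum_l c_l^2(\psi_l)_{v_l}^2$. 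So the step ``$\sum_l(\vc_1)_{v_l}^2=\frac12\sum_l c_l^2+o_{\Prob}(1)$'' is unjustified as written. To repair it you would need either summable (e.g.\ exponential-in-distance, uniform over the window) decay bounds for all near-edge eigenvectors, or an argument that $\vc_1$ in fact overlaps with only $O(1)$ states at the very top --- which is essentially what the paper's spacing-plus-$\ell_1$-norm argument provides. The remaining ingredients of your proposal (the bound $\|r\|\le (n-1)p$, $\lambda_1(\Ac)=(1+o_{\Prob}(1))\sqrt{\Delta}$, the star profile giving mass $\tfrac12+o(1)$ at a center of diverging degree, and disjointness of the balls $B_1(v)$ for $v\in\mathcal H_\delta$ with $\delta$ small) are fine.
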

As a result, the correction term is negative when $n\to \infty$, with a diverging magnitude: 
\begin{equation*}
    \left\vert\lambdac_1^{-1}(1- \alpha^{-1} \vc_1 ^\top D \vc_1) \right \vert \ge \frac{1}{2\sqrt{\alpha}}\sqrt{\frac{\log n}{\log \log n}}(1+o_{\Prob}(1)),
\end{equation*}
matching the divergence rate of $\lambda_1 (\frac{\Ac}{\sqrt{\alpha}}) \asymp \frac{1}{\sqrt{\alpha}} \sqrt{\frac{\log n}{\log \log n}}$.
This is illustrated by the green boxes on the right panel of Figure~\ref{fig:order_of_difference_among_approximations}.
Thus, the correction term effectively reduces about half of the noise induced by high degree nodes.

The proof of Proposition \ref{proposition:order_of_vDv_constant_degree} is given in Appendix~\ref{section:proof_of_proposition_order_of_vDv_constant_degree}, which mainly uses the fact that $\vc_1$ is highly localized around some node with highest degree \citep{hiesmayr2023spectral}.

\begin{figure}
\centering
\begin{minipage}{0.49\textwidth}
\centering
\includegraphics[width=1.0\textwidth]{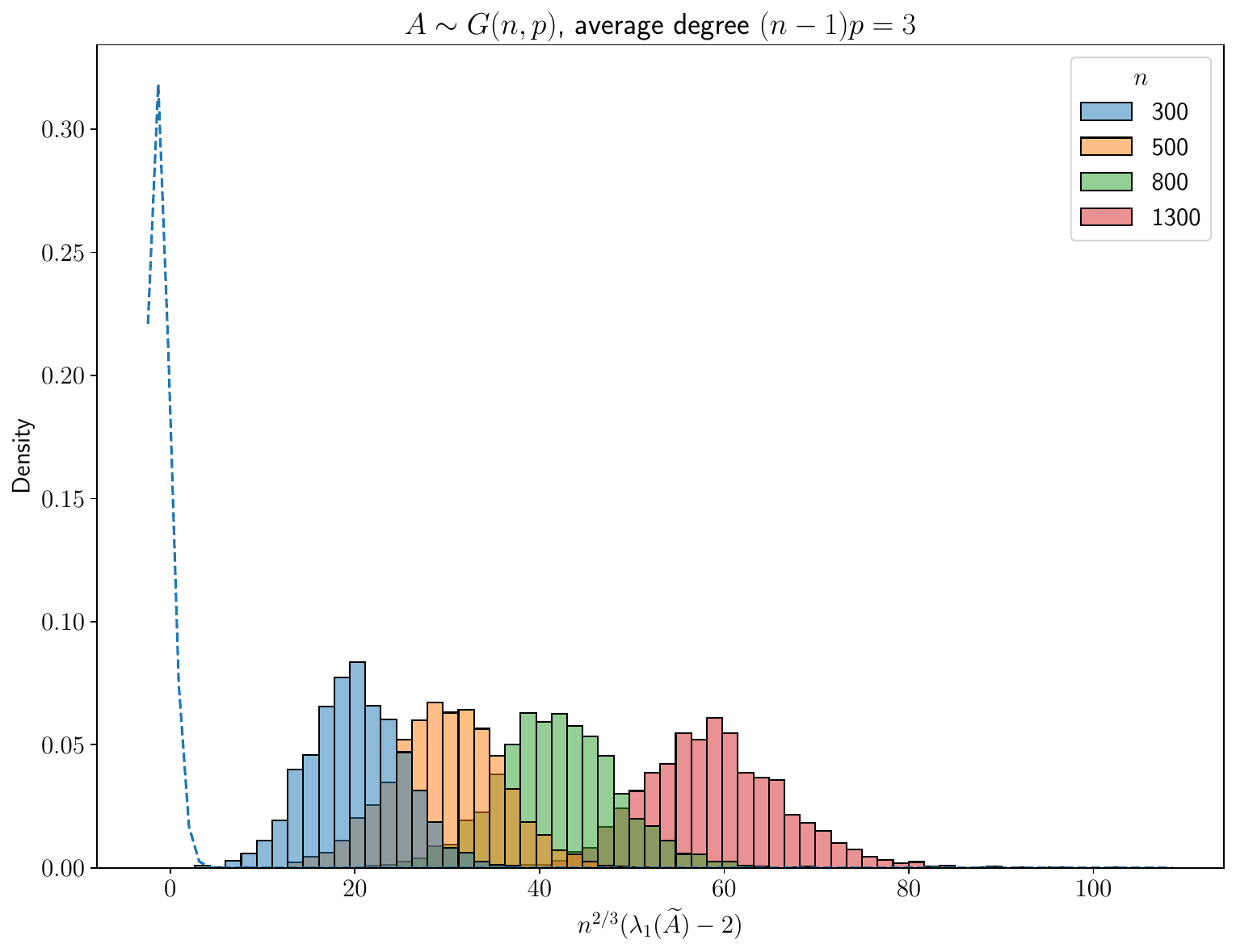}
\end{minipage}
\begin{minipage}{0.49\textwidth}
\centering
\includegraphics[width=1.0\textwidth]{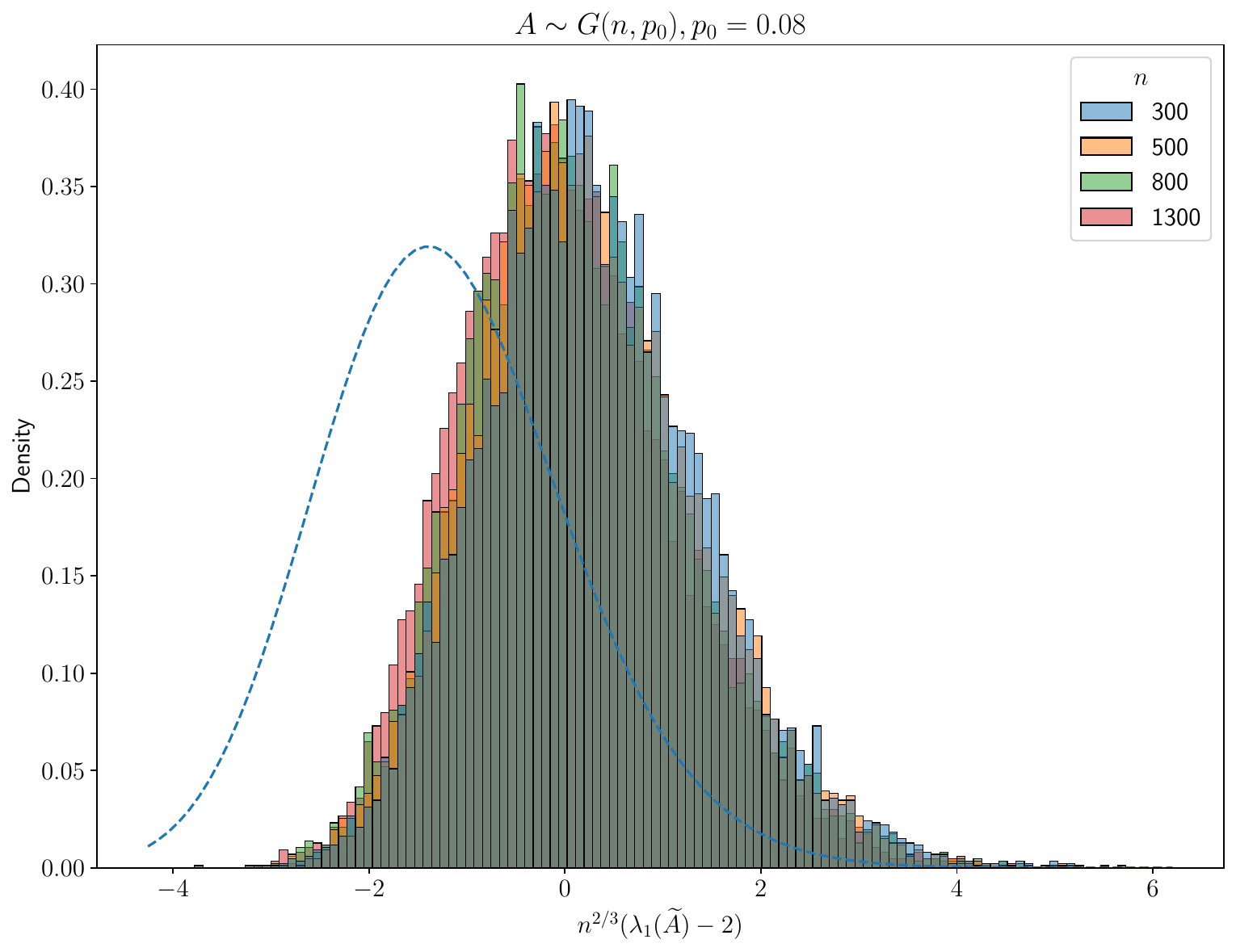}
\end{minipage}
\begin{minipage}{0.49\textwidth}
\centering
\includegraphics[width=1.0\textwidth]{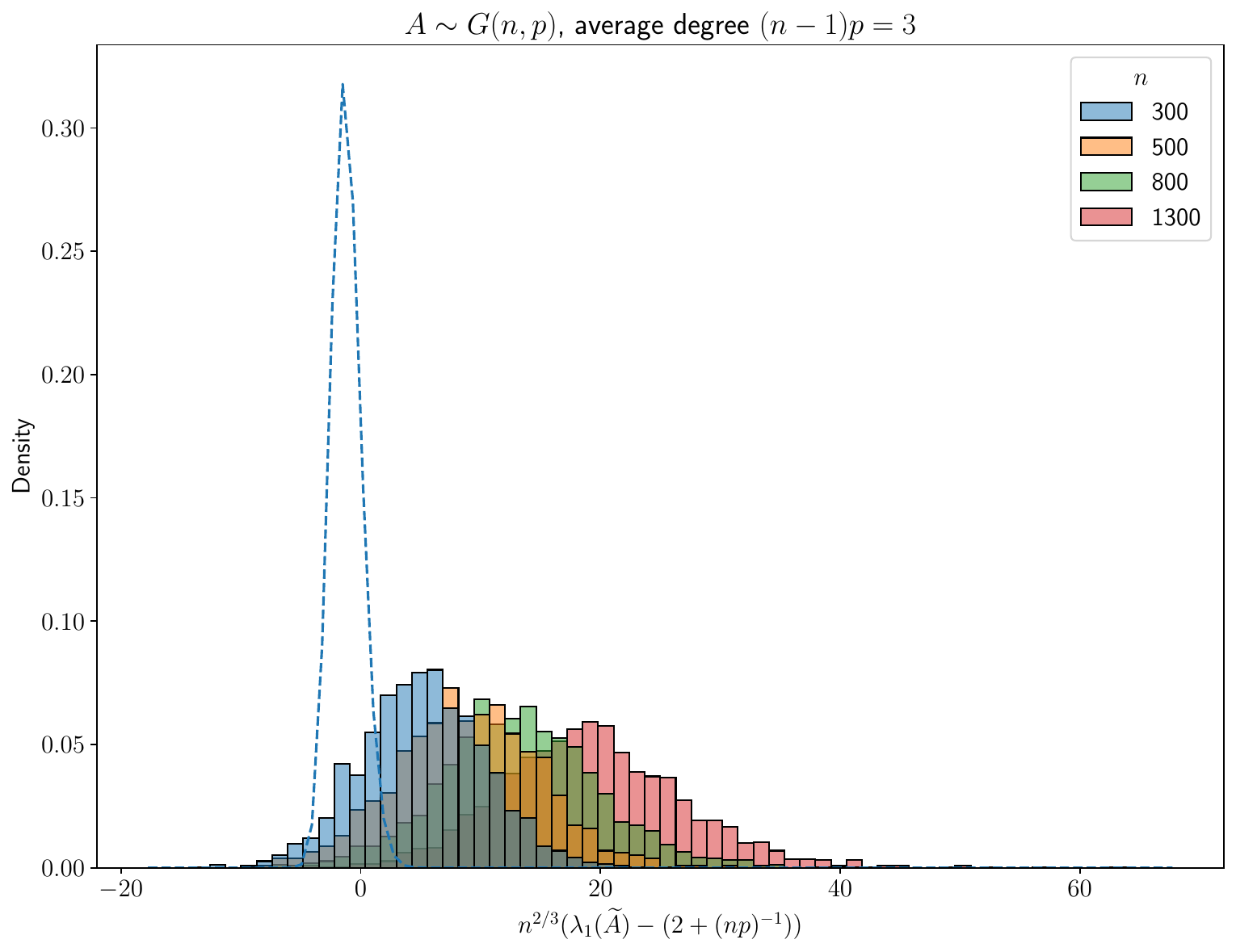}
\end{minipage}
\begin{minipage}{0.49\textwidth}
\centering
\includegraphics[width=1.0\textwidth]{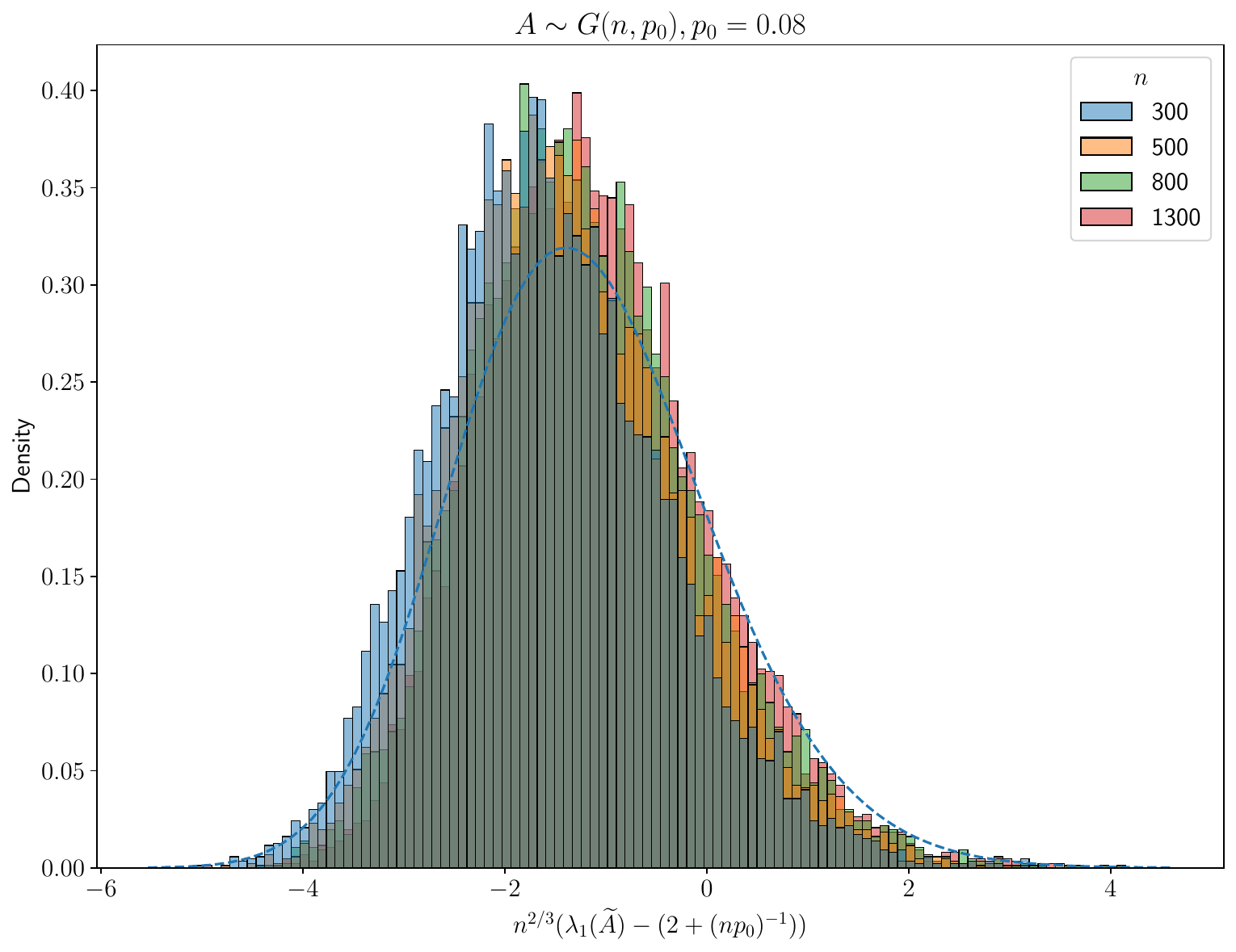}
\end{minipage}
\begin{minipage}{0.49\textwidth}
\centering
\includegraphics[width=1.0\textwidth]{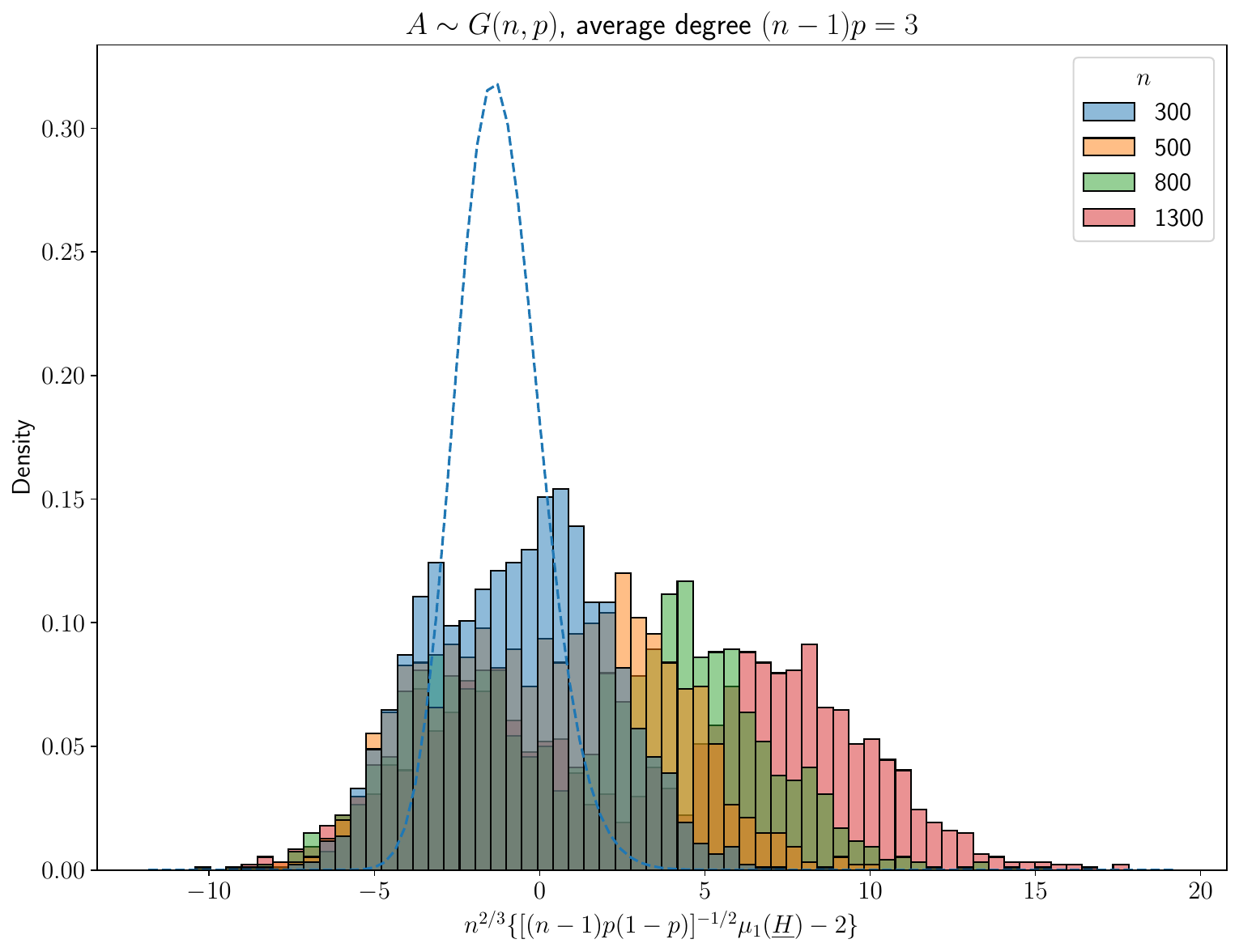}
\end{minipage}
\begin{minipage}{0.49\textwidth}
\centering
\includegraphics[width=1.0\textwidth]{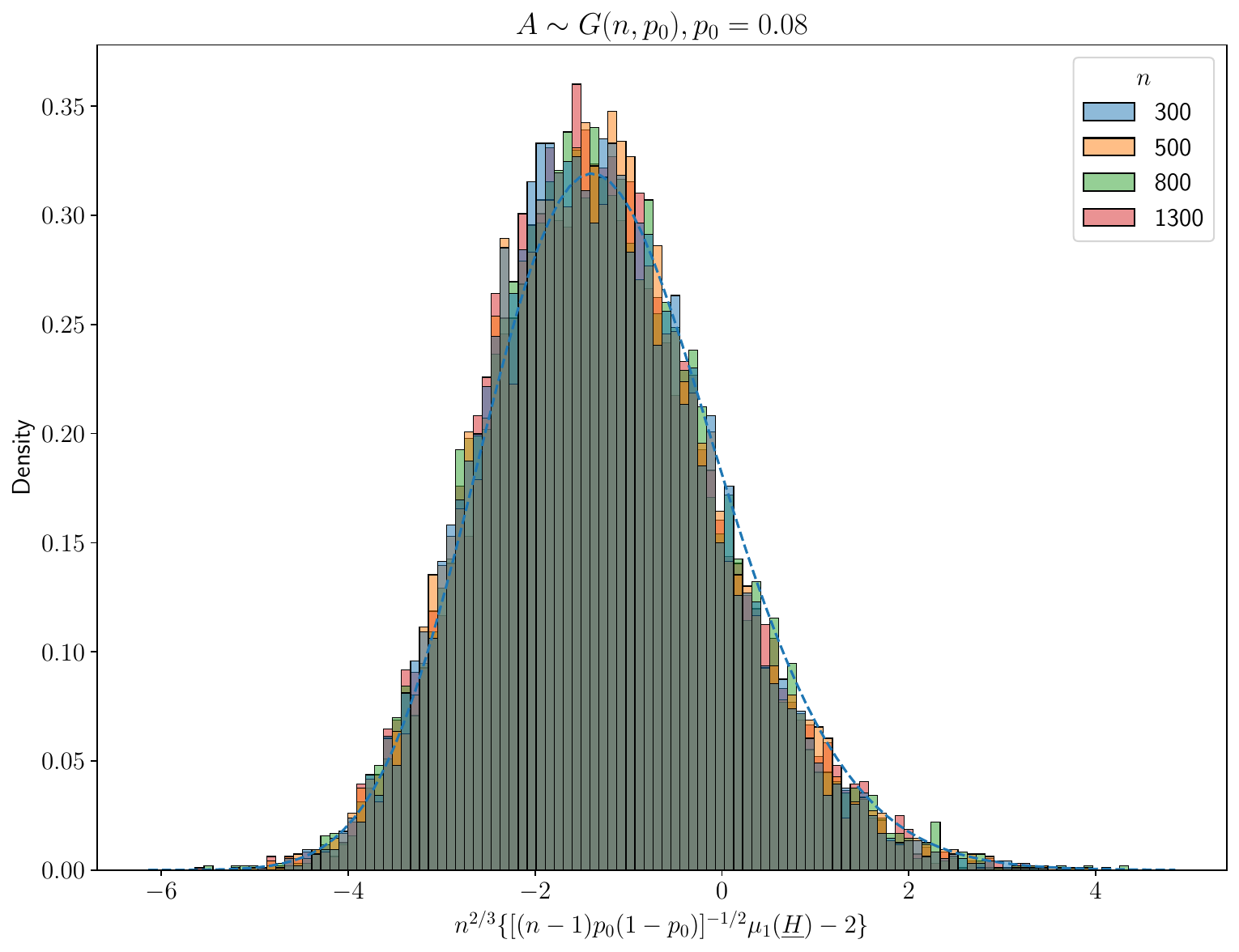}
\end{minipage}
\caption{Null distributions of $\lambda_1(\An)$ and $\mu_1(\Hc)$ with $n^{2/3}$ scaling.
We fix the average degree at 3 in the left panel, representing the ultra-sparse regime, and fix $p_0 = 0.08$ in the right panel.
We let $n$ take various values, indicated by different colors.
The blue dashed line represents the Tracy-Widom distribution with index 1.}
\label{fig:null_distributions_with_scaling}
\end{figure}

\subsubsection{Empirical evidence for the null distribution}
In Figure~\ref{fig:null_distributions_with_scaling}, we present empirical evidence that supports the theoretical results discussed in previous sections. Under the null distribution, we generate $A$ from $G(n, p)$. We simulate the distribution of $\lambda_1(\An) = [(n-1)p(1-p)]^{-1/2} \lambda_1(\Ac)$ and $[(n-1)p(1-p)]^{-1/2} \mu_1(\Hc)$ as $n$ grows, considering two sparsity levels. For the sparse regime, we maintain the average degree fixed at $(n-1)p=3$, while for the dense regime, we keep the density fixed at $p=p_0=0.08$. We vary $n$ across different values in each scenario.

The left panel of Figure~\ref{fig:null_distributions_with_scaling} demonstrates the ultra-sparse regime.
When scaled by $n^{2/3}$, $\lambda_1(\An)$ clearly deviates from the Tracy-Widom limit, with a diverging positive bias. 
This deviation diminishes somewhat after including the deterministic shift $(np)^{-1}$~\eqref{eq:Tracy-Widom_limit_with_deterministic_shift}.  
On the other hand, the distribution of $\mu_1(\Hc)$, after $n^{2/3}$ scaling, also displays a diverging bias against the Tracy-Widom limit, although much more moderate compared to the case of $\lambda_1(\An)$ with or without correction, thanks to the partial cancellation effect outlined in Section~\ref{section:constant_degree_regime_null_distribution}.


The advantage of $\mu_1(\Hc)$ is also evident in the right panel of Figure~\ref{fig:null_distributions_with_scaling}.
In this dense regime, the finite-sample distribution of $\mu_1(\Hc)$ appears nearly perfectly aligned with the Tracy-Widom limit. In contrast, the distribution of $\lambda_1(\An)$, if without the deterministic shift, still exhibits a noticeable positive bias against the Tracy-Widom limit even for $n$ as large as $1300$.

Overall, Figure~\ref{fig:null_distributions_with_scaling} suggests that, for finite $n$, the null distribution of $\mu_1(\Hc)$ is more accurately approximated by the Tracy-Widom distribution compared to how well the null distribution of $\lambda_1(\An)$ is approximated by the Tracy-Widom distribution, which brings convenience to its practical usage in hypothesis testing.

\subsection{Asymptotic power under $H_1: K>1$}
\label{section:asymptotic_power_guarantee}
In this section, we provide an asymptotic lower bound for the proposed spectral statistic $\mu_1(\Hce)$ under the alternative hypothesis $H_1: K > 1$, assuming the network is not too sparse.
This will imply that the null and alternative distributions of $\mu_1(\Hce)$ are well separated.
The main idea here is that the asymptotic order of $\mu_1(\Hce)$ will be comparable to that of $\lambda_1(\Ace)$, the latter of which has been studied in \citet{lei2016goodness}.
For the special case where we test against $H_0: K=1$, we extend their Theorem 3.3 to accommodate a general sparsity level $\alpha$, provided that $\alpha \gg \log n$.

Here, the corresponding null model is the Erd\H{o}s-R\'{e}nyi model $G(n,p)$ where $p = \frac{1}{n(n-1)}\sum_{i,j} P_{ij}$ is the average expected degree in the alternative model.
Recall the definition $\alpha = (n-1)p + 1$.
\begin{theorem}[Growth rate of $\lambda_1(\Ace)$]
\label{theorem:growth_rate_of_lambda_1_residual_under_alternative}
    Let the network $A$ be generated from an SBM$(\bm{g}^{(n)}, Q^{(n)})$ with some $K>1$. Assume $\min_{1\le i \le K} \frac{n_i}{n} = \Omega(1)$. Define $\delta_n \in [0,1)$ to be the maximum absolute difference among all pairs of entries in $Q^{(n)}$.
    Let $\widehat{p}=\frac{1}{n(n-1)} \sum_{i,j} A_{ij}$ and let $\Ace$ be defined in \eqref{eq:definition_of_Ac_and_Ace}.
    Assume $\alpha / \log n \to \infty$.
    Then the test statistic against $H_0: K=1$ satisfies
    \begin{equation}
        \frac{\lambda_1(\Ace)}{\sqrt{(n-1)\widehat{p}(1-\widehat{p})}} = \Omega(\delta_n n \alpha^{-1/2}) + O_{\Prob}\left(1\right).
        \label{eq:growth_rate_of_lambda_1_residual_under_alternative}
    \end{equation}
\end{theorem}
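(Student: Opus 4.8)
The plan is to decompose the centered residual matrix into a ``signal'' part and a ``noise'' part, control the noise part by concentration (invoking $\alpha \gg \log n$), and lower-bound the signal part by exhibiting a suitable test vector. Write $\Ace = (A - \E[A]) + (\E[A] - \widehat{p} \bm{J})$ where $\bm{J}$ is the all-ones matrix minus the identity, so that $\Ace = (A - P) + (P - p\bm{J}) + (p - \widehat{p})\bm{J}$ (taking care of the zero diagonal throughout). The first term $A - P$ is a centered noise matrix with independent entries of variance at most $p$; by standard spectral concentration for inhomogeneous Erd\H{o}s--R\'enyi-type matrices in the regime $\alpha / \log n \to \infty$ (e.g., the concentration results cited earlier via \citet{le2017concentration, benaych2020spectral}), $\|A - P\| = O_{\Prob}(\sqrt{\alpha})$, which after dividing by $\sqrt{(n-1)\widehat p(1-\widehat p)} \asymp \sqrt{\alpha}$ gives the $O_{\Prob}(1)$ term in \eqref{eq:growth_rate_of_lambda_1_residual_under_alternative}. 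The term $(p - \widehat{p})\bm{J}$ has operator norm $O_{\Prob}(n \cdot n^{-1} p^{1/2}) = O_{\Prob}(p^{1/2})$ using $|\widehat p - p| = O_{\Prob}(n^{-1}p^{1/2})$, which is lower order still. So the entire content is the deterministic quantity $\|P - p\bm{J}\|$ (restricted appropriately to the off-diagonal), which I must show is $\Omega(\delta_n n)$.

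For the signal bound I would construct an explicit vector. The matrix $M := P - p\bm{J}$ (off-diagonal) is block-constant: on block $(a,b)$ its entries equal $Q^{(n)}_{ab} - p$. Pick a pair of blocks $(a^*, b^*)$ achieving (or nearly achieving) the maximum spread $\delta_n = \max_{a,b,c,d}|Q_{ab} - Q_{cd}|$; WLOG the spread is witnessed by two entries in, say, a common row, or more robustly one can find two blocks $a, a'$ and a block $b$ with $|Q_{ab} - Q_{a'b}|$ of order $\delta_n$ — this needs a short combinatorial argument (a diameter-type inequality: if all row-differences were $o(\delta_n)$ then all entries would be within $o(\delta_n)$, contradicting the definition of $\delta_n$, up to losing a factor depending on $K$, which is fine since $K$ is fixed). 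Then take the test vector $v$ supported on communities $a$ and $a'$, equal to $\pm n^{-1/2}\sqrt{n/n_a}$ on community $a$ and $\mp n^{-1/2}\sqrt{n/n_{a'}}$ on $a'$ with signs chosen so the contributions add; using $n_a, n_{a'} = \Omega(n)$ one computes $v^\top M v = \Omega(\delta_n n)$ while $\|v\| = O(1)$. Hence $\lambda_1(M) \ge v^\top M v / \|v\|^2 = \Omega(\delta_n n)$ (if $M$ could be sign-indefinite one instead bounds $\|M\| \ge |v^\top M v|/\|v\|^2$ and notes that the leading eigenvalue of $\Ace$ — not its norm — is what appears, so I'd symmetrize by also checking $-M$ cannot dominate, or simply appeal to assortativity $Q_{ii} \ge Q_{ij}$ to pin the sign).

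Assembling: $\lambda_1(\Ace) \ge \lambda_1(M) - \|A - P\| - \|(p-\widehat p)\bm J\| \ge \Omega(\delta_n n) - O_{\Prob}(\sqrt\alpha)$, and dividing by $\sqrt{(n-1)\widehat p(1-\widehat p)}$ yields \eqref{eq:growth_rate_of_lambda_1_residual_under_alternative} after checking $\sqrt{(n-1)\widehat p(1-\widehat p)} = \Theta_{\Prob}(\sqrt\alpha)$ (which holds since $\widehat p \to 0$ or is bounded away from $1$, and $\widehat p \asymp p$). One subtlety to flag: the diagonal corrections (the $\Ace_{ii} = 0$ convention versus the all-ones $\bm J$) contribute a rank-$n$ diagonal matrix with entries $O(p)$, hence operator norm $O(p) = o(\sqrt\alpha)$, negligible — but it must be written down. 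The main obstacle I anticipate is the deterministic step: carefully showing $\|P - p\bm J\| = \Omega(\delta_n n)$ with the right dependence, i.e., converting the ``max pairwise entry difference'' $\delta_n$ into a genuine spectral lower bound via a rank-one test vector while controlling the loss from the number of blocks $K$ and from the requirement $n_i = \Omega(n)$; everything else is routine concentration and triangle inequalities.
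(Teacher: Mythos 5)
Your overall route is the same as the paper's: split $\Ace$ into a deterministic block-constant ``signal'' plus the noise $A-\E[A]$ (plus lower-order corrections from $\widehat p$ and the diagonal), bound the noise by $O_{\Prob}(\sqrt{\alpha})$ using concentration valid when $\alpha/\log n\to\infty$, lower-bound the signal by $\Omega(\delta_n n)$, and divide by $\sqrt{(n-1)\widehat p(1-\widehat p)}\asymp\sqrt{\alpha}$. The only structural difference is in the deterministic step: the paper centers directly at $\widehat p$ (so there is no separate $(p-\widehat p)$ rank-one term) and argues that if every entry of $Q^{(n)}$ were within $\delta_n/2$ of $\widehat p$, the maximal pairwise spread could not equal $\delta_n$; hence some $\Theta(n)\times\Theta(n)$ block of $\E[A]-\widehat{P}^{(0)}$ has all entries of modulus at least $\delta_n/2$, and since the operator norm dominates that of any submatrix, $\|\E[A]-\widehat{P}^{(0)}\|=\Omega(\delta_n n)$.

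One step of yours fails as literally written: a quadratic form $v^\top M v$ with $v$ supported only on communities $a$ and $a'$ involves only the entries $Q_{aa}$, $Q_{aa'}$, $Q_{a'a'}$, so it cannot detect the difference $|Q_{ab}-Q_{a'b}|$ when $b\notin\{a,a'\}$; for instance with $K=3$ one can have $Q_{11}=Q_{12}=Q_{22}=p$ while the entire spread $\delta_n$ sits in the third column, making your $v^\top M v$ essentially zero. The repair is cheap and you essentially name the ingredients: either use a bilinear form $u^\top M v$ with $u$ supported on $a\cup a'$ and $v$ on $b$ (this is equivalent to the paper's submatrix bound), or observe that the triangle inequality forces some single entry to satisfy $|Q_{ab}-\widehat p|\ge\delta_n/4$ and use the normalized indicators of blocks $a$ and $b$; better still, under the paper's assortativity assumption and $n_i/n=\Omega(1)$, the largest diagonal entry of $Q^{(n)}$ exceeds the (weighted-average) density by $\Omega(\delta_n)$, and the normalized indicator of that community gives a positive quadratic form of order $\delta_n n$, which pins the sign and bounds $\lambda_1$ itself rather than only the operator norm---the point you correctly flag needs attention (the paper's own write-up works with $\|\Ace\|$). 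With that correction the rest of your argument---the $O_{\Prob}(\sqrt\alpha)$ noise bound, the negligible $(p-\widehat p)$ and diagonal corrections, and $\sqrt{(n-1)\widehat p(1-\widehat p)}=\Theta_{\Prob}(\sqrt\alpha)$---matches the paper's proof.
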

The proof is given in Appendix~\ref{section:proof_of_theorem_growth_rate_of_lambda_1_residual_under_alternative}.
When $Q^{(n)}$ does not change with $n$ and $\alpha$ grows linearly with $n$, then $\lambda_1(\Ane) = \Omega(n^{1/2})$, which aligns with the result in \citet{bickel2016hypothesis}.

Let $\widehat{\alpha} = (n-1)\widehat{p} + 1$.
Our next proposition provides an upper bound for the difference between $\widehat{\alpha}^{-1/2} \mu_1(\Hce)$ and $\widehat{\alpha}^{-1/2} \lambda_1(\Ace)$. The proof is given in Appendix~\ref{section:proof_of_proposition_difference_between_mu_1_H_and_lambda_1_alternative}, again considering the perturbation $\widetilde{H} = \widetilde{H}_0 + E$.
\begin{proposition}[Asymptotic difference between $\mu_1(\Hc)$ and $\lambda_1(\Ac)$]
\label{proposition:difference_between_mu_1_H_and_lambda_1_alternative}
    Under the same assumptions as in Theorem~\ref{theorem:growth_rate_of_lambda_1_residual_under_alternative},
    further assume that $\min_i \E[d_i] / \log n \to \infty$. Then
    \begin{equation*}
        \alpha^{-1/2} |\mu_1(\Hc) - \lambda_1(\Ac)|=
         O_{\Prob} (1),
    \end{equation*}
    and 
    \begin{equation*}
        \widehat{\alpha}^{-1/2} |\mu_1(\Hce) - \lambda_1(\Ace)|= 
        O_{\Prob} (1).
    \end{equation*} 
\end{proposition}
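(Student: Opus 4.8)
The plan is to keep working with the rescaled decomposition $\widetilde{H}=\widetilde{H}_0+E$ introduced for the null analysis, and to extract from it the self-consistent equation obeyed by the leading eigenvalue. Since $\widetilde{H}$ is a conjugation of $\Hc/\sqrt{\alpha}$ and the nonzero eigenvalues of $\widetilde{H}_0$ are precisely those of $\Ac/\sqrt{\alpha}$, the assertion $\alpha^{-1/2}|\mu_1(\Hc)-\lambda_1(\Ac)|=O_{\Prob}(1)$ is the same as $|\mu_1(\widetilde{H})-\lambdac_1|=O_{\Prob}(1)$ with $\lambdac_1:=\lambda_1(\Ac/\sqrt{\alpha})$. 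Reading off the block form of $\widetilde{H}$, a vector $(u^{\top},v^{\top})^{\top}$ is a right eigenvector for $\mu\neq 0$ iff $v=u/\mu$ and $u$ is an eigenvector of the \emph{symmetric} matrix $G(\mu):=\alpha^{-1/2}\Ac+\mu^{-1}(\In-\alpha^{-1}D)$ with eigenvalue $\mu$; hence the largest real eigenvalue $\mu^{\star}$ of $\widetilde{H}$ is the largest real solution of $\lambda_1(G(\mu))=\mu$. Weyl's inequality applied to $G(\mu^{\star})$ then gives the basic estimate
\[
 \mu^{\star}\,\big|\mu^{\star}-\lambdac_1\big|\ \le\ \big\|\In-\alpha^{-1}D\big\|\ =:\ R ,
\]
and the whole problem reduces to an upper bound for $R$ and a lower bound for $\lambdac_1$. (The latter also ensures $\mu_1(\Hc)$ is real: taking the Hermitian form of $(\mu^{2}\In-\mu\Ac+\Dc-\In)w=0$ against $w$ forces every complex eigenvalue of $\Hc$ to have real part at most $\tfrac12\lambda_1(\Ac)$, which lies below $\mu^{\star}$ once $\mu^{\star}\approx\lambda_1(\Ac)$.)

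For $R=\alpha^{-1}\max_i|d_i-\alpha|$: each $\E[d_i]$ differs from the average expected degree $\alpha-1=(n-1)p$ by $O(n\delta_n)$ (because $|\E[d_i]-\E[d_j]|$ over blocks is at most $\max_{k,k'}\sum_l n_l|Q^{(n)}_{kl}-Q^{(n)}_{k'l}|\le n\delta_n$, up to an $O(\delta_n)$ diagonal correction), and $\min_i\E[d_i]/\log n\to\infty$ yields $\max_i|d_i-\E[d_i]|=O_{\Prob}(\sqrt{\alpha\log n})$ via a Bernstein bound and a union bound (using $\max_i\E[d_i]=O(\alpha)$, which follows from $\min_k n_k/n=\Omega(1)$ and $\alpha-1=\tfrac1n\sum_i\E[d_i]$). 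Hence $R=O_{\Prob}\big(\delta_n n/\alpha+\sqrt{\log n/\alpha}\big)$. For $\lambdac_1$: write $\Ac=(A-P)+\Delta$ with $\Delta$ deterministic, and note that $\Delta$ equals a matrix of bounded rank plus a diagonal matrix of operator norm at most $1$; since the top $O(1)$ eigenvalues of $A-P$ are $\Theta_{\Prob}(\sqrt{\alpha})$ (the concentration/edge behavior of $A-P$ already used for Theorem~\ref{theorem:growth_rate_of_lambda_1_residual_under_alternative}), Weyl and Cauchy interlacing give $\lambda_1(\Ac)\ge c\sqrt{\alpha}$ w.h.p.\ for some $c>0$, i.e.\ $\lambdac_1\ge c$; and in the strong-signal case Theorem~\ref{theorem:growth_rate_of_lambda_1_residual_under_alternative} gives the improvement $\lambdac_1\ge c'\,\delta_n n\alpha^{-1/2}$ w.h.p.

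Now combine. If $\delta_n n\alpha^{-1/2}=O(1)$, then $\delta_n n/\alpha=O(\alpha^{-1/2})$, so $R=o_{\Prob}(1)$ while $\lambdac_1$ is bounded below by $c$; thus $\mu^{\star}\ge\tfrac12\lambdac_1$ w.h.p.\ and $|\mu^{\star}-\lambdac_1|\le 2R/\lambdac_1=o_{\Prob}(1)$. If $\delta_n n\alpha^{-1/2}\to\infty$, then $\lambdac_1\gtrsim\delta_n n\alpha^{-1/2}$ dominates $R=O_{\Prob}(\delta_n n/\alpha+\sqrt{\log n/\alpha})$, so again $R/\lambdac_1=o_{\Prob}(1)$, $\mu^{\star}\ge\tfrac12\lambdac_1$ w.h.p., and $|\mu^{\star}-\lambdac_1|\le 2R/\lambdac_1=o_{\Prob}(1)$. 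Either way $\alpha^{-1/2}|\mu_1(\Hc)-\lambda_1(\Ac)|=o_{\Prob}(1)$, which in particular yields the stated $O_{\Prob}(1)$ bound. For the empirical version, $\Ace=\Ac+(p-\widehat{p})(\mathbf{1}\mathbf{1}^{\top}-\In)$ with $|p-\widehat{p}|=O_{\Prob}(\sqrt{p}/n)$ gives $\|\Ace-\Ac\|=o_{\Prob}(\sqrt{\alpha})$ and $\widehat{\alpha}=\alpha(1+o_{\Prob}(1))$, so $\lambda_1(\Ace)=\lambda_1(\Ac)+o_{\Prob}(\sqrt{\alpha})$ and $\Dce=D-(\widehat{\alpha}-1)\In$ obeys the same degree estimate; rerunning the argument with $(\Ace,\widehat{\alpha})$ in place of $(\Ac,\alpha)$ gives $\widehat{\alpha}^{-1/2}|\mu_1(\Hce)-\lambda_1(\Ace)|=O_{\Prob}(1)$.

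The main obstacle is that, unlike in Proposition~\ref{proposition:convergence_of_H_to_TW1} where $\|E\|=o_{\Prob}(1)$, under community imbalance the perturbation $R=\|\In-\alpha^{-1}D\|$ need not vanish and can in fact diverge (at rate $\delta_n n/\alpha$), so the perturbation estimates of the null/dense case do not transfer directly. The point that rescues the argument is the self-consistent structure: it shows the displacement of the top eigenvalue is controlled by $R/\mu^{\star}\asymp R/\lambdac_1$ rather than by $R$ itself, reducing matters to the fact that $\lambdac_1$ grows at least as fast as $R$. Supplying the needed lower bound on $\lambdac_1$ — both the uniform floor $\lambda_1(\Ac)\gtrsim\sqrt{\alpha}$, obtained by interlacing $\Ac$ against the bulk edge of $A-P$ (which is also what keeps $\mu_1(\Hc)$ real and legitimizes dividing by $\mu^{\star}$), and the matching $\Omega(\delta_n n\alpha^{-1/2})$ rate from Theorem~\ref{theorem:growth_rate_of_lambda_1_residual_under_alternative} in the strong-signal regime — is the technical crux.
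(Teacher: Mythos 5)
Your route is genuinely different from the paper's and, in outline, sound. The paper also works with $\widetilde{H}=\widetilde{H}_0+E$, but it applies the first-order eigenvalue perturbation formula (Demmel, Theorem~4.4) with the explicit left/right eigenvectors of $\widetilde{H}_0$, obtaining $|\mu_1(\widetilde{H})-\mu_1(\widetilde{H}_0)|\le \lambdac_1^{-1}\|E\|+O(\|E\|^2)$, and then shows $\|E\|=O_{\Prob}(\delta_n n\alpha^{-1})+O_{\Prob}(1)=O_{\Prob}(1)$ using exactly the bias/fluctuation split you use, together with $\delta_n n=O(\alpha)$ (which follows from $\min_k n_k/n=\Omega(1)$). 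Your control of $R=\|\In-\alpha^{-1}D\|$ is thus the same as the paper's; what differs is the mechanism converting $R$ into an eigenvalue displacement: your quadratic-pencil/self-consistent equation $\lambda_1(G(\mu))=\mu$ plus Weyl is non-perturbative (it does not invoke an asymptotic expansion with an $O(\|E\|^2)$ remainder when $\|E\|$ is not small), and it buys the sharper bound $|\mu^\star-\lambdac_1|\lesssim R/\lambdac_1=o_{\Prob}(1)$ rather than $O_{\Prob}(1)$. The realness argument via the Hermitian form of the pencil is correct and is a point the paper leaves implicit. Both proofs need the floor $\lambdac_1=\Omega_{\Prob}(1)$; the paper gets it by citing \citet{benaych2020spectral}, whereas your attribution of it to ``the edge behavior already used for Theorem~\ref{theorem:growth_rate_of_lambda_1_residual_under_alternative}'' is inaccurate: that proof only uses the \emph{upper} bound $\|A-\E[A]\|=(2+o_{\Prob}(1))\sqrt{d_m}$, and the lower bound on $\lambda_{K+1}(A-P)$ of order $\sqrt{\alpha}$ that your interlacing step requires, while true, is not established there.

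There is one genuine (though repairable) gap in the logic: the step ``thus $\mu^\star\ge\tfrac12\lambdac_1$ w.h.p.'' does not follow from the basic estimate $\mu^\star|\mu^\star-\lambdac_1|\le R$ alone. That inequality has a spurious small branch: it is equally consistent with $\mu^\star\le 2R/\lambdac_1$, i.e.\ with the largest real eigenvalue of $\widetilde{H}$ sitting near zero. To rule this out you must show that $\widetilde{H}$ actually possesses a real eigenvalue no smaller than, say, $\lambdac_1-2R/\lambdac_1$; this can be done with the machinery you already set up, by an intermediate-value argument on $g(\mu)=\lambda_1(G(\mu))-\mu$ (note $g(\mu)\ge\lambdac_1-R/\mu-\mu$, which is nonnegative at $\mu=\lambdac_1-2R/\lambdac_1$ once $R/\lambdac_1^2$ is small, while $g(\mu)\to-\infty$ as $\mu\to\infty$). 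The same continuity argument is also needed to justify two adjacent assertions you state without proof: that $\widetilde{H}$ has any real eigenvalue at all (a priori all $2n$ eigenvalues could be non-real), and that the largest real eigenvalue $\mu^\star$ satisfies $\lambda_1(G(\mu^\star))=\mu^\star$ rather than $\lambda_j(G(\mu^\star))=\mu^\star$ for some $j>1$ (if $\lambda_1(G(\mu^\star))>\mu^\star$ strictly, the intermediate-value theorem would produce a real eigenvalue above $\mu^\star$, a contradiction — but this is an argument that must be made, not a triviality). With these additions the proposal yields the stated $O_{\Prob}(1)$ bound, and in fact the stronger $o_{\Prob}(1)$ rate.
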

Consequently, if $\min_i \E[d_i] / \log n \to \infty$, the difference between $\mu_1(\Hce)$ and $\lambda_1(\Ace)$ is negligible,
and $\widehat{\alpha}^{-1/2} \mu_1(\Hce)$ has the same growth rate as in \eqref{eq:growth_rate_of_lambda_1_residual_under_alternative}.
Combining the spectral equivalency under the null hypothesis in Proposition~\ref{proposition:convergence_of_H_to_TW1}, using $\mu_1(\Hce)$ or $\lambda_1(\Ace)$ will lead to asymptotically equivalent power against $H_1: K>1$ in the semi-dense regime.

\subsection{Centering enhances signal in imbalanced settings}
\label{section:centering_enhances_signal}
In this section, we offer a heuristic explanation for why centering is beneficial in situations of imbalance.
We argue that centering strengthens the signal of the informative eigenvalue,
which explains the effectiveness of $\Ac$ under imbalance.
This motivates the formulation of our proposed operator $\Hc$ \eqref{eq:definition_of_Hc}.

Recall the construction of $\Ac$ when we test $H_1: K=2$ vs. $H_0: K=1$. It can be written in the following matrix form if we ignore the differences on its diagonal:
\begin{equation*}
    \Ac = A - P^{(0)}, \quad \textrm{ where } \quad P^{(0)} = p_0 \bm{1} \bm{1}^\top,
\end{equation*}
where $p_0$ is the average expected degree, i.e. $p_0 = \frac{1}{n^2} \sum_{i,j} \E[A_{ij}]$.

Previously in Section~\ref{section:asymptotic_power_guarantee}, we have shown the equivalence of $\mu_1(\Hc)$ and $\lambda_1(\Ac)$ when the network is not too sparse.
Similarly, when $\alpha \gg \log n$, the informative eigenvalues of $H$ and $A$, namely $\mu_2(H)$ and $\lambda_2(A)$, have also been shown equivalent \citep{wang2023limiting}.
Therefore, it suffices to argue why $\lambda_1(\Ac)$ can be better detected than $\lambda_2(A)$ under imbalance, which will, in turn, explain the advantage of $\mu_1(\Hc)$ brought by centering under imbalance.

Naturally, for the centered adjacency $\Ac$, the informative eigenvalue $\lambda_1(\Ac)$ will be close to the community-related signal $\lambda_1(\E[\Ac])$ when the signal is strong enough.
Similarly for $A$, $\lambda_2(A)$ will be close to the signal $\lambda_2(\E[A])$ when it is strong enough.
On one hand, the amount of noise in the spectrum of $A$ and $\Ac$ is almost the same.
By Cauchy's interlacing theorem and the fact that $P^{(0)}$ has rank one, the adjacency $A$ and its centered version $\Ac = A - P^{(0)}$ almost have the same size of bulk in their spectrum.
On the other hand, our next proposition shows that the centered version generally leads to a stronger signal.
We always have $\lambda_1(\E[A] - P^{(0)}) \ge \lambda_2(\E[A])$, with equality holds if and only if nodes in both communities have equal expected degrees.
\begin{proposition}[Centering enhances signal under imbalance]
\label{proposition:lambda_1_EA_centered_larger_than_lambda_2_EA}
    Suppose network $A$ is generated from a two-block SBM with community size $n_1 \ge n_2$ and block-wise connection probability $Q^{(\delta)} \in \mathbb{R}^{2\times 2}$ given by the form for some $0 < p_0 < 1$ and $k\ge 0$:
    \begin{equation*}
        Q^{(\delta)} = \left(\begin{array}{cc}
          p_0(1 + x\delta)   &  p_0(1-\delta) \\
          p_0(1-\delta)   & p_0(1+ kx\delta)
        \end{array}\right),
    \end{equation*}
    where $x=\frac{2p(1-p)}{p^2 + k(1-p)^2}$ with $p = \frac{n_1}{n}$ so that we ensure the average expected degree over all nodes is maintained at $p_0$ for any $0 < \delta <1$.
    Let $P^{(0)} = p_0 \bm{1}\bm{1}^\top$ be a constant $n \times n$ matrix.
    Then for any $p\in [\frac{1}{2}, 1)$ and $0 <\delta <1$, we always have $\lambda_1(\E[A] - P^{(0)}) \ge \lambda_2(\E[A])$, with equality holds if and only if nodes in two communities have equal expected degrees.
    That is, when $p^2 = k(1-p)^2$. 
    
    (For simplicity, we do not impose the no-self-loop restriction here. The difference caused by the diagonal elements is negligible when $n$ becomes large enough.)
\end{proposition}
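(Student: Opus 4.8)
The plan is to reduce the whole statement to the $2\times 2$ block probability matrix. Both $\E[A]$ and $\E[A]-P^{(0)}$ are block‑constant with two blocks of sizes $n_1,n_2$, so apart from $n-2$ structural zero eigenvalues their spectra coincide with those of the symmetric $2\times 2$ matrices $R \coloneqq D^{1/2} Q^{(\delta)} D^{1/2}$ and $R_c \coloneqq D^{1/2}\bigl(Q^{(\delta)} - p_0 \bm{1}\bm{1}^\top\bigr) D^{1/2}$, where $D = \mathrm{diag}(n_1,n_2)$. With $w \coloneqq D^{1/2}\bm{1} = (\sqrt{n_1},\sqrt{n_2})^\top$ one has $R_c = R - p_0\, w w^\top$, a rank‑one downward perturbation of $R$. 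First I would record three elementary facts, all for $0<\delta<1$: (i) $\det Q^{(\delta)}>0$ and $\mathrm{tr}\,R>0$, so both eigenvalues of $R$ are strictly positive and hence equal $\lambda_1(\E[A])$ and $\lambda_2(\E[A])$; (ii) $Q^{(\delta)}_{12}=p_0(1-\delta)>0$, so $R$ has a nonzero off‑diagonal entry and therefore $\lambda_1(R)>\lambda_2(R)$; and (iii) the prescribed value of $x$ is exactly the one forcing $\sum_{a,b} n_a n_b Q^{(\delta)}_{ab} = n^2 p_0$, which rewrites as $w^\top R\, w = n^2 p_0$, i.e. the Rayleigh quotient of $R$ at $w$ equals $n p_0$.

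Given these, the inequality is a one‑liner. Since $w^{\perp}$ is a one‑dimensional nonzero subspace of $\mathbb{R}^2$, pick a unit vector $u_0$ with $w^\top u_0 = 0$; then
\[ \lambda_1\bigl(\E[A]-P^{(0)}\bigr) = \lambda_1(R_c) \ \ge\ u_0^\top R_c\, u_0 = u_0^\top R\, u_0 - p_0 |w^\top u_0|^2 = u_0^\top R\, u_0 \ \ge\ \lambda_{\min}(R) = \lambda_2(R) = \lambda_2(\E[A]). \]
The first equality uses that $\lambda_1(R_c)>0$, so it dominates the $n-2$ zero eigenvalues of $\E[A]-P^{(0)}$; this follows from $\mathrm{tr}\,R_c = \mathrm{tr}(\E[A]-P^{(0)}) = p_0 x\delta(n_1 + k n_2) > 0$ together with $\lambda_1(R_c)\ge\lambda_2(R_c)$.

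For the equality case I would track when the displayed chain is tight. If $\lambda_1(R_c)=\lambda_2(R)$ then in particular $u_0^\top R\, u_0 = \lambda_{\min}(R)$, and since $\lambda_1(R)>\lambda_2(R)$ this forces $u_0$ to be the bottom eigenvector of $R$; hence $w$, being orthogonal to $u_0$ in $\mathbb{R}^2$, is parallel to the Perron eigenvector of $R$, i.e. $Rw\parallel w$. Since $D^{-1/2}Rw = Q^{(\delta)}(n_1,n_2)^\top$ is the vector of common community expected degrees, $Rw\parallel w$ is precisely the statement that the two communities have equal expected degrees. Conversely, if $w$ is the Perron eigenvector of $R$, then $R_c$ is diagonal in the eigenbasis of $R$ with eigenvalues $\lambda_1(R) - p_0\|w\|^2$ and $\lambda_2(R)$; fact (iii) gives $\lambda_1(R)= n p_0 = p_0\|w\|^2$, so the first is $0$ and $\lambda_1(R_c)=\max\{0,\lambda_2(R)\}=\lambda_2(R)$. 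I would close by verifying, by direct substitution into $Q^{(\delta)}$, that the equal‑degree condition $p\,(Q^{(\delta)}_{11}-Q^{(\delta)}_{12}) = (1-p)(Q^{(\delta)}_{22}-Q^{(\delta)}_{12})$ collapses to $p^2 = k(1-p)^2$.

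All of this is routine once the reduction is in place; the one place needing genuine care — and the closest thing to an obstacle — is the bookkeeping in (i)–(iii) certifying that the $2\times 2$ eigenvalues I manipulate really are $\lambda_1,\lambda_2$ of the $n\times n$ matrices (that no structural zero eigenvalue slips above $\lambda_1(R_c)$ or above $\lambda_2(R)$). The conceptual content is just the rank‑one interlacing bound $\lambda_1(R - p_0 w w^\top)\ge\lambda_2(R)$ combined with observation (iii): the degree‑balancing normalization of $x$ is exactly what makes $w$ the Perron direction of $R$, which is exactly the configuration in which the interlacing is tight.
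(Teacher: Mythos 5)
Your proof is correct, and it takes a genuinely different route from the paper's for the core inequality. Both arguments begin with the same reduction of the $n\times n$ block-constant matrices to $2\times 2$ matrices (the paper via its block-matrix eigenvalue lemma and an explicit orthogonal conjugation, you via the similarity with $D^{1/2}Q^{(\delta)}D^{1/2}$ — equivalent steps), and your bookkeeping that no structural zero eigenvalue interferes (positivity of $\det Q^{(\delta)}$, of $\mathrm{tr}\,R$, and of $\mathrm{tr}\,R_c$) is exactly the care that is needed and is handled correctly. From there the paper computes all four nontrivial eigenvalues in closed form in terms of $p,k,\delta$ and reduces $\lambda_1(\E[A]-P^{(0)})\ge\lambda_2(\E[A])$ to the algebraic inequality $\beta^2\le\alpha$, reading off the equality case as $k=p^2/(1-p)^2$; you instead exploit the structure $R_c=R-p_0\,ww^\top$ with $w=D^{1/2}\bm{1}$, note that the degree-balancing choice of $x$ is precisely the statement $w^\top Rw=n^2p_0$, and get the inequality in one line from the variational principle applied to a unit vector orthogonal to $w$, with equality characterized by $w$ being the Perron direction of $R$, i.e.\ equal per-community expected degrees (and your substitution check that this collapses to $p^2=k(1-p)^2$ does go through). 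Your argument buys a conceptual explanation of why equal degrees is exactly the tight case — centering removes the Perron direction only when the degree vector is constant — and it extends verbatim to $K>2$ blocks, since it never uses two-block formulas; the paper's computation, on the other hand, yields explicit expressions for the eigenvalues themselves, which quantify how much signal centering gains as a function of $p$, $k$, $\delta$, information the interlacing bound alone does not provide.
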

The proof is given in Appendix~\ref{section:proof_of_proposition:lambda_1_EA_centered_larger_than_lambda_2_EA}, where we derive the leading eigenvalues of $\E[A]$ and $\E[A] - P^{(0)}$ in terms of $p$, $k$ and $\delta$.

Note that the signal strength is the same with or without centering, in cases where we keep the per-community expected degree equal. 
This setting is illustrated in Figure~\ref{fig:compare_nonbacktracking_normalized_adjacency_under_sparsity},
\ref{fig:compare_power_K0=1_n0=500_n1=250_n2=250_p0=0.01_sparsity}
and \ref{fig:compare_power_K0=1_n0=500_n1=400_n2=100_equaldegree}, and we see test statistics based on $A$ and $\Ac$ exhibit similar testing power.
Otherwise, whenever there is any form of imbalance that leads to unequal per-community expected degree, 
the centered adjacency $\Ac$ always leads to a stronger signal than the original adjacency $A$.
As a result, test statistics with centering can detect the community-related signal much earlier under imbalance, as shown by the power curves in Figure~\ref{fig:compare_nonbacktracking_normalized_adjacency_under_community_size_imbalance}, \ref{fig:compare_nonbacktracking_normalized_adjacency_under_P_imbalance} and \ref{fig:compare_power_K0=1_n0=500_n1=400_n2=100_p0=0.08_balanced_P}.
More details about the mentioned figures can be found in Section~\ref{section:numerical_experiment}.

\section{Numerical Experiments}
In this section, we provide numerical evidence to demonstrate the effectiveness of our proposed operator.
We first introduce different settings for hypothesis tests in Section~\ref{section:model_parameterization}.
Subsequently, in Section~\ref{section:distribution_of_leading_eigenvalues}, we compare the empirical distribution of the leading eigenvalues of matrices $\Ace$, $H$, and $\Hce$.
As its byproduct, in Section~\ref{section:spectral_clustering_numerical_results}, we show the advantage of utilizing the leading eigenvector of $\Hce$ for label estimation.
Finally, in Section~\ref{section:power_of_test_statistics}, we compare the power of our proposed test with various other test statistics in the literature.

\label{section:numerical_experiment}
\subsection{Testing $H_1: K>1$ versus $H_0: K=1$}
\label{section:model_parameterization}
We test $H_1: K>1$ versus $H_0: K=1$.
Under the null hypothesis, we assume that the network $A$ is generated from an Erd\H{o}s-R\'{e}nyi model $G(n, p_0)$.
Equivalently, $\E[A] = P^{(0)}$ under $H_0$, where $P^{(0)}_{ij} = p_0, \forall i \ne j$, and $P^{(0)}_{ii} = 0$. 
The alternative model is a 2-block SBM, where $n_1$ and $n_2=n - n_1$ represent the size of each community. The block-wise edge probability is parameterized with a tuning parameter $\delta \in [0, 1)$,
\begin{equation}
    Q^{(\delta)} = \left(\begin{array}{cc}
        Q_{11}(\delta) & Q_{12}(\delta) \\
        Q_{12}(\delta) & Q_{22}(\delta)
    \end{array} \right), \textrm{ where } Q_{12}(\delta)=p_0 (1-\delta),
    \label{eq:parameterization_of_Q^delta}
\end{equation}
and we consider three distinct cases for $Q_{11}(\delta)$ and $Q_{22}(\delta)$.
In the first case we keep $Q_{11}=Q_{22}$:
\begin{equation}
    Q_{11}(\delta) = Q_{22}(\delta) = p_0\left(1+ \frac{2 n_1 n_2}{n_1 (n_1-1) + n_2 (n_2-1)} \delta\right). \label{eq:P11_P22_balanced_P}
\end{equation}
In the second case we let $Q_{22}$ grow with $\delta$ while $Q_{11}$ remains unchanged,
\begin{equation}
    Q_{11}(\delta) = p_0, \quad Q_{22}(\delta) = p_0\left(1+ \frac{2 n_1}{ (n_2-1)} \delta\right). \label{eq:P11_P22_unbalanced_P}
\end{equation}
In the last case $Q_{11}\ne Q_{22}$ unless $n_1 = n_2$, but we ensure equal per-community expected degree,
\begin{equation}
    Q_{11}(\delta) = p_0\left(1+\frac{n_2}{n_1-1}\right)\delta, \quad Q_{22}(\delta) = p_0\left(1+ \frac{n_1}{n_2-1} \delta\right). \label{eq:P11_P22_constant_degree}
\end{equation}
In all these settings, the connecting probabilities $Q_{11}$, $Q_{22}$, and $Q_{12}$ are adjusted to ensure that $\E[\widehat{p}_0] = \frac{1}{n(n-1)}\sum_{i,j} P^{(\delta)}_{ij} = p_0$ for any $\delta$.
The null model can be taken as a special case with $\delta=0$. The larger the value of $\delta$, the greater the deviation of the 2-block SBM $P^{(\delta)}$ from the Erd\H{o}s-R\'{e}nyi null model $P^{(0)}$.

\subsection{Distribution of leading eigenvalues}
\label{section:distribution_of_leading_eigenvalues}

\begin{figure}
\centering
\begin{minipage}{0.49\textwidth}
\centering
\includegraphics[width=1.0\textwidth]{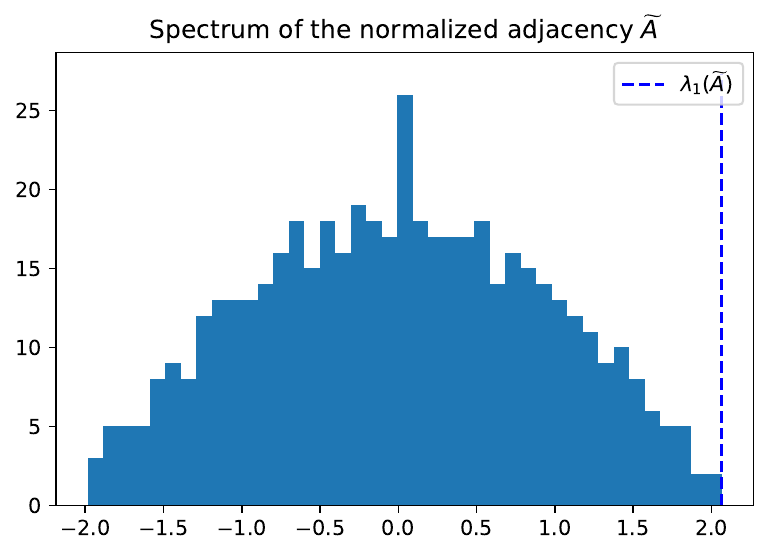}
\end{minipage}
\begin{minipage}{0.49\textwidth}
\centering
\includegraphics[width=1.0\textwidth]{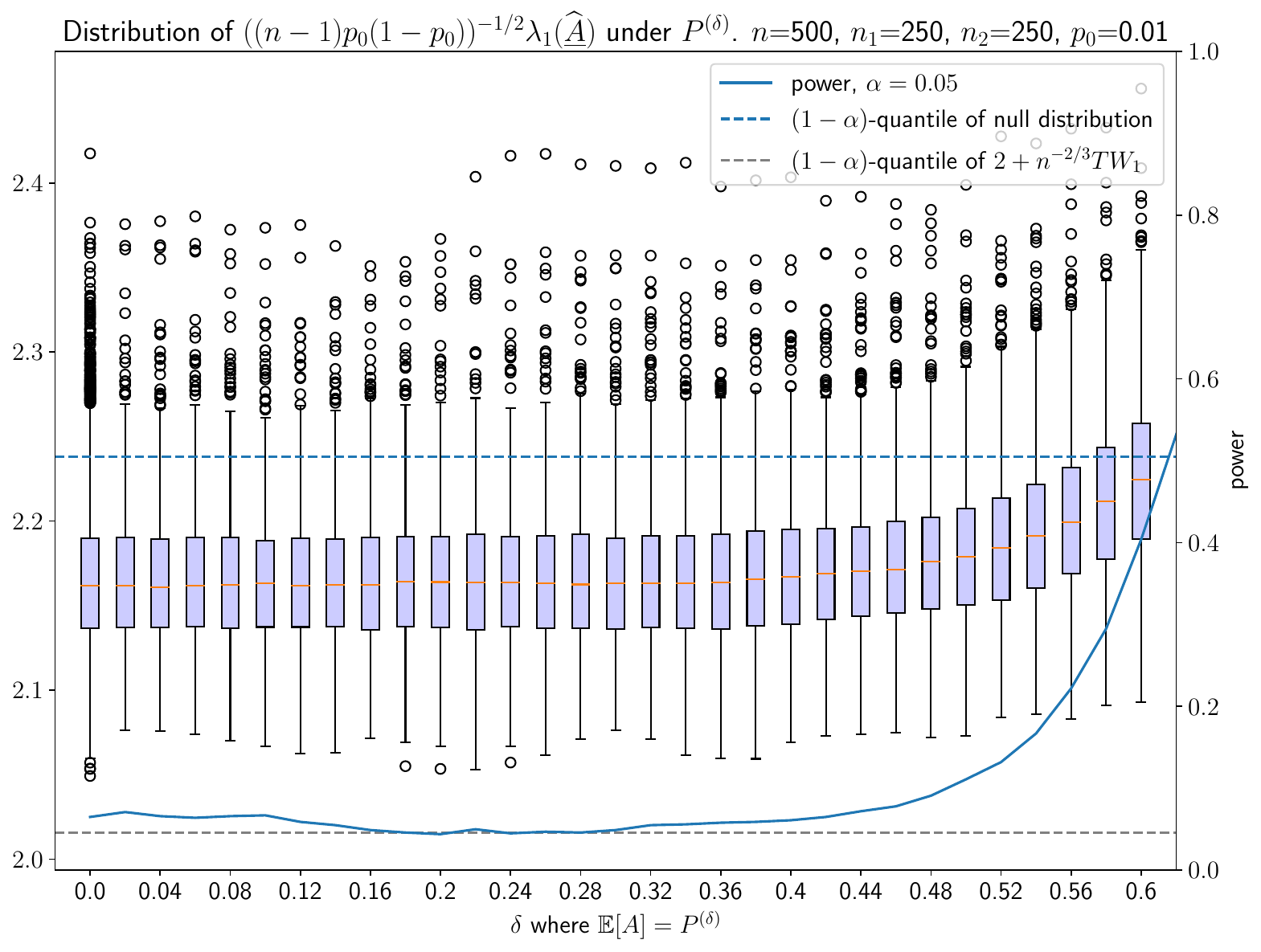}
\end{minipage}
\begin{minipage}{0.49\textwidth}
\centering
\includegraphics[width=1.0\textwidth]{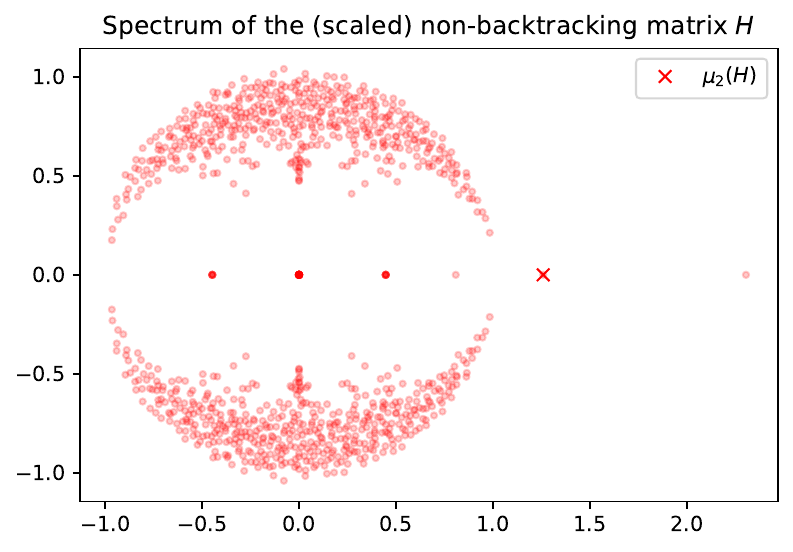}
\end{minipage}
\begin{minipage}{0.49\textwidth}
\centering
\includegraphics[width=1.0\textwidth]{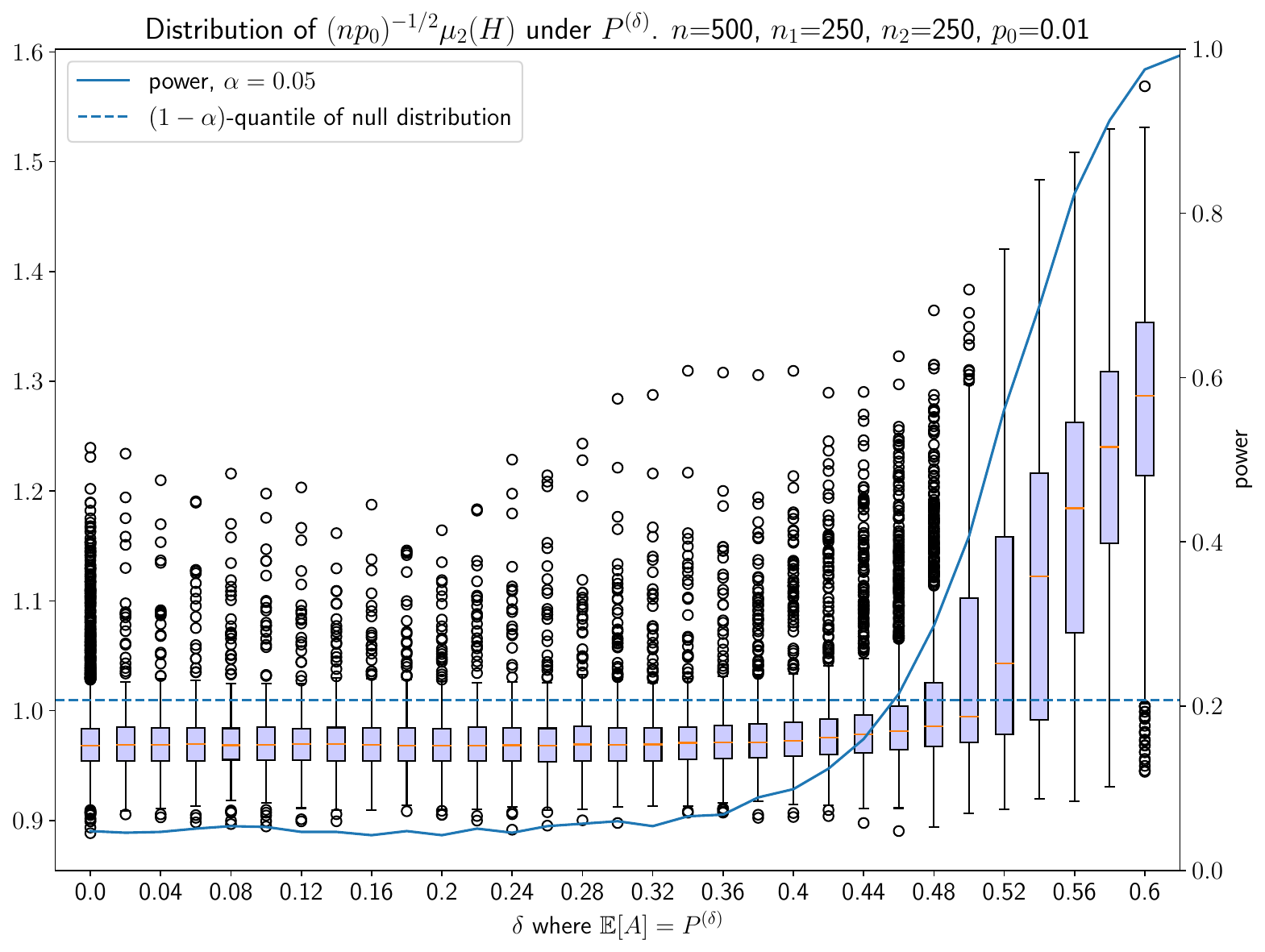}
\end{minipage}
\begin{minipage}{0.49\textwidth}
\centering
\includegraphics[width=1.0\textwidth]{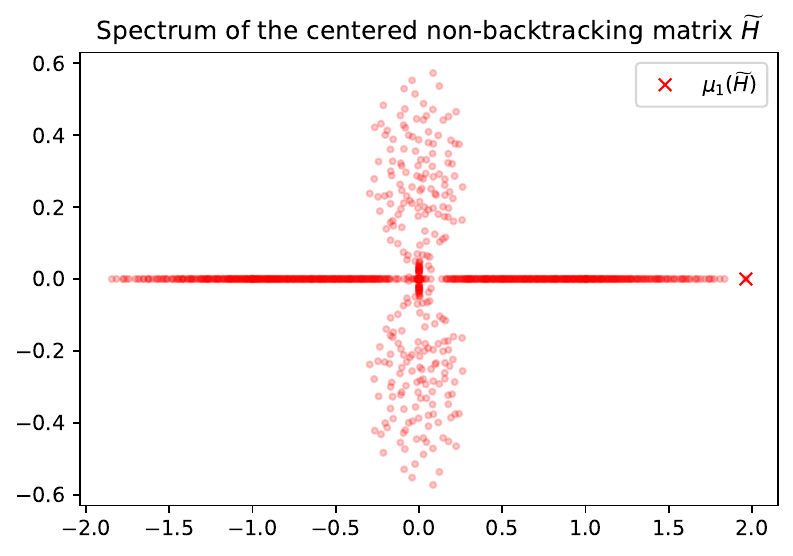}
\end{minipage}
\begin{minipage}{0.49\textwidth}
\centering
\includegraphics[width=1.0\textwidth]{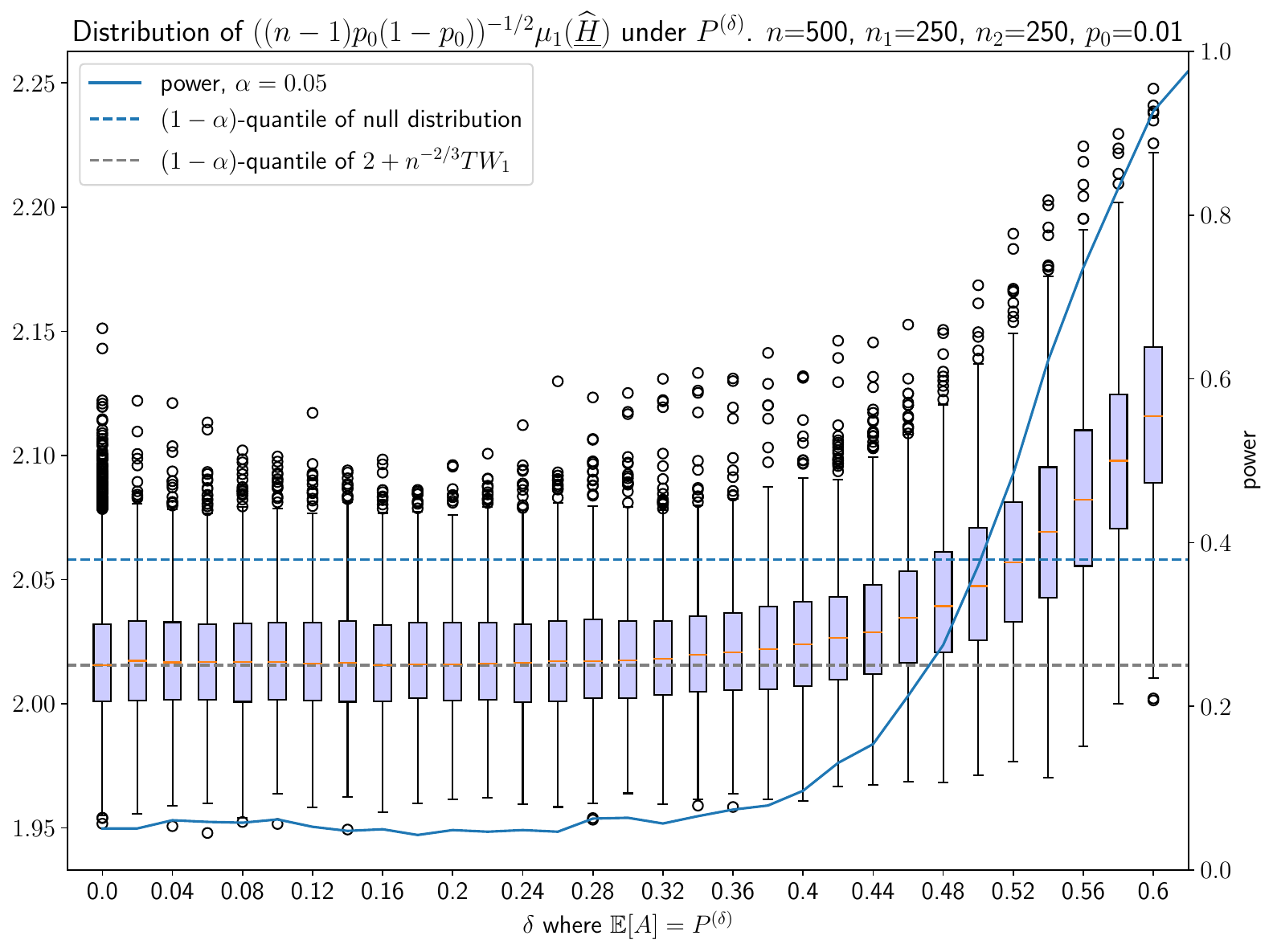}
\end{minipage}
\caption{Testing $H_1: K>1$ versus $H_0: K=1$ under sparsity. The two communities have equal sizes $n_1=n_2=250$. We set $p_0=0.01$ and let $Q_{11}$ and $Q_{22}$ grow with $\delta$ simultaneously as in \eqref{eq:P11_P22_balanced_P}.
The left panel shows the spectrum of each operator, where we fix $\delta = 0.6$. 
The right panel shows how the distribution of each test statistic changes with $\delta$.
Also shown is the power curve, where the rejection rule is based on the $(1-\alpha)$-quantile of the null distribution.
The values of the test statistic correspond to the left $y$-axis, while the power values correspond to the right $y$-axis.
}
\label{fig:compare_nonbacktracking_normalized_adjacency_under_sparsity}
\end{figure}

\begin{figure}
\centering
\begin{minipage}{0.49\textwidth}
\centering
\includegraphics[width=1.0\textwidth]{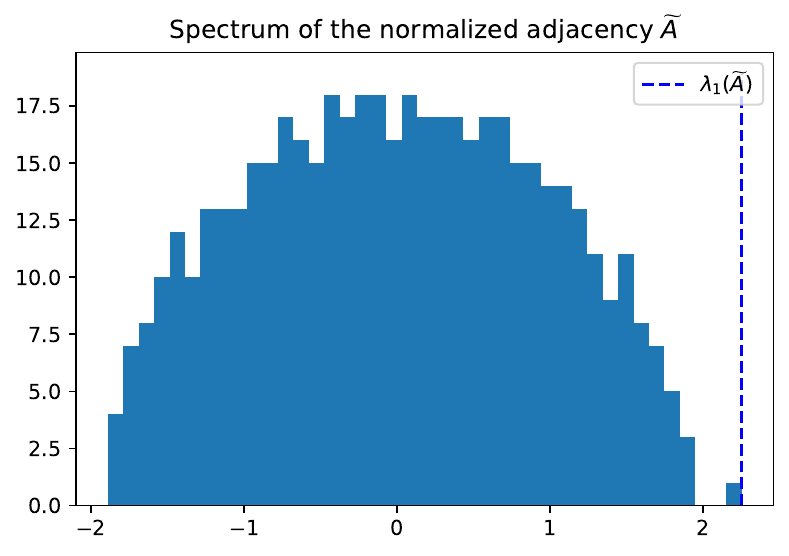}
\end{minipage}
\begin{minipage}{0.49\textwidth}
\centering
\includegraphics[width=1.0\textwidth]{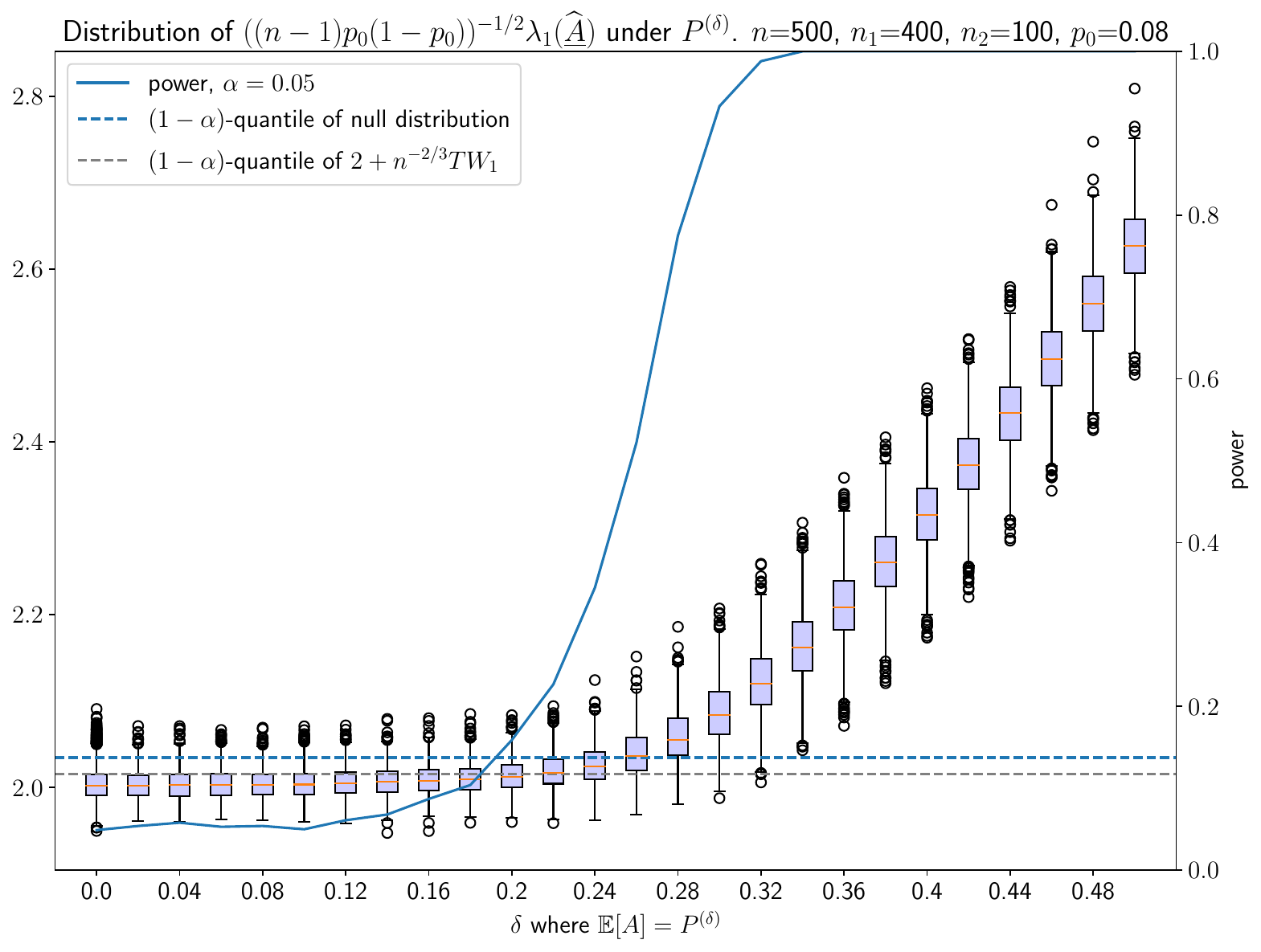}
\end{minipage}
\begin{minipage}{0.49\textwidth}
\centering
\includegraphics[width=1.0\textwidth]{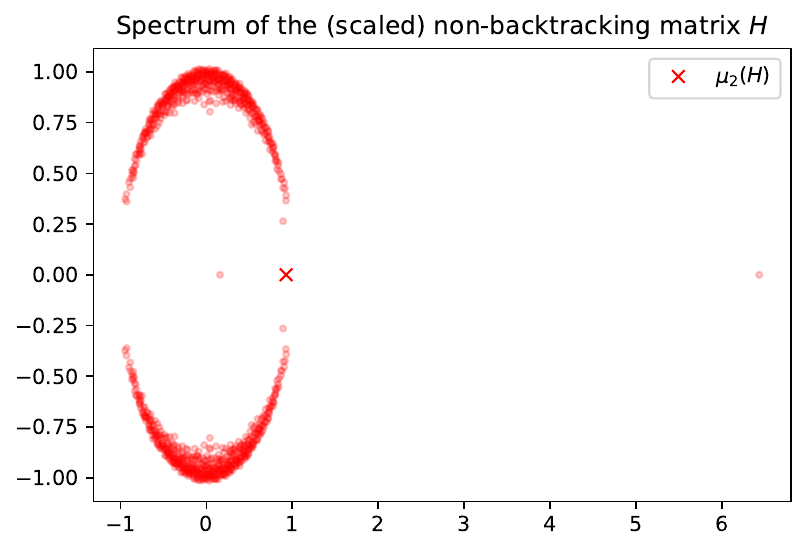}
\end{minipage}
\begin{minipage}{0.49\textwidth}
\centering
\includegraphics[width=1.0\textwidth]{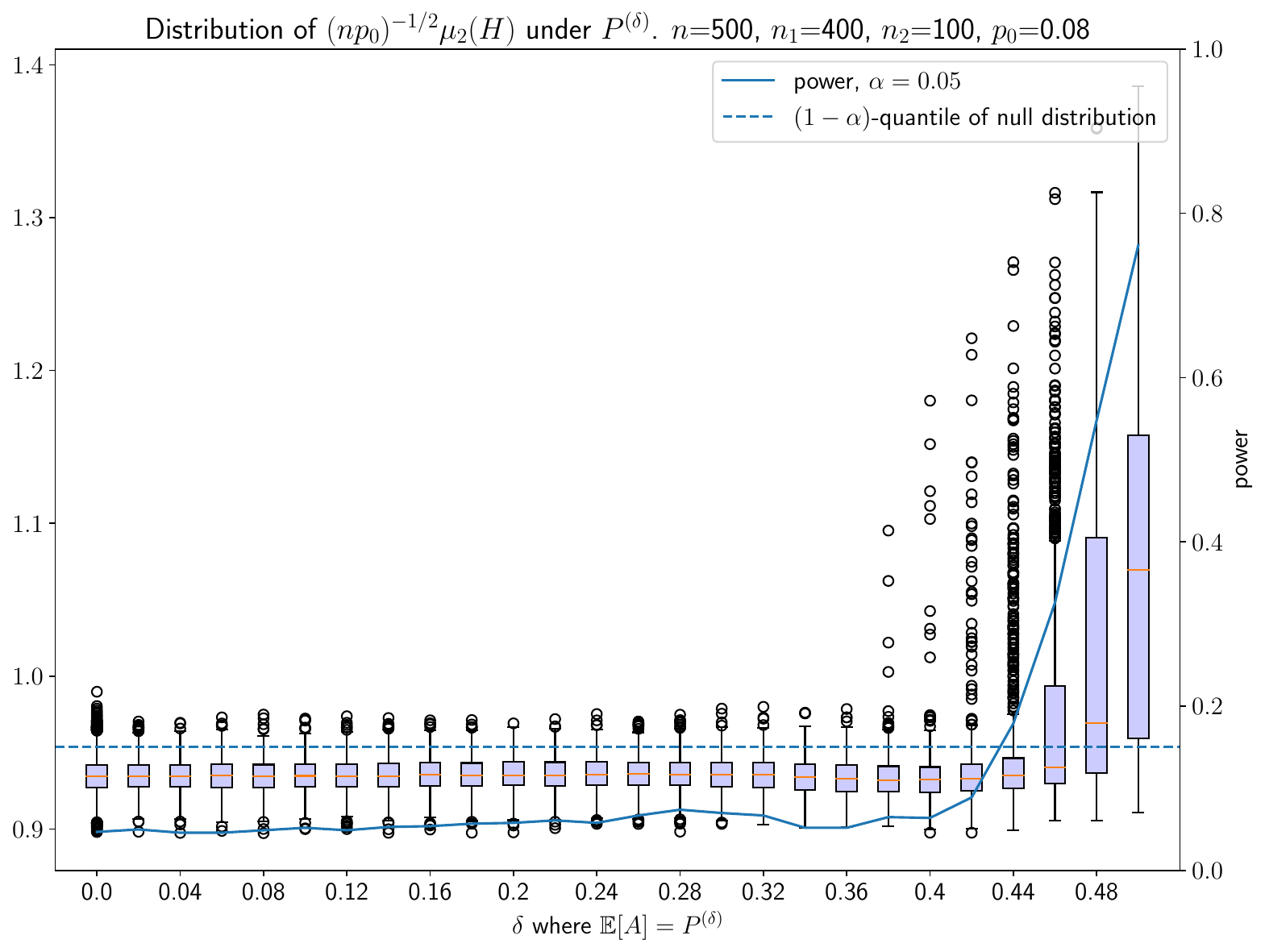}
\end{minipage}
\begin{minipage}{0.49\textwidth}
\centering
\includegraphics[width=1.0\textwidth]{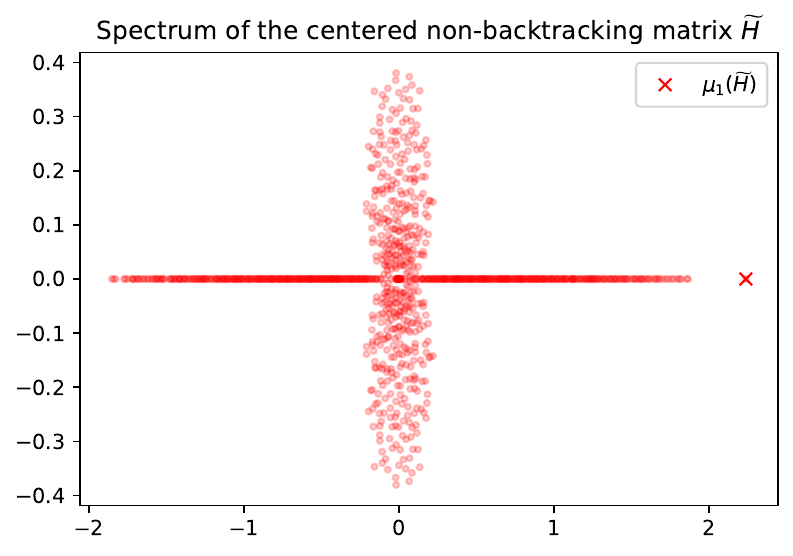}
\end{minipage}
\begin{minipage}{0.49\textwidth}
\centering
\includegraphics[width=1.0\textwidth]{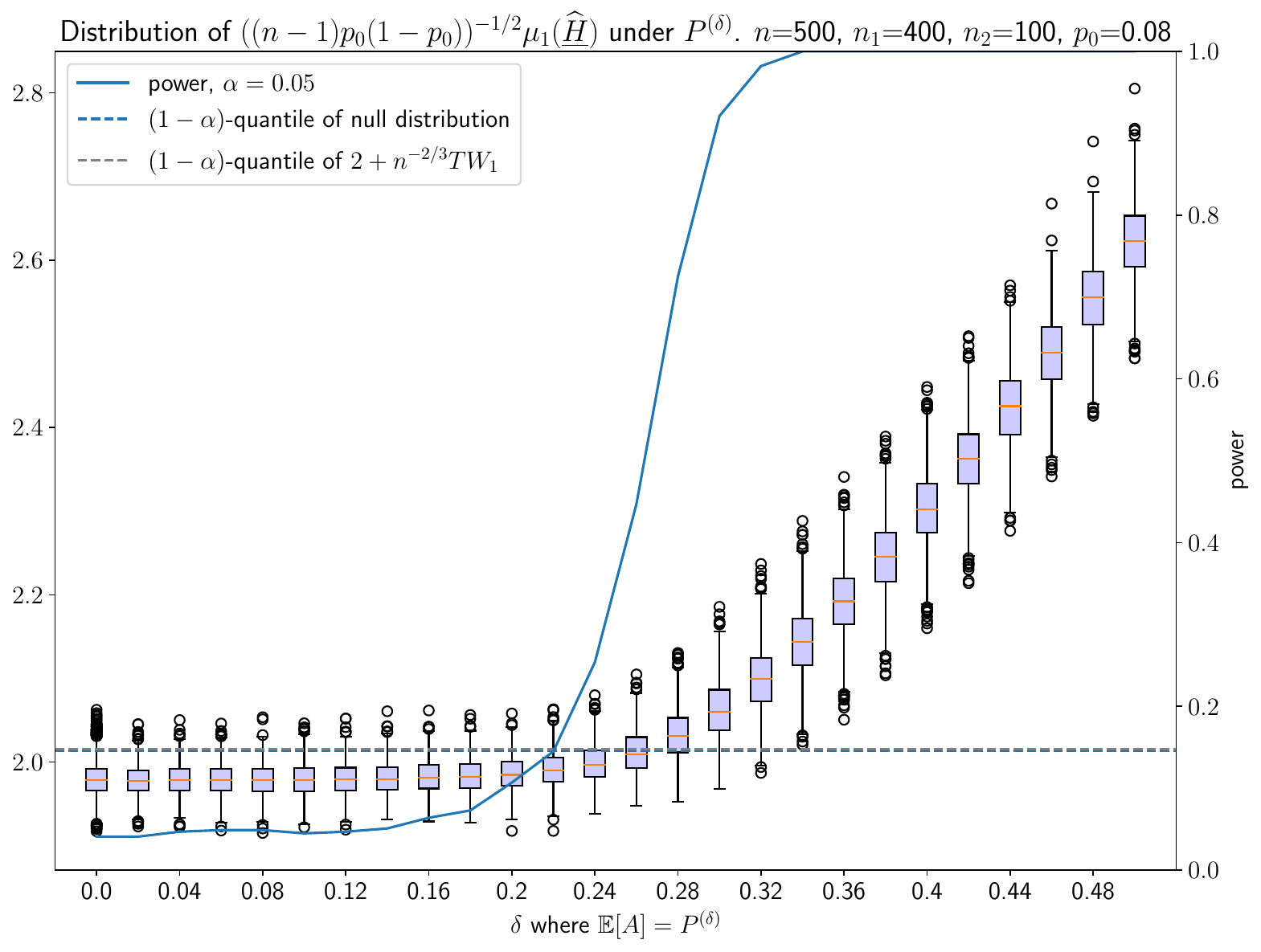}
\end{minipage}
\caption{Testing $H_1: K>1$ versus $H_0: K=1$ under community size imbalance $n_1 \ne n_2$. The larger community has $n_1=400$, and the smaller community has $n_2=100$. We set $p_0=0.08$ and let $Q_{11}$ and $Q_{22}$ grow with $\delta$ simultaneously \eqref{eq:P11_P22_balanced_P}. We fix $\delta = 0.4$ for the spectra on the left.
}
\label{fig:compare_nonbacktracking_normalized_adjacency_under_community_size_imbalance}
\end{figure}

\begin{figure}
\centering
\begin{minipage}{0.49\textwidth}
\centering
\includegraphics[width=1.0\textwidth]{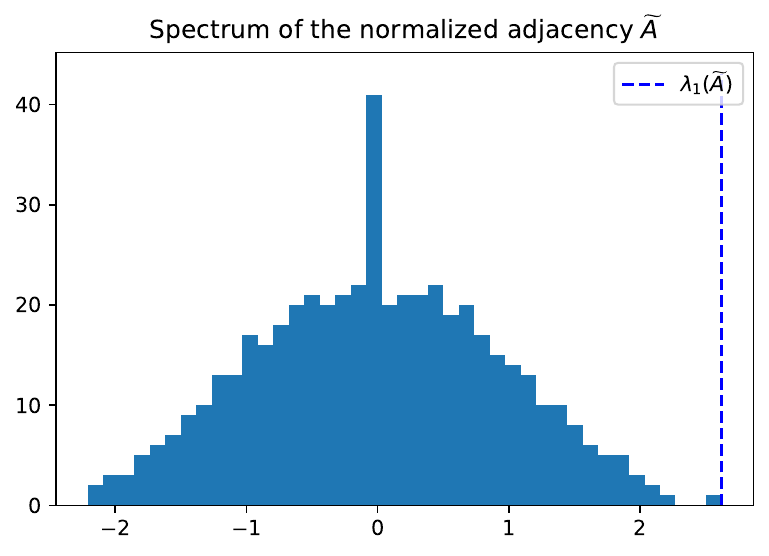}
\end{minipage}
\begin{minipage}{0.49\textwidth}
\centering
\includegraphics[width=1.0\textwidth]{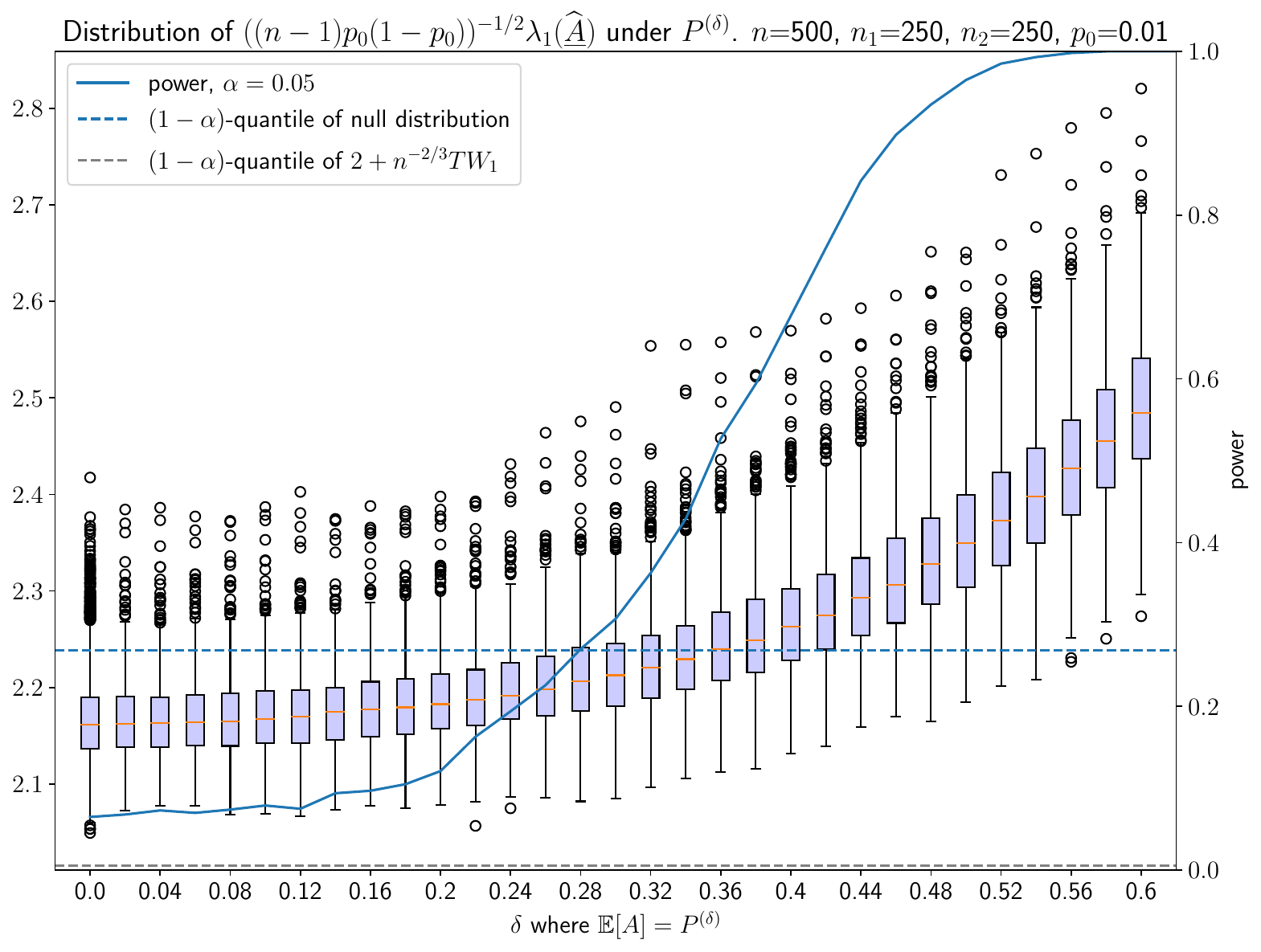}
\end{minipage}
\begin{minipage}{0.49\textwidth}
\centering
\includegraphics[width=1.0\textwidth]{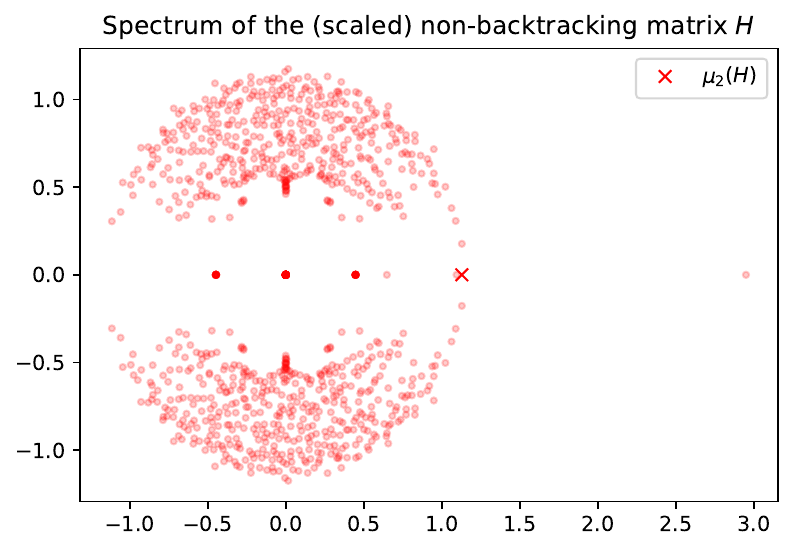}
\end{minipage}
\begin{minipage}{0.49\textwidth}
\centering
\includegraphics[width=1.0\textwidth]{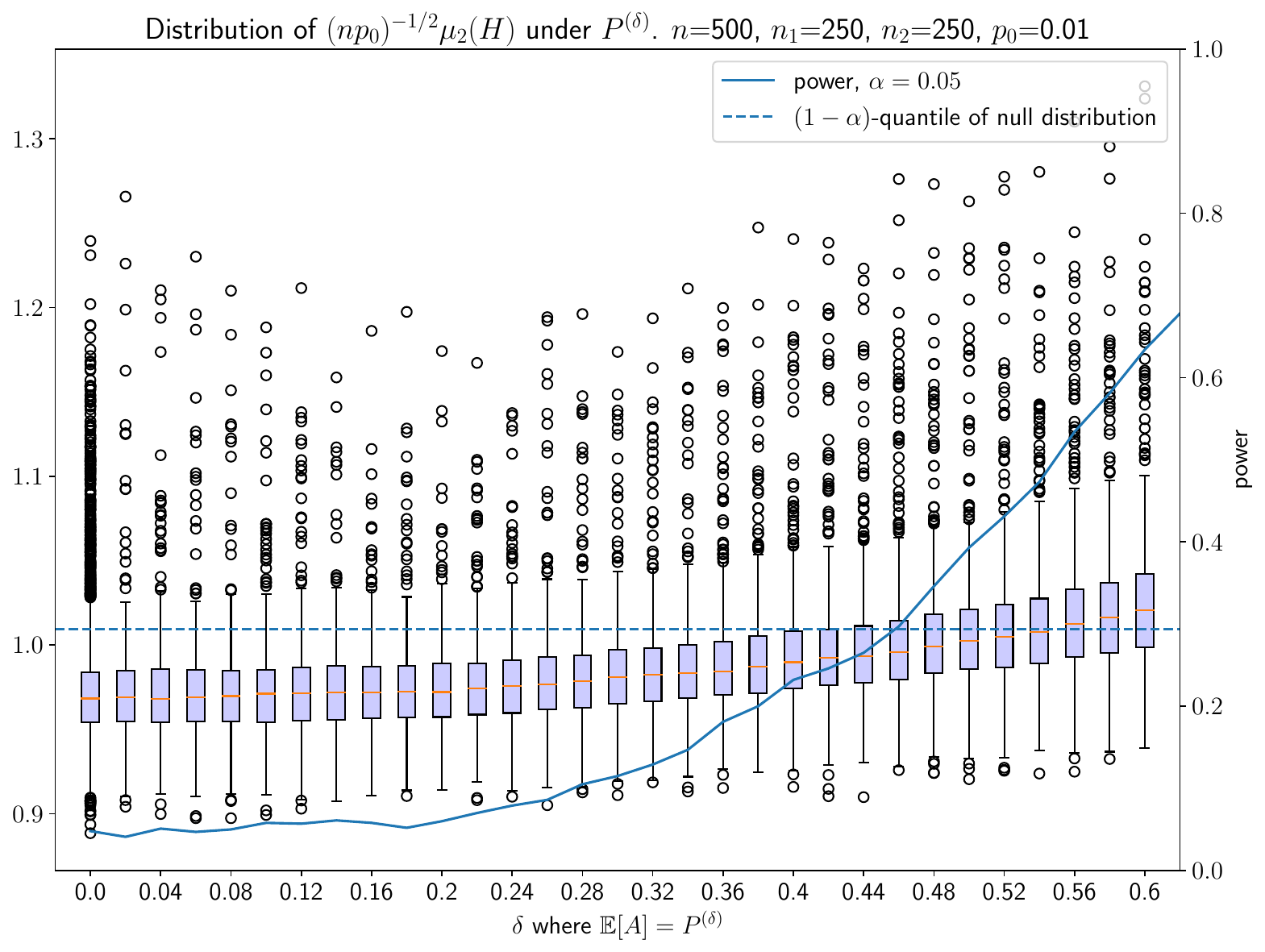}
\end{minipage}
\begin{minipage}{0.49\textwidth}
\centering
\includegraphics[width=1.0\textwidth]{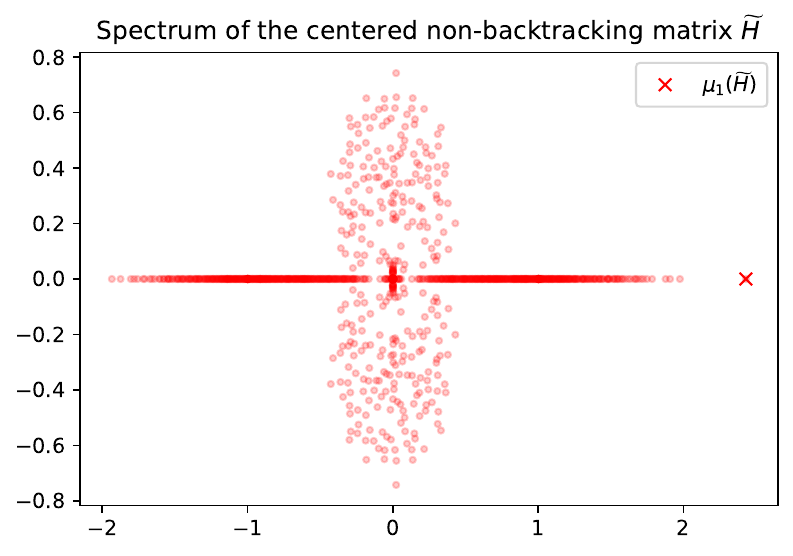}
\end{minipage}
\begin{minipage}{0.49\textwidth}
\centering
\includegraphics[width=1.0\textwidth]{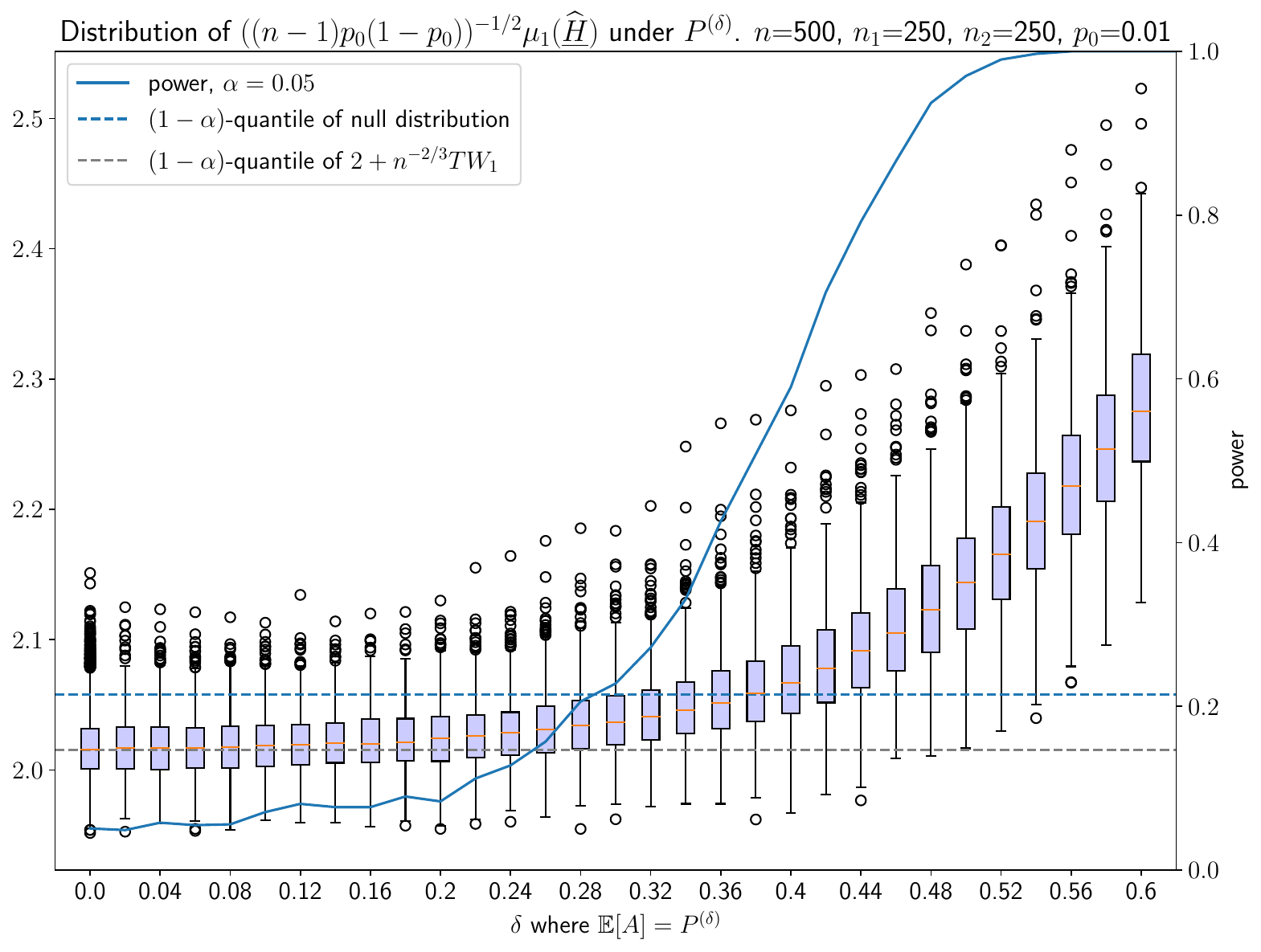}
\end{minipage}
\caption{Testing $H_1: K>1$ versus $H_0: K=1$ under imbalance caused by $Q_{11} \ne Q_{22}$. The two communities have equal sizes $n_1=n_2=250$. We set $p_0=0.01$ and let $Q_{22}$ grow with $\delta$ but $Q_{11}$ remains unchanged \eqref{eq:P11_P22_unbalanced_P}.
We fix $\delta = 0.6$ for the spectra on the left.
}
\label{fig:compare_nonbacktracking_normalized_adjacency_under_P_imbalance}
\end{figure}

We first study the spectrum of $\Hce$ \eqref{eq:definition_of_Hc} along with its closely related counterparts, $\Ace$ and $H$ \eqref{eq:definition_of_H}.
Our focus primarily revolves around their leading eigenvalues, $\mu_1(\Hce)$, $\lambda_1(\Ace)$, and $\mu_2(H)$, with proper scaling so that they are of order $O(1)$.
Additionally, we use these eigenvalues as test statistics and evaluate their power against $H_1:K>1$. To illustrate, for $\mu_1(\Hce)$, we simulate its null distribution under $P^{(0)}$, determine its ($1-\alpha$)-quantile, and subsequently for each $P^{(\delta)}$, reject $H_0$ when $\mu_1(\Hce)$ exceeds that quantile.
The tests based on $\lambda_1(\Ace)$ and $\mu_2(H)$ follow a similar manner.
See also Section~\ref{section:power_of_test_statistics}.

We present three scenarios to examine the behaviors of different test statistics under various levels of sparsity and imbalance.

\subsubsection{Sparse settings}
The first scenario is characterized by sparsity yet balance, with $p_0 = 0.01$, $n_1=n_2=250$, and let $Q_{11} = Q_{22}$ grow with $\delta$ simultaneously as in \eqref{eq:P11_P22_balanced_P}.
The results are illustrated in Figure~\ref{fig:compare_nonbacktracking_normalized_adjacency_under_sparsity}.

The left panel shows the typical spectrum of each operator when we set $\delta = 0.6$ and take $\widehat{p} = p$ (recall that the difference between $\widehat{p}$ and $p$ is minimal). 
Specifically, we examine one random realization of their ``population" version, $\An$, $H$, and $\Hn$.
Both the spectrum of $H$ and $\Hn$ successfully identify two communities based on their leading eigenvalues.
However, $\lambda_1(\An)$ fails to detect the signal as it is not distinct from the bulk.

The right panel shows how the distributions of the test statistics change with $\delta$.
The blue dashed line represents the $(1-\alpha)$-quantile of the empirical null distribution simulated under $P^{(0)}$, while the gray dashed line corresponds to the $(1-\alpha)$-quantile of the asymptotic limit distribution under the dense regime, i.e., $2 + n^{-2/3} TW_1^{-1}(1-\alpha)$.
The power of using $\lambda_1(\Ace)$ is significantly inferior to the test using the other two non-backtracking operators, as its distribution shows minimal movement when $\delta$ increases.
Finally, it is worth noticing that under this relatively sparse setting, the empirical distribution of $\lambda_1(\Ace)$ and $\mu_1(\Hce)$ are both positively biased against the Tracy-Widom limit, although the deviance of $\lambda_1(\Ace)$ is much more severe.

\subsubsection{Community-size Imbalance}
In this scenario, we reduce the sparsity level and introduce an imbalance in community size, with $p_0=0.08$, $n_1=100$, and $n_2=400$.
Again, we let $Q_{11} = Q_{22}$ grow with $\delta$ simultaneously as in \eqref{eq:P11_P22_balanced_P}.
The results are illustrated in Figure~\ref{fig:compare_nonbacktracking_normalized_adjacency_under_community_size_imbalance}.

We fix $\delta = 0.4$ in the left panel and examine the typical spectrum of $\An$, $H$, and $\Hn$.
Both statistics based on centered adjacency, $\lambda_1(\An)$ and $\mu_1(\Hn)$, are able to identify two communities. However, $\mu_2(H)$ cannot clearly indicate the existence of a second community, as it falls inside the bulk.
In the right panel, $\lambda_1(\Ace)$ and $\mu_1(\Hce)$ detect the signal much earlier than $\mu_2(H)$ as indicated by their power curves.
This is attributed to the enhanced signal as a result of centering.


Lastly, we observe that with decreased sparsity, the discrepancy between the quantile of the null distribution and the Tracy-Widom distribution is much smaller than in scenario \romannumeral1.
Nevertheless, we still see a noticeable difference in the case of $\lambda_1(\Ace)$, while for $\mu_1(\Hce)$, the quantile of its null distribution closely resembles the Tracy-Widom quantile, aligning with our observation in Figure~\ref{fig:null_distributions_with_scaling}.
On the other hand, the growth of $\lambda_1(\Ace)$ and $\mu_1(\Hce)$ looks almost the same when $\delta$ gets larger, aligning with our asymptotic power analysis in Proposition~\ref{proposition:difference_between_mu_1_H_and_lambda_1_alternative}.

\subsubsection{Edge-probability imbalance}
The third scenario explores a different type of imbalance, where the community sizes are equal, but the expected degrees for nodes in each community are unequal.
In particular, we set $p_0=0.01$ and $n_1= n_2 = 250$, and let $Q_{22}$ grow while $Q_{11}$ remains unchanged \eqref{eq:P11_P22_unbalanced_P}.
The results are illustrated in Figure~\ref{fig:compare_nonbacktracking_normalized_adjacency_under_P_imbalance}.

We fix $\delta=0.6$ in the left panel and look at the typical spectrum of $\An$, $H$ and $\Hn$. Again, both $\lambda_1(\An)$ and $\mu_1(\Hn)$ clearly indicate more than one community with their leading eigenvalues well separated from the bulk.
The spectrum of $H$ displays two circles with distinct radii.
However, $\mu_2(H)$ fails to fall outside the larger circle.
The right panel shows the notably inferior performance of $\mu_2(H)$ under this type of imbalance.

\subsection{Label estimation using the leading eigenvector of $\Hce$}
\label{section:spectral_clustering_numerical_results}

\begin{figure}
    \centering
    \includegraphics[width=1.0\textwidth]{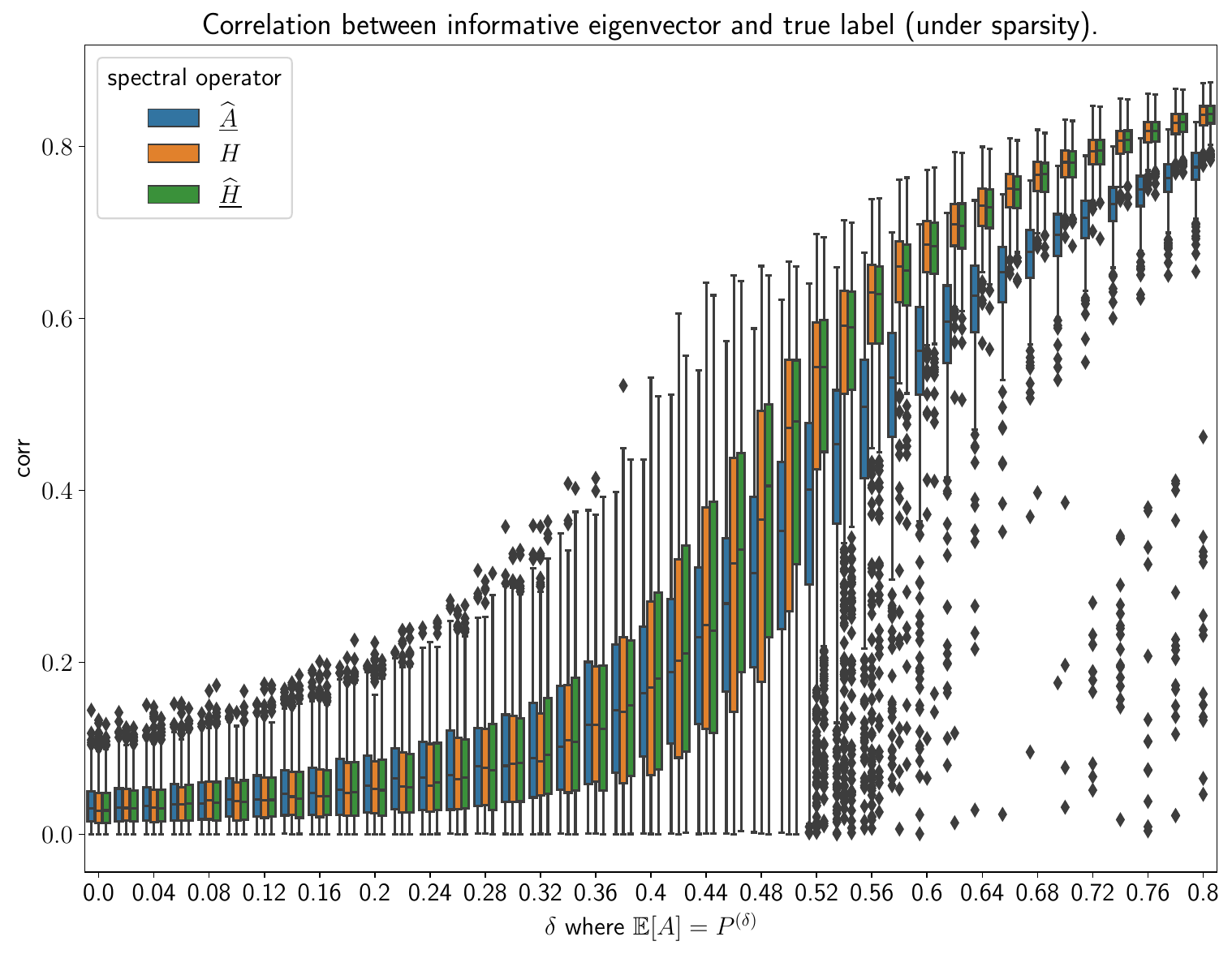}
    \caption{Correlation between informative eigenvectors and the true binary label.
    For each value of $\delta$, the model $P^{(\delta)}$ represents a two-block SBM with parameterized by \eqref{eq:parameterization_of_Q^delta}. 
    The setup mirrors that of Figure~\ref{fig:compare_nonbacktracking_normalized_adjacency_under_sparsity}, emphasizing sparsity.
    The two communities have equal sizes $n_1 = n_2 = 250$. We set $p_0 = 0.01$ and let $Q_{11}$ and $Q_{22}$ grow with $\delta$ simultaneously as in \eqref{eq:P11_P22_balanced_P}.
    }
    \label{fig:compare_spectral_clustering_under_sparsity}
\end{figure}

\begin{figure}
    \centering
    \includegraphics[width=1.0\textwidth]{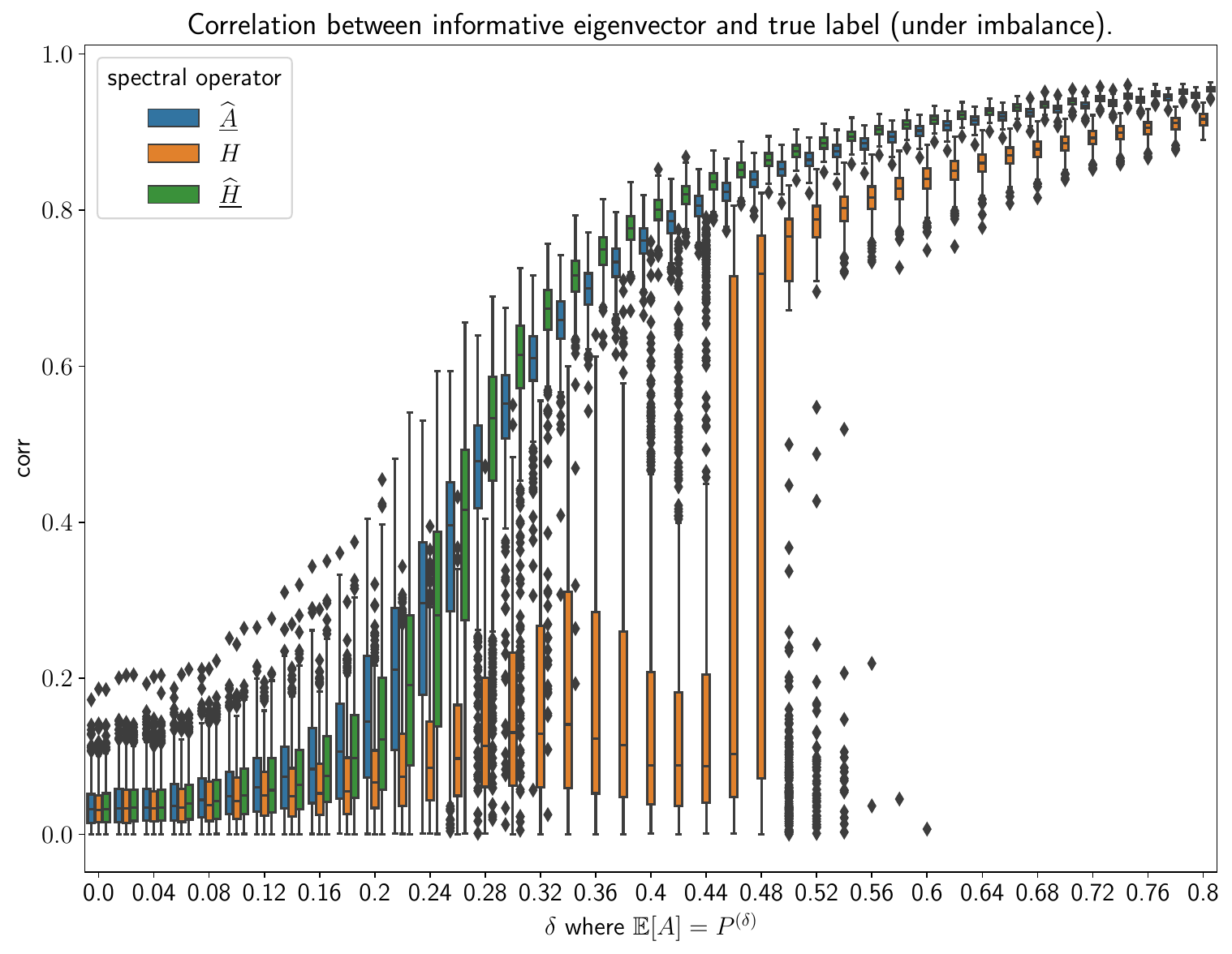}
    \caption{Correlation between informative eigenvectors and the true binary label.
    The setup mirrors that of Figure~\ref{fig:compare_nonbacktracking_normalized_adjacency_under_community_size_imbalance} and \ref{fig:compare_power_K0=1_n0=500_n1=400_n2=100_p0=0.08_balanced_P}, emphasizing community size imbalance.
    The two communities have sizes $n_1 = 400$ and $n_2=100$, respectively.
    We set $p_0 = 0.08$ and let $Q_{11}$ and $Q_{22}$ grow with $\delta$ simultaneously as in \eqref{eq:P11_P22_balanced_P}.
    }
    \label{fig:compare_spectral_clustering_under_imbalance}
\end{figure}

The eigenvector of $\Hce$ corresponding to the eigenvalue $\mu_1(\Hce)$ is of the form $\left(\begin{array}{c}
     x_1 \\
     \mu_1^{-1} x_1
\end{array}\right)$ for some $x_1 \in \mathbb{R}^n$,
so that
\begin{equation*}
    \Ace x_1 +  \mu_1 ^{-1} (\In - \Dce) x_1 = \mu_1 x_1
\end{equation*}
It is reasonable to expect a nontrivial correlation between the ``partial" eigenvector $x_1$ and the community label $\bm{g}$ under the alternative model with $K=2$.
Moreover, given that the chance of $\mu_1(\Hce)$ being informative is higher than $\lambda_1(\Ace)$ under sparsity and higher than $\mu_2(H)$ under imbalance, bi-partitioning using the eigenvector of $\Hce$ should outperform bi-partitioning using the eigenvector of $\Ace$ and $H$ under these settings, respectively.

To elaborate, we compare spectral clustering using the following informative eigenvectors.
For $\Ace$, we use its eigenvector corresponding to its largest eigenvalue.
For $H$, we use the first half of its eigenvector corresponding to its second-largest eigenvalue (in magnitude).
For $\Hce$, we use the first half of its eigenvector corresponding to its largest eigenvalue (in magnitude).

The advantage of using $x_1(\Hce)$ for spectral clustering is indeed demonstrated through simulations. 
In Figure~\ref{fig:compare_spectral_clustering_under_sparsity} and \ref{fig:compare_spectral_clustering_under_imbalance}, we assess the correlation between the informative eigenvectors and the true binary label under each $P^{(\delta)}$ defined in Section~\ref{section:model_parameterization}, with $\delta$ varies from 0 to 0.8. 
The setup of Figure~\ref{fig:compare_spectral_clustering_under_sparsity} mirrors that of Figure~\ref{fig:compare_nonbacktracking_normalized_adjacency_under_sparsity}, where we see that $H$ and $\Hce$ outperform $\Ace$ under sparsity.
The setup of Figure~\ref{fig:compare_spectral_clustering_under_imbalance} is the same as Figure~\ref{fig:compare_nonbacktracking_normalized_adjacency_under_community_size_imbalance}, and we observe that $\Ace$ and $\Hce$ outperform $H$ under imbalance.
This empirical evidence suggests that $x_1(\Hce)$ is a promising candidate for binary label estimation in challenging scenarios, thanks to the fact that its corresponding informative eigenvalues have a larger chance of being separated from the bulk.

\subsection{Powers of different test statistics}
\label{section:power_of_test_statistics}
Besides the previous statistics based on eigenvalues of $\Ace$, $H$, and $\Hce$, we additionally include some other widely-used statistics in the literature and compare their power.
The likelihood-ratio statistic was considered in \citet{wang2017likelihood}, and here we take a simplified form.
We view the ``correct" label assignment as given so that the label assignment probability is no longer included in the likelihood, and we no longer need to sum over all possible label assignments. 
For the test $H_1: K= K_0+1$ vs. $H_0: K= K_0$, given the label assignment $\bm{g}_{K_0}$ under $H_0$ and $\bm{g}_{K_0+1}$ under $H_1$, we define the likelihood ratio statistic $L_{K_0,K_0+1}$ as follows,
\begin{equation*}
    L_{K_0,K_0+1} = \log \frac{\sup_{\theta \in \Theta_{K_0+1}} f_{K_0+1} (\bm{g}_{K_0+1}, A; \theta)}{\sup_{\theta \in \Theta_{K_0}} f_{K_0} (\bm{g}_{K_0}, A; \theta)},
\end{equation*}
where the likelihood $f_k(\bm{g}, A; \theta)$ is defined as
\begin{equation*}
    f_k(\bm{g}, A; \theta) = \left(\prod_{a=1}^k \prod_{b=1}^k (Q_{a,b}^{(k)}) ^{O_{a,b}(\bm{g})} (1-Q^{(k)}_{a,b}) ^{n_{a,b}(\bm{g}) - O_{a,b}(\bm{g})} \right)^{1/2},
\end{equation*}
where $O_{a,b}(\bm{g}) = \sum_{i=1}^n \sum_{j\ne i}\bm{1}\{g_i=a, g_j=b\} A_{ij}$ and $n_{a,b}(\bm{g}) = \sum_{i=1}^n \sum_{j\ne i}\bm{1}\{g_i=a, g_j=b\}$.
When $k\ge 2$ and $\bm{g}_k$ is not given, we plug in its estimator $\widehat{\bm{g}}_k$.
Then the maximizer $\theta$ is taken as the profile MLE $\widehat{Q}^{(k)}$ given $\widehat{\bm{g}}_k$.

Another candidate we consider as a test statistic is the $(K_0+1)$th smallest eigenvalue of the Bethe-Hessian matrix, namely $\lambda_{n-K_0}(H(r))$.
The Bethe-Hessian matrix $H(r)$ is a real symmetric matrix defined as
\begin{equation*}
    H(r) = (r^2 - 1)\In - rA + D
\end{equation*}
where $r\in \mathbb{R}$ is a scale parameter.
Following the choices in \citet{le2022estimating}, we consider $r_a = \sqrt{n^{-1} \sum_i d_i}$ and $r_m = \sqrt{\frac{\sum_{i=1}^n d_i^2}{\sum_{i=1}^n d_i} - 1}$.
We also include the choice $r = \widehat{\zeta}$ proposed by \citet{hwang2023estimation}, as defined in their Algorithm 4.1.
We set the hyperparameters $c=0.3$ and $\widehat{\epsilon}_{\delta}=0.2$ same as in \citet{hwang2023estimation}.

Additionally, we include the $(K_0+1)$th largest eigenvalue of the adjacency $A$ as a reference point.

Finally, we include the triangle count test statistic proposed in \citet{gao2017testing1, gao2017testing2}.
For SBM alternatives, they proposed the following simplified version with normalization:
\begin{equation*}
    \sqrt{\widehat{T}_{\Delta}} - \sqrt{\widehat{p}^3}, \quad \textrm{ where } \widehat{T}_{\Delta} = \left[6 {n \choose 3}\right]^{-1} \mathrm{tr}(A^3)
\end{equation*}
represents the triangle frequency. A similar test is also considered in \citet{verzelen2013community}. Intuitively, we reject the null when observing an unusually high triangle frequency.



For each of the aforementioned test statistics, we establish their one-sided rejection regions using ($1-\alpha$)-quantiles of their null distributions under $P^{(0)}$, simulated through Monte Carlo with 10,000 replicates.
Subsequently, we evaluate the power of each test statistic across various values of $\delta$, ranging from $0$ to $1$ with increment of $0.02$.
For each specific $\delta$ value, we conduct 1000 independent tests and empirically estimate the power by calculating the average rejection rate.
In other words, we generate 1000 independent replicates of the network under $P^{(\delta)}$ and test each statistic against its null distribution. 

Indeed, the tests we are comparing here are their ``oracle" version, as the null distribution under $P^{(0)}$---which we utilize to establish the rejection regions---is generally unknown in practice.
The primary goal here is to assess and compare the sensitivity of each statistic's distribution to the alternative hypothesis.
Nevertheless, in practice, many of these statistics can be reasonably approximated using their asymptotic distributions. For instance, this holds for the triangle count, $\lambda_1(\Ane)$, and our proposed $\mu_1(\Hce)$ in the case of $K_0=1$.
Alternatively, as a practical approach, we can resort to a bootstrap approximation.
Namely, we can use the distribution of the test statistic in networks generated from $\widehat{P}$ to approximate the null distribution under $P^{(0)}$, with the conviction that $\widehat{P}$ closely resembles $P^{(0)}$ under $H_0$. 
While this conviction is especially plausible when $K_0=1$, it is less so when $K_0>1$ and estimating labels $\bm{g}$ is needed.

\subsubsection{Testing $H_1: K>1$ versus $H_0: K=1$}
The setup of Figure~\ref{fig:compare_power_K0=1_n0=500_n1=250_n2=250_p0=0.01_sparsity} mirrors that of Figure~\ref{fig:compare_nonbacktracking_normalized_adjacency_under_sparsity}, emphasizing sparsity. 
We set $n_1 =n_2=250$ and let $Q_{11}=Q_{22}$ grow symmetrically with $\delta$ following \eqref{eq:P11_P22_balanced_P}.
Spectral statistics with or without centering perform similarly in this case.
Meanwhile, both the Bethe-Hessian and the non-backtracking matrix slightly outperform the adjacency matrix. 
Another observation here is that it is not appropriate to use the Tracy-Widom limit in a very sparse setting like this, as doing so results in a Type-I error well above the specified level of $\alpha = 0.05$.
As is also noticed in \citet{bickel2016hypothesis, lei2016goodness}, a bootstrap correction for the Tracy-Widom distribution will be necessary.
Nevertheless, the divergence appears less severe for $\mu_1(\Hce)$, thanks to the partial cancellation effect in Section~\ref{section:constant_degree_regime_null_distribution}.
Interestingly, the triangle count test significantly underperforms the others under this symmetric setting. 

Figure~\ref{fig:compare_power_K0=1_n0=500_n1=400_n2=100_p0=0.08_balanced_P} shares the same setting as in Figure~\ref{fig:compare_nonbacktracking_normalized_adjacency_under_community_size_imbalance}, showcasing imbalance in community sizes. We set $n_1 =400$ and $n_2=100$, and let $Q_{11}=Q_{22}$ grow simultaneously with $\delta$ following \eqref{eq:P11_P22_balanced_P}.
Aligned with our reasoning in Section~\ref{section:centering_enhances_signal}, with unequal per-community expected degree,
operators constructed from $\Ace$ outperform those constructed from $A$, including the Bethe-Hessian matrix.
Under this relatively dense setting, the performance of $\mu_1(\Hce)$ is similar to $\lambda_1(\Ace)$, if using their true null distributions.
However, it is still not appropriate to use the Tracy-Widom limit to set the rejection region for $\lambda_1(\Ace)$, although the same limit fits the distribution of $\mu_1(\Hce)$ perfectly in this case.
In contrast to its lagging performance in Figure~\ref{fig:compare_power_K0=1_n0=500_n1=250_n2=250_p0=0.01_sparsity}, the triangle count test becomes the top performer in this case. The reason, we believe, is that the triangle frequency becomes more sensitive under this setting with unequal expected degrees, as a minority of nodes with higher degrees disproportionately contribute to the overall triangle frequency.

In Figure~\ref{fig:compare_power_K0=1_n0=500_n1=400_n2=100_equaldegree}, we again set $n_1 = 400$ and $n_2=100$, but we keep the expected degree equal in two communities by setting $Q_{11}$ and $Q_{22}$ as \eqref{eq:P11_P22_constant_degree}. 
Operators with or without centering now perform similarly as we expected.
Notably, we again fail to control the Type-I error if we use the Tracy-Widom limit to set the rejection region for $\lambda_1(\Ace)$.
Besides, the performance of the triangle count test once again lags behind the others, as we keep the expected degree equal for each node.

In summary, our proposed $\mu_1(\Hce)$ statistic performs comparably to that of the original non-backtracking matrix or the Bethe-Hessian matrix under sparse settings.
Conversely, its behavior resembles that of $\lambda_1(\Ace)$, in scenarios characterized by imbalance.
The effectiveness of the triangle count test varies significantly based on the level of imbalance in the alternative model.
It demonstrates strong sensitivity when a minority of nodes possess degrees higher than the average; however, it is dramatically less competitive when all nodes exhibit similar degrees.

\begin{figure}
    \centering
    \includegraphics[width=1.0\textwidth]{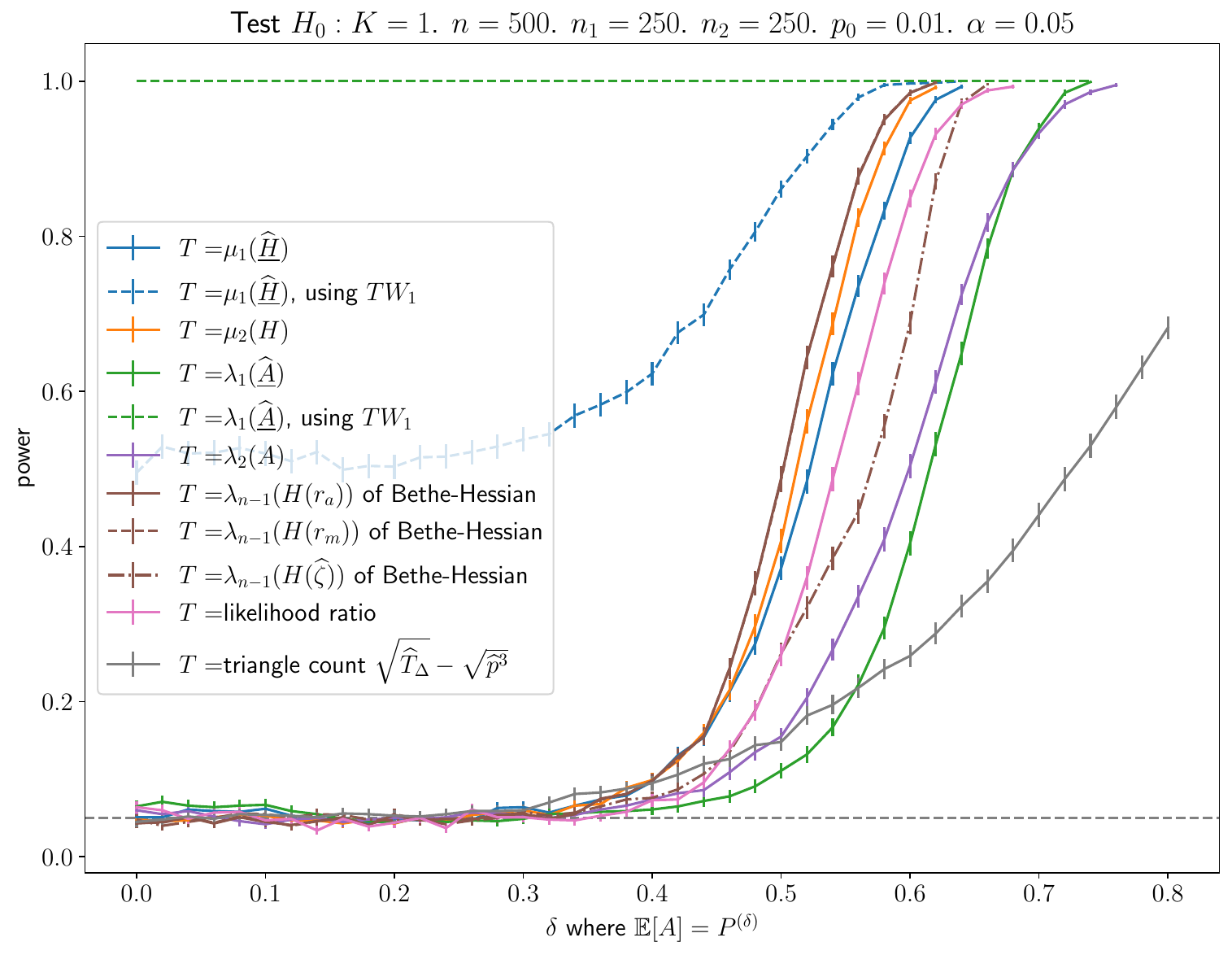}
    \caption{Power curves of different statistics testing $H_0: K=1$.
    Here $n_1 = n_2$ and $Q_{11} = Q_{22}$ grow symmetrically with $\delta$ following \eqref{eq:P11_P22_balanced_P}.
    The setting is the same as Figure~\ref{fig:compare_nonbacktracking_normalized_adjacency_under_sparsity}, emphasizing sparsity.
    The error bars represent $\pm$ the standard deviation of the average rejection rate, calculated as $\sqrt{ \widehat{p} (1-\widehat{p}) / 1000}$.
    }\label{fig:compare_power_K0=1_n0=500_n1=250_n2=250_p0=0.01_sparsity}
\end{figure}

\begin{figure}
    \centering
    \includegraphics[width=1.0\textwidth]{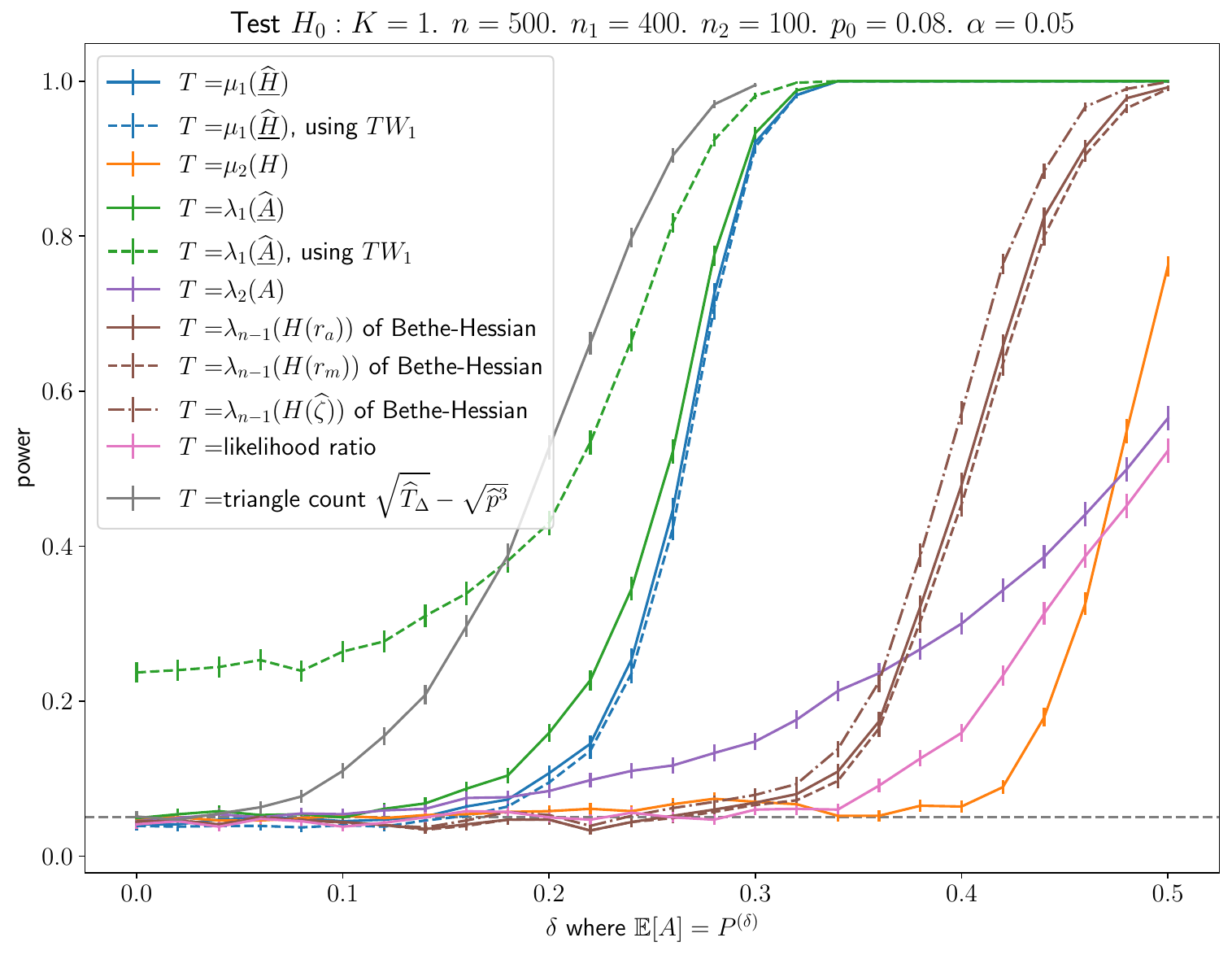}
    \caption{Power curves of different statistics testing $H_0: K=1$.
    Here $n_1\ne n_2$ but $Q_{11} = Q_{22}$ grow simultaneously with $\delta$ following \eqref{eq:P11_P22_balanced_P}.
    The setting is the same as Figure~\ref{fig:compare_nonbacktracking_normalized_adjacency_under_community_size_imbalance}, showcasing imbalance in community sizes.
    }\label{fig:compare_power_K0=1_n0=500_n1=400_n2=100_p0=0.08_balanced_P}
\end{figure}

\begin{figure}
    \centering
    \includegraphics[width=1.0\textwidth]{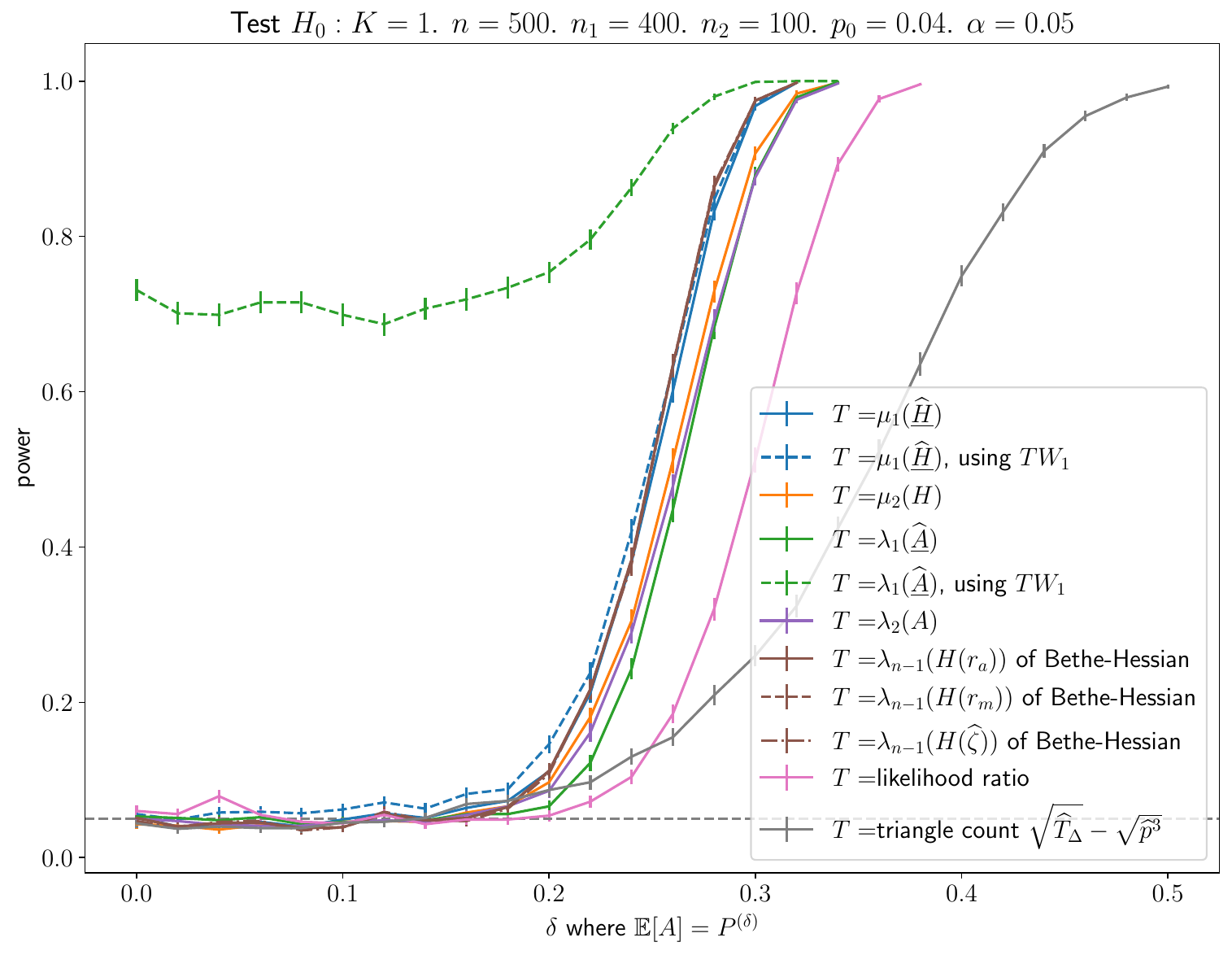}
    \caption{Power curves of different statistics testing $H_0: K=1$.
    Here $n_1\ne n_2$, but we keep the expected degree equal in two communities by setting $Q_{11}$ and $Q_{22}$ as \eqref{eq:P11_P22_constant_degree}. 
    }
    \label{fig:compare_power_K0=1_n0=500_n1=400_n2=100_equaldegree}
\end{figure}

\begin{figure}
    \centering
    \includegraphics[width=1.0\textwidth]{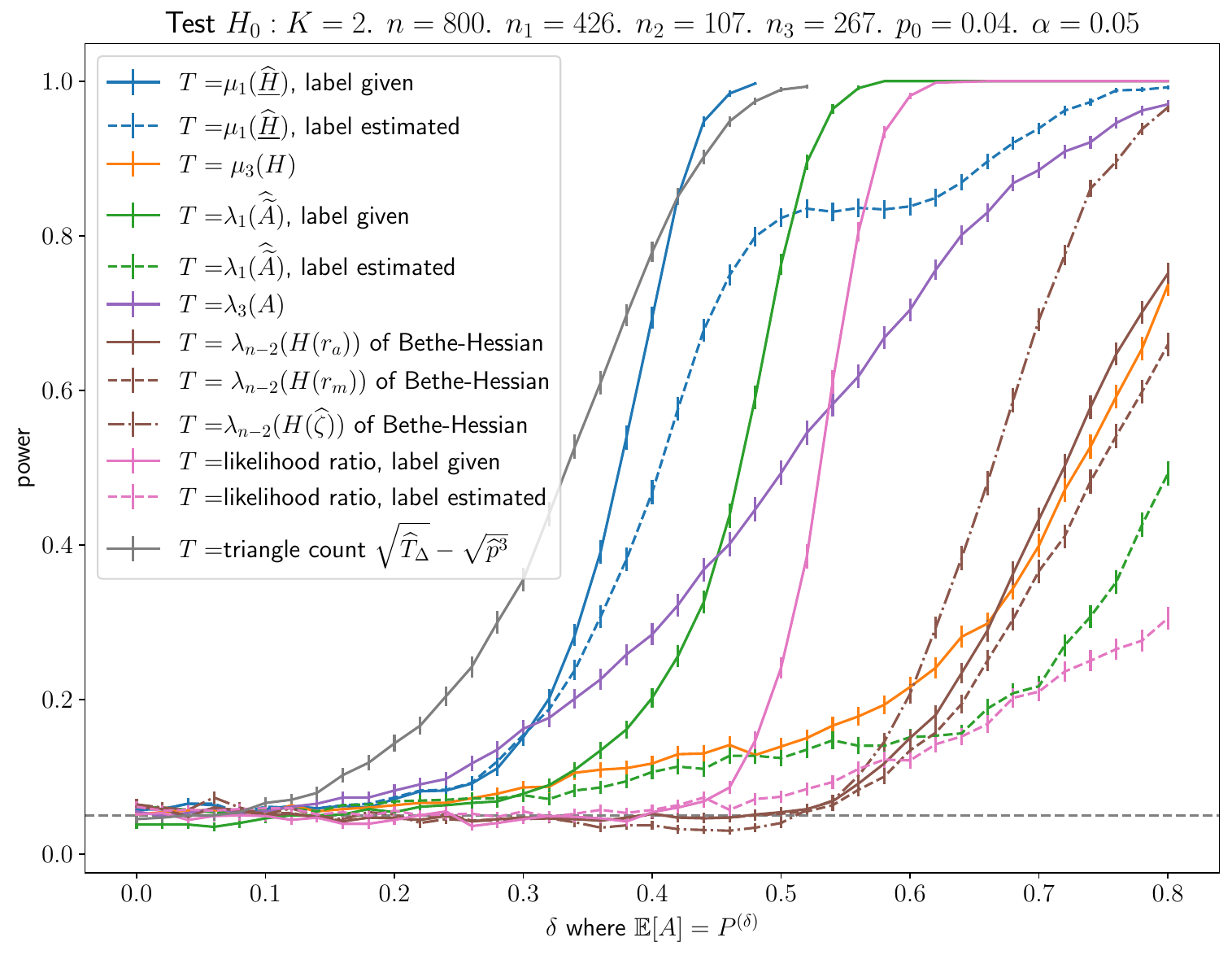}
    \caption{Power curves of different statistics testing $H_0: K=2$, with model under $H_1: K=3$ parameterized by \eqref{eq:model_K=3}.
    Here, we set $n_1=426$, $n_2=107$, $n_3=267$, and let $Q_{11} = Q_{22}$ grow simultaneously with $\delta$ following \eqref{eq:P11_P22_balanced_P}.
    }
    \label{fig:compare_multiple_statistics_and_methods_power_K0=2_n0=800_n1=267_n2=426_n_3=107_p0=0.04}
\end{figure}

\subsubsection{Testing $H_1: K>2$ versus $H_0: K=2$}
\label{section:Test_H0_K=2}
Although our theoretical analysis in Section~\ref{section:proposed_operator:nonbacktracking_with_centering} mainly focuses on the null hypothesis $H_0: K=1$, 
let us also take a look at the empirical performance of $\mu_1(\Hce)$ when testing $H_0: K=2$.
Here, the network generating model under $H_1: K>2$ is an SBM with $K=3$, and block-wise edge probability parameterized as follows
\begin{equation}
    Q^{(\delta)} = \left( \begin{array}{ccc}
        Q_{11}(\delta) & Q_{12}(\delta) & 0.3 p_0 \\
        Q_{12}(\delta) & Q_{22}(\delta) & 0.3 p_0\\
        0.3 p_0 & 0.3 p_0 & p_0
    \end{array} \right), \label{eq:model_K=3}
\end{equation}
where $Q_{12}(\delta) = p_0 (1-\delta)$, and $Q_{11}=Q_{22}$ as in \eqref{eq:P11_P22_balanced_P}.
Like before, when $\delta =0$, this reduces to an SBM with $K=2$.

For statistics that require an estimate of $P$ assuming $K=2$, we provide two versions of $\widehat{P}$: one with known label assignments $\bm{g}$ and the other with estimated labels $\widehat{\bm{g}}$. The former estimator closely approximates $P$, while the latter has a significantly higher error.
For a fair comparison, we use the same $\widehat{P}$ for the latter case, with $\widehat{\bm{g}}$ estimated by $K$-means on the first two eigenvectors of $A$ with $K=2$.
While this may not represent an optimal estimate of $P$, it enables us to observe the impact of an inaccurate $\widehat{P}$ on methods that rely on it.

Figure~\ref{fig:compare_multiple_statistics_and_methods_power_K0=2_n0=800_n1=267_n2=426_n_3=107_p0=0.04} shows the power of different test statistics in a setting with moderate sparsity and imbalance in community sizes. We set $n_1=426$, $n_2=107$, $n_3=267$, and $p_0=0.4$.
The triangle count statistic is the top performer, again benefiting from the unbalanced expected degrees. 
Also, due to the imbalance, the two operators with accurate centering, $\mu_1(\Hce)$ and $\lambda_1(\Ane)$ using $\widehat{P}$ with given labels $\bm{g}$, significantly outperform the rest. 
Meanwhile, note that $\mu_1(\Hce)$ detects signal earlier than $\lambda_1(\Ane)$ at this level of sparsity.
For the Bethe-Hessian matrices, the choice $\widehat{\zeta}$ \citep{hwang2023estimation} does outperform $r_a$ and $r_m$ in this setting.

Remarkably, for both operators $\mu_1(\Hce)$ and $\lambda_1(\Ane)$, centering using the less accurate $\widehat{P}$ with estimated labels $\widehat{\bm{g}}$ does reduce the power to some extent for finite $n$, even though we still anticipate both tests to be asymptotically powerful regardless of the estimator $\widehat{P}$.
Intuitively, a relatively strong signal is needed to recover the unknown labels, which is a practical challenge for all methods that rely on an estimator $\widehat{P}$ when $K \ge 2$.

\section{Determining $K$ in Practice}
\label{section:determine_K_in_practice}
We first discuss how to apply our goodness-of-fit test to determine $K$ in practice, and then demonstrate its application using a real-world network.

\subsection{Recursive or sequential testing}
\label{section:sequential_or_recursive_testing}
There are two main approaches for determining $K$ based on a goodness-of-fit test.
The first is through recursive bi-partitioning, which is used in \citet{bickel2016hypothesis}.
For a comprehensive overview of this type of approach, refer to \citet{li2022hierarchical} for a general framework.
Starting with the network $A$, we conduct a goodness-of-fit test for an Erd\H{o}s–R\'{e}nyi model to test $H_1: K>1$ versus $H_0: K=1$. If we reject $H_0$, we then bi-partition the network $A$ into two sub-networks, for instance, using spectral clustering with the leading eigenvector of $\Hce$, and subsequently repeat the same test recursively on these two sub-networks.
The algorithm concludes when $H_0$ is not rejected for any sub-network,
and naturally yields a hierarchical clustering structure in the end.

The second approach is similar to \citet{lei2016goodness}, where one performs the goodness-of-fit test for $K = 1, 2, \ldots$, until failing to reject $H_0$. Formally, an estimate of $K$ is given by
\begin{equation*}
    \widehat{K} = \inf\{K_0 \ge 1: \mu_1(\Hce_{(K_0)}) < t_{n, K_0}\},
\end{equation*}
where $t_{n,K_0}$ is some threshold based on the null distribution of $\mu_1(\Hce_{(K_0)})$.

Like in \citet{bickel2016hypothesis}, the recursive testing approach is based on the heuristic that each round of bi-partitioning can split the network into two subnetworks, each of which is a disjoint union of distinct sets of the true blocks with high probability.
Apparently, there is no guarantee for that.
Nevertheless, we still recommend recursive bi-partitioning over sequential testing, especially due to the challenge of determining $t_{n, K_0}$ for $K_0>1$ in practice.
Firstly, the theoretical analysis in Section~\ref{section:proposed_operator:nonbacktracking_with_centering} primarily focuses on testing $H_0:K=1$. Crucially, it is convenient and efficient to use the Tracy-Widom distribution as the null distribution, but this limiting distribution holds for $\mu_1(\Hce)$ only when $A \sim G(n,p)$ and $P_{i,j} = p$, i.e., under $H_0: K=1$.
Secondly, the task of estimating $P$ becomes more challenging for $K_0>1$ due to the need to estimate $\bm{g}$, introducing potentially large error in $\widehat{P}$. 
Remember that we have to rely on the bootstrap distribution to determine $t_{n,K_0}$ for $K_0>1$ in practice.
The large error in $\widehat{P}$ may compromise the type-I error control of the test, since the bootstrap distribution of the test statistic under $\widehat{P}$ may significantly deviate from its true null distribution.
Moreover, the error in $\widehat{P}$ may also impair the power to some extent for finite $n$, as evident from our observations in Figure~\ref{fig:compare_multiple_statistics_and_methods_power_K0=2_n0=800_n1=267_n2=426_n_3=107_p0=0.04}.

Lastly, the hierarchical clustering structure, resulting from recursive bi-partitioning, is valuable in practice for its interpretability.
As mentioned in \citet{li2022hierarchical}, it can also be more accurate than $K$-way clustering in certain regimes.

\subsection{The political blog data}
In line with \citet{lei2016goodness}, we employ the political blog data \citep{adamic2005political} as a real-world example to elucidate the determination of $K$.
This dataset captures hyperlinks between web blogs shortly before the 2004 US presidential election and has been extensively utilized in the network community detection literature.
Following common practice, we consider the largest connected component,  comprising 1222 nodes categorized into two political orientations with sizes 586 and 636, respectively.

For recursive testing, we utilize $\lambda_1(\Ace)$ and $\mu_1(\Hce)$, setting the Type-I error rate at 0.001 for each recursive test.
Critical values are obtained from the quantiles of the Tracy-Widom distribution and the simulated null distribution, respectively.
We halt further recursions if subsequent subnetworks have a size smaller than 20. 
Additionally, we also provide an estimate $\widehat{K}$ using the original non-backtracking matrix $H$, by simply counting the number of real nontrivial eigenvalues outside the bulk (excluding the largest).

All the estimated $\widehat{K}$ are listed in Table~\ref{Table:estimated_K_political_blog_data}.
Counting the eigenvalues of $H$ results in the smallest $\widehat{K}$. In the case of $\lambda_1(\Ace)$ and $\mu_1(\Hce)$, utilizing simulated critical values yields a smaller $\widehat{K},$ given that these critical values are significantly larger than the Tracy-Widom quantile when $n$ is not sufficiently large.

Lastly, during recursive testing with $\mu_1(\Hce)$, we partitioned the network into 14 groups, using the leading eigenvector of $\Hce$.
The confusion matrix, which compares the final estimated label assignment with the true label (indicating political orientation), is presented in Table~\ref{Table:confusion_matrix}.
Similar to the findings in \citet{lei2016goodness}, 13 of these estimated groups predominantly comprise nodes from a single true community. The additional estimated group includes nodes with very small degrees, making the recovery of their community memberships difficult.

\begin{table}[ht]
    \centering
    \caption{Estimated $\widehat{K}$ for political blog data}
    \label{Table:estimated_K_political_blog_data}
    \begin{tabular}{|c|c|c|c|c|}
    \hline
         $\lambda_1(\Ace)$ ($TW_1$) & $\lambda_1(\Ace)$ (simulated) & $\mu_1(\Hce)$ ($TW_1$) & $\mu_1(\Hce)$ (simulated) & Counting $\mu_k(H)$\\\hline
         19 & 17 & 15 & 14 & 8 \\\hline
    \end{tabular}
\end{table}

\begin{table}[ht]
    \centering
    \caption{Confusion matrix of estimated labels by recursive spectral clustering using $x_1(\Hce)$}
    \label{Table:confusion_matrix}
    \begin{tabular}{|c|c|c|c|c|c|c|c|c|c|c|c|c|c|c|}
    \hline
        \diagbox{Truth}{Estimate} & 0 & 1 & 2 & 3 & 4 & 5 & 6 & 7 & 8 & 9 & 10 & 11 & 12 & 13\\\hline
         0 & 76 & 3 & 0 & 3 & 0 & 55 & 29 & 78 & 40 & 97 & 21 & 176 & 4 & 4\\\hline
         1 & 18 & 42 & 34 & 86 & 25 & 2 & 0 & 0 & 0 & 5 & 1 & 124 & 255 & 44 \\\hline
    \end{tabular}
\end{table}

\section{Discussions}
\label{section:discussions}
\subsection{Connection to non-backtracking walks}
Recall that for the original non-backtracking spectrum operator $H$ \eqref{eq:definition_of_H}, its spectrum is equivalent to the $2m \times 2m$ non-backtracking matrix $B$ \eqref{eq:conventional_nonbacktracking}.
Similarly, for $\Hc$, there exists a matrix $\Bc$ indexed by directed edges whose spectrum is equivalent to that of $\Hc$.

Let $\vec{E}(V) = \{(u, v): u \ne v \in V\}$ be the set of directed edges of the \textit{complete graph} on $n$ nodes. Then $|\vec{E}(V)| = n(n-1)$.
Construct an $|\vec{E}(V)| \times |\vec{E}(V)|$ matrix $\Bc$ from $\Ac$,
\begin{equation*}
    \Bc_{v \to u, w \to z} = \left \{ \begin{array}{ll}
      0   & \textrm{if } w \ne u, \\
      \Ac_{wz}   & \textrm{if } w= u \textrm{ and } z \ne v \\
      \Ac_{wz} - 1 & \textrm{if } w= u \textrm{ and } z = v 
    \end{array} \right..
\end{equation*}
Similar to a conventional non-backtracking matrix, summing over incoming and outgoing edges relates $\Bc$’s spectrum to that of $\Hc$ \eqref{eq:definition_of_Hc}.
See Appendix~\ref{section:proof_of_proposition_spectrum_of_B_equivalent_to_spectrum_of_H} for details of the proof.
\begin{proposition}[Spectral equivalence of $\Bc$ and $\Hc$]
\label{proposition:spectrum_of_B_equivalent_to_spectrum_of_H}
     The spectrum of $\Bc$ is the set $\{\pm 1\} \cup \{\mu: \det(\mu^2 \In - \mu \Ac + \Dc - \In) = 0\}$, or equivalently, the set $\{\pm 1\} \cup \{\textrm{eigenvalues of }\Hc\}$.
\end{proposition}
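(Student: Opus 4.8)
The plan is to mimic the classical Bass-type factorization that underlies the analogous statement for the ordinary non-backtracking matrix (as used for $B$ and $H$ above), adapting it to the directed edge set $\vec{E}(V)$ of the complete graph and to the centered weights $\Ac$. I would first dispatch the second, equivalent form: writing $\mu I_{2n}-\Hc=\left(\begin{smallmatrix}\mu\In-\Ac & \Dc-\In\\ -\In & \mu\In\end{smallmatrix}\right)$, the bottom two blocks $-\In$ and $\mu\In$ commute, so the Schur-complement identity for block determinants gives $\det(\mu I_{2n}-\Hc)=\det\big(\mu(\mu\In-\Ac)+(\Dc-\In)\big)=\det(\mu^2\In-\mu\Ac+\Dc-\In)$. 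Hence the roots of $\det(\mu^2\In-\mu\Ac+\Dc-\In)$ are exactly the eigenvalues of $\Hc$, and it remains to connect the spectrum of $\Bc$ to this polynomial.

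The main step is to prove the determinant identity
\[
\det\!\left(\mu I_{n(n-1)}-\Bc\right)=(\mu^2-1)^{\,n(n-1)/2-n}\,\det\!\left(\mu^2\In-\mu\Ac+\Dc-\In\right),\qquad n\ge 3.
\]
For this I would introduce the source and target incidence matrices $\bm{s},\bm{t}\in\{0,1\}^{n\times n(n-1)}$ with $\bm{s}_{v,e}=\mathbf{1}\{s(e)=v\}$, $\bm{t}_{v,e}=\mathbf{1}\{t(e)=v\}$, the reversal permutation $P$ with $P_{e,f}=\mathbf{1}\{f=\bar e\}$, and the diagonal matrix $\Lambda$ with $\Lambda_{e,e}=\Ac_{s(e)t(e)}$. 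An entrywise check against the definition of $\Bc$ — splitting the correction term $\Ac_{wz}-1$ at $z=v$, which is precisely the event $f=\bar e$ — yields the clean decomposition $\Bc=\bm{t}^\top\bm{s}\,\Lambda-P$. Since $P^2=I_{n(n-1)}$, its eigenvalues are $\pm1$ with equal multiplicity $n(n-1)/2$, so $\det(\mu I+P)=(\mu^2-1)^{n(n-1)/2}$ and $(\mu I+P)^{-1}=(\mu^2-1)^{-1}(\mu I-P)$ when $\mu^2\neq1$. Factoring $\mu I-\Bc=(\mu I+P)\big(I-(\mu I+P)^{-1}\bm{t}^\top\bm{s}\Lambda\big)$ and applying Sylvester's identity $\det(I_{n(n-1)}-XY)=\det(I_n-YX)$ with $X=(\mu I+P)^{-1}\bm{t}^\top$ and $Y=\bm{s}\Lambda$ reduces the computation to the $n\times n$ determinant $\det\big(I_n-(\mu^2-1)^{-1}(\mu\,\bm{s}\Lambda\bm{t}^\top-\bm{s}\Lambda P\bm{t}^\top)\big)$. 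I would then compute the two remaining $n\times n$ matrices directly: $\bm{s}\Lambda\bm{t}^\top=\Ac$ (the unique directed edge from $u$ to $v$ contributes $\Ac_{uv}$), and $\bm{s}\Lambda P\bm{t}^\top=\Dc$, because composing with the reversal forces the row and column indices to coincide and leaves $\sum_{w\neq u}\Ac_{uw}=\dc_u$ on the diagonal. Substituting and clearing the factor $(\mu^2-1)^{-n}$ yields the identity for $\mu^2\neq1$; since both sides are polynomials in $\mu$ when $n\ge3$, it extends to all $\mu$ by polynomial continuation.

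Finally, combining the two steps, the characteristic polynomial of $\Bc$ equals $(\mu^2-1)^{n(n-3)/2}$ times that of $\Hc$, so the set of eigenvalues of $\Bc$ is $\{\pm1\}\cup\{\mu:\det(\mu^2\In-\mu\Ac+\Dc-\In)=0\}=\{\pm1\}\cup\{\text{eigenvalues of }\Hc\}$, as claimed (the cases $n\le2$ are degenerate and can be checked by hand). I expect the only genuinely delicate points to be getting the decomposition $\Bc=\bm{t}^\top\bm{s}\Lambda-P$ exactly right — tracking which index is source versus target and recognizing the $z=v$ correction as the reversal term $P$ — and the identification $\bm{s}\Lambda P\bm{t}^\top=\Dc$, which is the precise place the degree matrix $\Dc$ (hence the interplay between the centered weights and their row sums) enters the factorization. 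That identification is where I would concentrate the care in the full write-up.
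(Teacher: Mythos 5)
Your proposal is correct, but it proves the statement by a genuinely different route than the paper. You establish a full Ihara--Bass type factorization: after checking $\Bc=\bm{t}^\top\bm{s}\,\Lambda-P$ (which I verified entrywise against the definition of $\Bc$, including the $z=v$ correction being exactly the reversal term $P$), you combine $\det(\mu I+P)=(\mu^2-1)^{n(n-1)/2}$, Sylvester's identity, and the identifications $\bm{s}\Lambda\bm{t}^\top=\Ac$, $\bm{s}\Lambda P\bm{t}^\top=\Dc$ to get
\[
\det\bigl(\mu I_{n(n-1)}-\Bc\bigr)=(\mu^2-1)^{\,n(n-3)/2}\,\det\bigl(\mu^2\In-\mu\Ac+\Dc-\In\bigr),
\]
together with the Schur-complement computation showing the right-hand determinant is the characteristic polynomial of $\Hc$; all of these steps check out. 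The paper instead argues as in the classical non-backtracking literature: for a vector $\tg$ indexed by directed edges it forms the weighted incoming and outgoing sums $\tg^{\textrm{in}}_u=\sum_{v\ne u}\Ac_{uv}\tg_{v\to u}$, $\tg^{\textrm{out}}_u=\sum_{v\ne u}\Ac_{uv}\tg_{u\to v}$ and shows that applying $\Bc$ intertwines with the $2n\times 2n$ matrix $\bigl(\begin{smallmatrix}\Ac & -\In\\ \Dc-\In & \bm{0}\end{smallmatrix}\bigr)$, which has the same characteristic polynomial as $\Hc$. The paper's argument is shorter and directly exhibits the eigenvector correspondence that is reused elsewhere (the ``partial'' eigenvector for clustering), but as written it only shows that eigenvalues of $\Bc$ whose eigenvectors have nonvanishing in/out sums appear in the spectrum of $\Hc$, and it is silent about the $\pm1$ eigenvalues and multiplicities. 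Your determinant identity is stronger: it gives the exact characteristic polynomial, hence both inclusions and the multiplicity $n(n-3)/2$ of $\pm1$. One small caveat: that exponent vanishes at $n=3$, so the claim that $\pm1$ always belong to the spectrum really needs $n\ge 4$ (the same implicit restriction as in the classical statement and in the paper); you flag $n\le 2$ but should also note $n=3$ as borderline. Otherwise the write-up plan, including the polynomial-continuation step to remove the restriction $\mu^2\ne 1$, is sound.
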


For two given edges $e$ and $f$, the $(e,f)$ entry of $\Bc^{k}$ sums up productions of entries, along all paths from $e$ to $f$ of length $(k+1)$.
Because there are many more nonzero entries in $\Bc$ compared to $B$, powers of $\Bc$ do not only count non-backtracking paths.
Consequently, one cannot similarly show the weak Ramanujan property for $\Bc$ using combinatorial arguments from \cite{bordenave2015non}.
Nevertheless, the ``-1" term when $f_2 = e_1$ should intuitively reduce the contribution of paths that contain backtracking edges, partially offsetting the influence of high-degree nodes in a sparse network.

\subsection{Future work directions}

Firstly, validating the Tracy-Widom limit of our proposed operator through a rigorous proof of Conjecture~\ref{conjecture:concentration_of_v1Dv1} stands as a critical future step. The significance of this conjecture may extend beyond our context, possibly attracting independent interest.

Secondly, our study has primarily focused on the SBM and the homogeneous Erd\H{o}s-R\'{e}nyi model as the null and alternative hypotheses. 
An intriguing prospect lies in extending our proposed method to DCSBMs, thus accommodating degree heterogeneity.
One major challenge here is that we will need conditions on the degree parameters, as discussed in \cite{lei2016goodness}.
For example, there must exist a community within which the degree parameters cannot be approximated by block-wise constant vectors.
Besides, the error in $\widehat{P}$ will be considerably larger for DCSBM estimators, complicating the tractability of the null distribution of $\Ace$ and $\Hce$.
While we anticipate that our testing approach will maintain asymptotic power against DCSBM null models akin to \cite{lei2016goodness}, controlling the type-I error poses a considerable challenge in this scenario.

Finally, it is interesting to see if the concept of ``centering" can be similarly integrated into the Bethe-Hessian matrix.
However, a significant hurdle remains in addressing the uncertainty around choosing the parameter $r$.

\bibliographystyle{apalike}
\bibliography{ref}

\begin{thebibliography}{}

\bibitem[Abbe, 2018]{abbe2018community}
Abbe, E. (2018).
\newblock Community detection and stochastic block models: recent developments.
\newblock {\em Journal of Machine Learning Research}, 18(177):1--86.

\bibitem[Abbe et~al., 2015]{abbe2015exact}
Abbe, E., Bandeira, A.~S., and Hall, G. (2015).
\newblock Exact recovery in the stochastic block model.
\newblock {\em IEEE Transactions on information theory}, 62(1):471--487.

\bibitem[Adamic and Glance, 2005]{adamic2005political}
Adamic, L.~A. and Glance, N. (2005).
\newblock The political blogosphere and the 2004 us election: divided they blog.
\newblock In {\em Proceedings of the 3rd international workshop on Link discovery}, pages 36--43.

\bibitem[Adhikari et~al., 2023]{adhikari2023eigenstate}
Adhikari, A., Dubova, S., Xu, C., and Yin, J. (2023).
\newblock Eigenstate thermalization hypothesis for generalized wigner matrices.
\newblock {\em arXiv preprint arXiv:2302.00157}.

\bibitem[Amini et~al., 2013]{amini2013pseudo}
Amini, A.~A., Chen, A., Bickel, P.~J., and Levina, E. (2013).
\newblock Pseudo-likelihood methods for community detection in large sparse networks.
\newblock {\em The Annals of Statistics}, 41(4):2097--2122.

\bibitem[Amini and Levina, 2018]{amini2018semidefinite}
Amini, A.~A. and Levina, E. (2018).
\newblock On semidefinite relaxations for the block model.
\newblock {\em The Annals of Statistics}, 46(1):149--179.

\bibitem[Angel et~al., 2015]{angel2015non}
Angel, O., Friedman, J., and Hoory, S. (2015).
\newblock The non-backtracking spectrum of the universal cover of a graph.
\newblock {\em Transactions of the American Mathematical Society}, 367(6):4287--4318.

\bibitem[Banerjee and Ma, 2017]{banerjee2017optimal}
Banerjee, D. and Ma, Z. (2017).
\newblock Optimal hypothesis testing for stochastic block models with growing degrees.
\newblock {\em arXiv preprint arXiv:1705.05305}.

\bibitem[Benaych-Georges et~al., 2019]{benaych2019largest}
Benaych-Georges, F., Bordenave, C., and Knowles, A. (2019).
\newblock Largest eigenvalues of sparse inhomogeneous erd{\H{o}}s--r{\'e}nyi graphs.
\newblock {\em The Annals of Probability}, 47(3):1653--1676.

\bibitem[Benaych-Georges et~al., 2020]{benaych2020spectral}
Benaych-Georges, F., Bordenave, C., and Knowles, A. (2020).
\newblock Spectral radii of sparse random matrices.
\newblock In {\em Annales de l’Institut Henri Poincar{\'e}-Probabilit{\'e}s et Statistiques}, volume~56, pages 2141--2161.

\bibitem[Benaych-Georges and Knowles, 2016]{benaych2016lectures}
Benaych-Georges, F. and Knowles, A. (2016).
\newblock Lectures on the local semicircle law for wigner matrices.
\newblock {\em arXiv preprint arXiv:1601.04055}.

\bibitem[Bickel et~al., 2013]{bickel2013asymptotic}
Bickel, P., Choi, D., Chang, X., and Zhang, H. (2013).
\newblock Asymptotic normality of maximum likelihood and its variational approximation for stochastic blockmodels.
\newblock {\em The Annals of Statistics}, 41(4):1922--1943.

\bibitem[Bickel and Chen, 2009]{bickel2009nonparametric}
Bickel, P.~J. and Chen, A. (2009).
\newblock A nonparametric view of network models and newman--girvan and other modularities.
\newblock {\em Proceedings of the National Academy of Sciences}, 106(50):21068--21073.

\bibitem[Bickel and Sarkar, 2016]{bickel2016hypothesis}
Bickel, P.~J. and Sarkar, P. (2016).
\newblock Hypothesis testing for automated community detection in networks.
\newblock {\em Journal of the Royal Statistical Society Series B: Statistical Methodology}, 78(1):253--273.

\bibitem[Bordenave et~al., 2015]{bordenave2015non}
Bordenave, C., Lelarge, M., and Massouli{\'e}, L. (2015).
\newblock Non-backtracking spectrum of random graphs: community detection and non-regular ramanujan graphs.
\newblock In {\em 2015 IEEE 56th Annual Symposium on Foundations of Computer Science}, pages 1347--1357. IEEE.

\bibitem[Celisse et~al., 2012]{celisse2012consistency}
Celisse, A., Daudin, J.-J., and Pierre, L. (2012).
\newblock Consistency of maximum-likelihood and variational estimators in the stochastic block model.
\newblock {\em Electronic Journal of Statistics}, 6:1847--1899.

\bibitem[Chatterjee, 2015]{chatterjee2015matrix}
Chatterjee, S. (2015).
\newblock Matrix estimation by universal singular value thresholding.
\newblock {\em The Annals of Statistics}, 43(1):177--214.

\bibitem[Chaudhuri et~al., 2012]{chaudhuri2012spectral}
Chaudhuri, K., Chung, F., and Tsiatas, A. (2012).
\newblock Spectral clustering of graphs with general degrees in the extended planted partition model.
\newblock In {\em Conference on Learning Theory}, pages 35--1. JMLR Workshop and Conference Proceedings.

\bibitem[Chen and Lei, 2018]{chen2018network}
Chen, K. and Lei, J. (2018).
\newblock Network cross-validation for determining the number of communities in network data.
\newblock {\em Journal of the American Statistical Association}, 113(521):241--251.

\bibitem[Chen et~al., 2018]{chen2018convexified}
Chen, Y., Li, X., and Xu, J. (2018).
\newblock Convexified modularity maximization for degree-corrected stochastic block models.
\newblock {\em The Annals of Statistics}, 46(4):1573--1602.

\bibitem[Cipolloni et~al., 2021]{cipolloni2021eigenstate}
Cipolloni, G., Erd{\H{o}}s, L., and Schr{\"o}der, D. (2021).
\newblock Eigenstate thermalization hypothesis for wigner matrices.
\newblock {\em Communications in Mathematical Physics}, 388:1005--1048.

\bibitem[Dall'Amico et~al., 2019]{dall2019revisiting}
Dall'Amico, L., Couillet, R., and Tremblay, N. (2019).
\newblock Revisiting the bethe-hessian: improved community detection in sparse heterogeneous graphs.
\newblock {\em Advances in neural information processing systems}, 32.

\bibitem[Dall'Amico et~al., 2021]{dall2021unified}
Dall'Amico, L., Couillet, R., and Tremblay, N. (2021).
\newblock A unified framework for spectral clustering in sparse graphs.
\newblock {\em Journal of Machine Learning Research}, 22(217):1--56.

\bibitem[Decelle et~al., 2011]{decelle2011asymptotic}
Decelle, A., Krzakala, F., Moore, C., and Zdeborov{\'a}, L. (2011).
\newblock Asymptotic analysis of the stochastic block model for modular networks and its algorithmic applications.
\newblock {\em Physical review E}, 84(6):066106.

\bibitem[Demmel, 1997]{demmel1997applied}
Demmel, J.~W. (1997).
\newblock {\em Applied numerical linear algebra}.
\newblock SIAM.

\bibitem[Erd{\H{o}}s et~al., 2012]{erdHos2012spectral}
Erd{\H{o}}s, L., Knowles, A., Yau, H.-T., and Yin, J. (2012).
\newblock Spectral statistics of erd{\H{o}}s-r{\'e}nyi graphs ii: Eigenvalue spacing and the extreme eigenvalues.
\newblock {\em Communications in Mathematical Physics}, 314(3):587--640.

\bibitem[Erd{\H{o}}s et~al., 2013]{erdHos2013spectral}
Erd{\H{o}}s, L., Knowles, A., Yau, H.-T., and Yin, J. (2013).
\newblock Spectral statistics of erd{\H{o}}s—r{\'e}nyi graphs i: Local semicircle law.
\newblock {\em The Annals of Probability}, pages 2279--2375.

\bibitem[Erd{\"o}s et~al., 2011]{erdos2011universality}
Erd{\"o}s, L., Yau, H.-T., and Yin, J. (2011).
\newblock Universality for generalized wigner matrices with bernoulli distribution.
\newblock {\em Journal of Combinatorics}, 2(1):15--81.

\bibitem[Fishkind et~al., 2013]{fishkind2013consistent}
Fishkind, D.~E., Sussman, D.~L., Tang, M., Vogelstein, J.~T., and Priebe, C.~E. (2013).
\newblock Consistent adjacency-spectral partitioning for the stochastic block model when the model parameters are unknown.
\newblock {\em SIAM Journal on Matrix Analysis and Applications}, 34(1):23--39.

\bibitem[Gao and Lafferty, 2017a]{gao2017testing1}
Gao, C. and Lafferty, J. (2017a).
\newblock Testing for global network structure using small subgraph statistics.
\newblock {\em arXiv preprint arXiv:1710.00862}.

\bibitem[Gao and Lafferty, 2017b]{gao2017testing2}
Gao, C. and Lafferty, J. (2017b).
\newblock Testing network structure using relations between small subgraph probabilities.
\newblock {\em arXiv preprint arXiv:1704.06742}.

\bibitem[Gulikers et~al., 2017]{gulikers2017non}
Gulikers, L., Lelarge, M., and Massouli{\'e}, L. (2017).
\newblock Non-backtracking spectrum of degree-corrected stochastic block models.
\newblock In {\em ITCS 2017-8th Innovations in Theoretical Computer Science}, pages 1--52.

\bibitem[Han et~al., 2023]{han2023universal}
Han, X., Yang, Q., and Fan, Y. (2023).
\newblock Universal rank inference via residual subsampling with application to large networks.
\newblock {\em The Annals of Statistics}, 51(3):1109--1133.

\bibitem[Hiesmayr and McKenzie, 2023]{hiesmayr2023spectral}
Hiesmayr, E. and McKenzie, T. (2023).
\newblock The spectral edge of constant degree erd$\backslash$h $\{$o$\}$ sr$\backslash$'$\{$e$\}$ nyi graphs.
\newblock {\em arXiv preprint arXiv:2309.11007}.

\bibitem[Holland et~al., 1983]{holland1983stochastic}
Holland, P.~W., Laskey, K.~B., and Leinhardt, S. (1983).
\newblock Stochastic blockmodels: First steps.
\newblock {\em Social networks}, 5(2):109--137.

\bibitem[Hu et~al., 2020]{hu2020corrected}
Hu, J., Qin, H., Yan, T., and Zhao, Y. (2020).
\newblock Corrected bayesian information criterion for stochastic block models.
\newblock {\em Journal of the American Statistical Association}, 115(532):1771--1783.

\bibitem[Hwang et~al., 2023]{hwang2023estimation}
Hwang, N., Xu, J., Chatterjee, S., and Bhattacharyya, S. (2023).
\newblock On the estimation of the number of communities for sparse networks.
\newblock {\em Journal of the American Statistical Association}, pages 1--16.

\bibitem[Jin, 2015]{jin2015fast}
Jin, J. (2015).
\newblock Fast community detection by score.
\newblock {\em The Annals of Statistics}, 43(1):57--89.

\bibitem[Jin et~al., 2023]{jin2023optimal}
Jin, J., Ke, Z.~T., Luo, S., and Wang, M. (2023).
\newblock Optimal estimation of the number of network communities.
\newblock {\em Journal of the American Statistical Association}, 118(543):2101--2116.

\bibitem[Karrer and Newman, 2011]{karrer2011stochastic}
Karrer, B. and Newman, M.~E. (2011).
\newblock Stochastic blockmodels and community structure in networks.
\newblock {\em Physical review E}, 83(1):016107.

\bibitem[Krivelevich and Sudakov, 2003]{krivelevich2003largest}
Krivelevich, M. and Sudakov, B. (2003).
\newblock The largest eigenvalue of sparse random graphs.
\newblock {\em Combinatorics, Probability and Computing}, 12(1):61--72.

\bibitem[Krzakala et~al., 2013]{krzakala2013spectral}
Krzakala, F., Moore, C., Mossel, E., Neeman, J., Sly, A., Zdeborov{\'a}, L., and Zhang, P. (2013).
\newblock Spectral redemption in clustering sparse networks.
\newblock {\em Proceedings of the National Academy of Sciences}, 110(52):20935--20940.

\bibitem[Latouche et~al., 2012]{latouche2012variational}
Latouche, P., Birmele, E., and Ambroise, C. (2012).
\newblock Variational bayesian inference and complexity control for stochastic block models.
\newblock {\em Statistical Modelling}, 12(1):93--115.

\bibitem[Le and Levina, 2022]{le2022estimating}
Le, C.~M. and Levina, E. (2022).
\newblock Estimating the number of communities by spectral methods.
\newblock {\em Electronic Journal of Statistics}, 16(1):3315--3342.

\bibitem[Le et~al., 2015]{le2015sparse}
Le, C.~M., Levina, E., and Vershynin, R. (2015).
\newblock Sparse random graphs: regularization and concentration of the laplacian.
\newblock {\em arXiv preprint arXiv:1502.03049}.

\bibitem[Le et~al., 2017]{le2017concentration}
Le, C.~M., Levina, E., and Vershynin, R. (2017).
\newblock Concentration and regularization of random graphs.
\newblock {\em Random Structures \& Algorithms}, 51(3):538--561.

\bibitem[Lee and Yin, 2014]{lee2014necessary}
Lee, J. and Yin, J. (2014).
\newblock A necessary and sufficient condition for edge universality of wigner matrices.
\newblock {\em DUKE MATHEMATICAL JOURNAL}, 163(1):117--173.

\bibitem[Lee and Schnelli, 2018]{lee2018local}
Lee, J.~O. and Schnelli, K. (2018).
\newblock Local law and tracy--widom limit for sparse random matrices.
\newblock {\em Probability Theory and Related Fields}, 171:543--616.

\bibitem[Lei, 2016]{lei2016goodness}
Lei, J. (2016).
\newblock A goodness-of-fit test for stochastic block models.
\newblock {\em The Annals of Statistics}, 44(1):401--424.

\bibitem[Lei and Rinaldo, 2015]{lei2015consistency}
Lei, J. and Rinaldo, A. (2015).
\newblock Consistency of spectral clustering in stochastic block models.
\newblock {\em The Annals of Statistics}, 43(1):215--237.

\bibitem[Li et~al., 2022]{li2022hierarchical}
Li, T., Lei, L., Bhattacharyya, S., Van~den Berge, K., Sarkar, P., Bickel, P.~J., and Levina, E. (2022).
\newblock Hierarchical community detection by recursive partitioning.
\newblock {\em Journal of the American Statistical Association}, 117(538):951--968.

\bibitem[Li et~al., 2020]{li2020network}
Li, T., Levina, E., and Zhu, J. (2020).
\newblock Network cross-validation by edge sampling.
\newblock {\em Biometrika}, 107(2):257--276.

\bibitem[Li et~al., 2021]{li2021convex}
Li, X., Chen, Y., and Xu, J. (2021).
\newblock Convex relaxation methods for community detection.
\newblock {\em Statistical science}, 36(1):2--15.

\bibitem[Ma et~al., 2021]{ma2021determining}
Ma, S., Su, L., and Zhang, Y. (2021).
\newblock Determining the number of communities in degree-corrected stochastic block models.
\newblock {\em The Journal of Machine Learning Research}, 22(1):3217--3279.

\bibitem[Massouli{\'e}, 2014]{massoulie2014community}
Massouli{\'e}, L. (2014).
\newblock Community detection thresholds and the weak ramanujan property.
\newblock In {\em Proceedings of the forty-sixth annual ACM symposium on Theory of computing}, pages 694--703.

\bibitem[McSherry, 2001]{mcsherry2001spectral}
McSherry, F. (2001).
\newblock Spectral partitioning of random graphs.
\newblock In {\em Proceedings 42nd IEEE Symposium on Foundations of Computer Science}, pages 529--537. IEEE.

\bibitem[Mossel et~al., 2018]{mossel2018proof}
Mossel, E., Neeman, J., and Sly, A. (2018).
\newblock A proof of the block model threshold conjecture.
\newblock {\em Combinatorica}, 38(3):665--708.

\bibitem[Newman, 2006]{newman2006modularity}
Newman, M.~E. (2006).
\newblock Modularity and community structure in networks.
\newblock {\em Proceedings of the national academy of sciences}, 103(23):8577--8582.

\bibitem[Riolo et~al., 2017]{riolo2017efficient}
Riolo, M.~A., Cantwell, G.~T., Reinert, G., and Newman, M.~E. (2017).
\newblock Efficient method for estimating the number of communities in a network.
\newblock {\em Physical review e}, 96(3):032310.

\bibitem[Rohe et~al., 2011]{rohe2011spectral}
Rohe, K., Chatterjee, S., and Yu, B. (2011).
\newblock Spectral clustering and the high-dimensional stochastic blockmodel.
\newblock {\em The Annals of Statistics}, 39(4):1878--1915.

\bibitem[Saade et~al., 2014]{saade2014spectral}
Saade, A., Krzakala, F., and Zdeborov{\'a}, L. (2014).
\newblock Spectral clustering of graphs with the bethe hessian.
\newblock {\em Advances in neural information processing systems}, 27.

\bibitem[Saldana et~al., 2017]{saldana2017many}
Saldana, D.~F., Yu, Y., and Feng, Y. (2017).
\newblock How many communities are there?
\newblock {\em Journal of Computational and Graphical Statistics}, 26(1):171--181.

\bibitem[Sarkar and Bickel, 2015]{sarkar2015role}
Sarkar, P. and Bickel, P.~J. (2015).
\newblock Role of normalization in spectral clustering for stochastic blockmodels.
\newblock {\em The Annals of Statistics}, pages 962--990.

\bibitem[Vershynin, 2018]{vershynin2018high}
Vershynin, R. (2018).
\newblock {\em High-dimensional probability: An introduction with applications in data science}, volume~47.
\newblock Cambridge university press.

\bibitem[Verzelen and Arias-Castro, 2013]{verzelen2013community}
Verzelen, N. and Arias-Castro, E. (2013).
\newblock Community detection in sparse random networks.
\newblock {\em arXiv preprint arXiv:1308.2955}.

\bibitem[Wang and Wood, 2023]{wang2023limiting}
Wang, K. and Wood, P.~M. (2023).
\newblock Limiting empirical spectral distribution for the non-backtracking matrix of an erd{\H{o}}s-r{\'e}nyi random graph.
\newblock {\em Combinatorics, Probability and Computing}, 32(6):956--973.

\bibitem[Wang and Bickel, 2017]{wang2017likelihood}
Wang, Y.~R. and Bickel, P.~J. (2017).
\newblock Likelihood-based model selection for stochastic block models.
\newblock {\em The Annals of Statistics}, 45(2):500--528.

\bibitem[Watanabe and Fukumizu, 2009]{watanabe2009graph}
Watanabe, Y. and Fukumizu, K. (2009).
\newblock Graph zeta function in the bethe free energy and loopy belief propagation.
\newblock {\em Advances in Neural Information Processing Systems}, 22.

\bibitem[Yan et~al., 2018]{yan2018provable}
Yan, B., Sarkar, P., and Cheng, X. (2018).
\newblock Provable estimation of the number of blocks in block models.
\newblock In {\em International Conference on Artificial Intelligence and Statistics}, pages 1185--1194. PMLR.

\bibitem[Zhang and Amini, 2023]{zhang2023adjusted}
Zhang, L. and Amini, A.~A. (2023).
\newblock Adjusted chi-square test for degree-corrected block models.
\newblock {\em The Annals of Statistics}, 51(6):2366--2385.

\bibitem[Zhao et~al., 2012]{zhao2012consistency}
Zhao, Y., Levina, E., and Zhu, J. (2012).
\newblock Consistency of community detection in networks under degree-corrected stochastic block models.
\newblock {\em The Annals of Statistics}, 40(4):2266--2292.

\end{thebibliography}

\newpage
\appendix


\section{Proof of Proposition~\ref{proposition:convergence_of_H_to_TW1}: Tracy-Widom limit of $\mu_1(\widetilde{H})$}
\label{section:proof_of_proposition_tracy_widom_limit}
\begin{proof}[Proof of Proposition~\ref{proposition:convergence_of_H_to_TW1}]
We will only prove the convergence results \eqref{eq:difference_between_mu_1_H_and_lambda_1_A_general} and \eqref{eq:difference_between_mu_1_H_and_lambda_1_A} for $\mu_1(\Hc)$ and $\lambda_1(\Ac)$.
To show similar convergence results \eqref{eq:difference_between_mu_1_Hce_and_lambda_1_Ace_general} and \eqref{eq:difference_between_mu_1_Hce_and_lambda_1_Ace}  for their empirical counterparts $\mu_1(\Hce)$ and $\lambda_1(\Ace)$, one only needs to replace $\Ac$, $\Dc$, and $p$ by $\Ace$, $\Dce$ and $\widehat{p}$ in the following arguments.


The proof relies on a perturbation analysis based on the following theorem.
\begin{theorem}[Theorem 4.4 in \citet{demmel1997applied}]
\label{theorem:eigenvalue_perturbation}
    Let $\lambda$ be a simple eigenvalue of $A$ with right eigenvector $x$ and left eigenvector $y$.
    Let $\lambda + \delta \lambda$ be the corresponding eigenvalue of $A + E$.
    Then
    \begin{equation*}
        \delta \lambda = \frac{y^* E x}{y^* x} + O(\|E\|^2 ).
    \end{equation*}
\end{theorem}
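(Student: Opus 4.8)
The plan is to reduce the statement to standard analytic perturbation theory for a simple eigenvalue and then read off the first-order coefficient. Since the case $E = 0$ is trivial, assume $E \ne 0$ and write $E = tF$ with $F = E/\|E\|$ and $t = \|E\|$, so that $\|F\| = 1$. I would then study the one-parameter family $A + sF$ for $s$ in a neighbourhood of $0$ whose size is controlled only by $A$ (not by $F$ or by $t$), show that the eigenvalue branching from $\lambda$ is analytic there, compute its derivative at $s = 0$, and finally specialize to $s = t$.

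First I would fix a positively oriented circle $\Gamma$ in $\mathbb{C}$ enclosing $\lambda$ and no other eigenvalue of $A$, and set $\gamma = \operatorname{dist}(\Gamma, \sigma(A)) > 0$ and $M = \sup_{z \in \Gamma}\|(zI - A)^{-1}\|$. For $|s| < 1/(2M)$, uniformly in the direction $F$, the resolvent identity gives the Neumann expansion $(zI - A - sF)^{-1} = \sum_{k \ge 0} s^k (zI - A)^{-1}\big(F(zI-A)^{-1}\big)^k$, convergent uniformly for $z \in \Gamma$. Hence the Riesz projector $P(s) = \tfrac{1}{2\pi i}\oint_\Gamma (zI - A - sF)^{-1}\,dz$ is analytic in $s$, has constant rank one with $\operatorname{tr} P(s) = 1$, so $A + sF$ has exactly one eigenvalue inside $\Gamma$, namely $\lambda(s) = \operatorname{tr}\big((A + sF)P(s)\big)$, analytic in $s$ with $\lambda(0) = \lambda$. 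This is also where simplicity of $\lambda$ is used to see that $y^* x \ne 0$: the rank-one projector $P(0)$ must equal $x y^*/(y^* x)$, which forces $y^* x \ne 0$ and makes the claimed formula well posed.

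Next I would compute $\lambda'(0)$. Choosing the eigenvector branch $x(s) \in \operatorname{ran}P(s)$ normalized by $y^* x(s) \equiv y^* x$ and differentiating $(A + sF)x(s) = \lambda(s) x(s)$ at $s = 0$ gives $F x + A x'(0) = \lambda'(0) x + \lambda x'(0)$; left-multiplying by $y^*$ and using $y^* A = \lambda y^*$ cancels the $x'(0)$ terms, yielding $y^* F x = \lambda'(0)\, y^* x$, i.e. $\lambda'(0) = (y^* F x)/(y^* x)$. By Taylor's theorem with remainder, $\lambda(s) = \lambda + s\,\lambda'(0) + R(s)$ with $|R(s)| \le \tfrac{1}{2}\,s^2 \sup_{|u| \le |s|}|\lambda''(u)|$, and $\lambda''$ is bounded on $|u| \le 1/(4M)$ by a constant depending only on $M$, $\gamma$, $\|A\|$ — uniformly over all directions $F$ on the unit sphere — since $\lambda''$ arises by differentiating the contour integral $\operatorname{tr}((A+uF)P(u))$, whose integrand is controlled by the uniform resolvent bound $M$. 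Setting $s = t = \|E\|$ and $F = E/\|E\|$, and using $\|E\|\,\lambda'(0) = y^* E x/(y^* x)$, gives $\lambda(E) = \lambda + y^* E x/(y^* x) + O(\|E\|^2)$.

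The main obstacle is not the first-order computation, which is a one-line differentiation, but showing that the error term is genuinely $O(\|E\|^2)$ with a constant \emph{independent of $E$}: a naive appeal to analyticity of $s \mapsto \lambda(s)$ only yields a remainder constant that could a priori depend on the direction $F$. This is precisely why the argument is organized around a fixed contour $\Gamma$ and the uniform resolvent bound $M = \sup_{z\in\Gamma}\|(zI-A)^{-1}\|$, so that the radius $1/(2M)$ of analyticity and the bound on $\lambda''$ there depend only on $A$. An equivalent route avoiding contour integrals is to apply the implicit function theorem to $\Phi(\mu, v, E) = \big((A+E)v - \mu v,\ y^* v - y^* x\big)$ at $(\lambda, x, 0)$, whose $(\mu,v)$-Jacobian is invertible exactly when $\lambda$ is simple; the theorem then supplies analytic $\mu(E), v(E)$ on a ball around $E = 0$, and differentiating $\Phi = 0$ reproduces the same formula with a uniform quadratic remainder.
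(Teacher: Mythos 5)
The paper does not prove this statement at all: it is imported verbatim as Theorem~4.4 of Demmel's textbook and used as a black box inside the proof of Proposition~\ref{proposition:convergence_of_H_to_TW1}. So there is no in-paper argument to compare against. What you have supplied is a self-contained proof, and it is correct. Your route via the Riesz projector $P(s)=\tfrac{1}{2\pi i}\oint_\Gamma (zI-A-sF)^{-1}\,dz$, the Neumann expansion of the resolvent on a fixed contour $\Gamma$, and the trace formula $\lambda(s)=\operatorname{tr}\bigl((A+sF)P(s)\bigr)$ is a clean way to obtain analyticity of the perturbed eigenvalue on a disc whose radius $1/(2M)$ depends only on $A$, and the observation that $P(0)=xy^{*}/(y^{*}x)$ being a well-defined rank-one projector forces $y^{*}x\neq 0$ correctly locates where simplicity enters. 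The one-line cancellation $y^{*}Fx=\lambda'(0)\,y^{*}x$ after differentiating $(A+sF)x(s)=\lambda(s)x(s)$ and left-multiplying by $y^{*}$ is exactly the first-order computation Demmel carries out informally by dropping higher-order terms; your version upgrades it by identifying the genuinely nontrivial point — that the $O(\|E\|^{2})$ constant must be uniform over the direction $F=E/\|E\|$ — and handling it by the Cauchy estimate for $\lambda''$ on the fixed disc. The implicit-function-theorem variant you sketch in the last paragraph is equivalent and equally valid; checking that the Jacobian $\begin{pmatrix}-x & A-\lambda I\\ 0 & y^{*}\end{pmatrix}$ is invertible uses precisely $y^{*}x\neq 0$ together with $\ker(A-\lambda I)=\operatorname{span}(x)$. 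One cosmetic remark: you invoke $P(0)=xy^{*}/(y^{*}x)$ to conclude $y^{*}x\neq 0$, but that formula already presupposes $y^{*}x\neq 0$; a cleaner phrasing is that $P(0)$ is a rank-one projection with $P(0)x=x$ and $y^{*}P(0)=y^{*}$, so writing $P(0)=uw^{*}$ with $w^{*}u=1$, $u\in\operatorname{span}(x)$, $w\in\operatorname{span}(y)$ forces $y^{*}x\neq 0$. This does not affect the validity of the argument.
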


Here we consider the perturbation $\widetilde{H} = \widetilde{H}_0 + E$.
The right eigenvector $x_i$ and left eigenvector $y_i$ corresponding to the nontrivial eigenvalue $\lambdac_i$ of $\widetilde{H}_0$ are given by 
\begin{equation*}
    y_i^* = (\vc_i^\top,\, \bm{0}), \quad x_i = \left( \begin{array}{c}
        \vc_i  \\
        \lambdac_i^{-1} \vc_i
    \end{array} \right), 
\end{equation*}
where $\vc_i$ is the eigenvector of $\frac{1}{\sqrt{\alpha}} \Ac$ corresponding to $\lambdac_i$.
Clearly, $y_1^* x_1 = 1$, and \begin{equation*}
    y_1^* E x_1 = \lambdac_1^{-1} (1- \alpha^{-1} \vc_1^\top D \vc_1)
    = \alpha^{-1} \lambdac_1^{-1} (1 - \vc_1^\top \Dc \ \vc_1).
\end{equation*}
Therefore, if we consider the difference of their leading eigenvalues, we have
\begin{equation}
    |\mu_1(\widetilde{H}) - \mu_1(\widetilde{H}_0)| \le |\alpha^{-1} \lambdac_1^{-1} (1 - \vc_1^\top \Dc \ \vc_1)| + O(\|E\|^2).
    \label{eq:difference_between_mu_1_H_and_mu_1_H0}
\end{equation}
In particular, note that
\begin{equation*}
     |1- \alpha^{-1} \vc_1^\top D \vc_1| \le  \max_{1\le i \le n}\{|\alpha^{-1} d_i - 1|\} = \|E\|.
 \end{equation*}
Therefore,
\begin{equation*}
    |\mu_1(\widetilde{H}) - \mu_1(\widetilde{H}_0)| \le \lambdac_1^{-1} \|E\| + O(\|E\|^2).
\end{equation*}

We first show \eqref{eq:difference_between_mu_1_H_and_lambda_1_A_general} with the assumption $\alpha / \log n \to \infty$.
By Lemma 3.5 in \citet{wang2023limiting},
\begin{equation*}
    \|E\| =  \max_{1\le i \le n}\{|\alpha^{-1} d_i - 1|\}  = O_{\Prob}(\sqrt{\log n} \alpha^{-1/2}).
\end{equation*}
Moreover, when $\alpha / \log n \to \infty$, 
the largest eigenvalue of $\Ac$ is concentrated \citep{benaych2020spectral}, i.e., $\lambdac_1 = 2 +  o_{\Prob}(1)$.
This concludes the proof of \eqref{eq:difference_between_mu_1_H_and_lambda_1_A_general}.

Now we turn to the assertion \eqref{eq:difference_between_mu_1_H_and_lambda_1_A}.
Since $\|E\|^2 = o_{\Prob}(n^{-2/3})$ and $\lambdac_1 = 2 + o_{\Prob}(1)$ when $\alpha = \Omega(n^{2/3 + \epsilon})$, by \eqref{eq:difference_between_mu_1_H_and_mu_1_H0} we are only left to show that 
$\alpha^{-1} |\vc_1^\top \Dc \ \vc_1| = o_{\Prob}(n^{-2/3})$. 
This is a consequence of Conjecture~\ref{conjecture:concentration_of_v1Dv1}.
\end{proof}


\section{Proof of Proposition~\ref{proposition:order_of_vDv_constant_degree}: Order of $\vc_1^\top D \vc_1$ under ultra-sparse regime}
\label{section:proof_of_proposition_order_of_vDv_constant_degree}
\begin{proof}[Proof of Proposition \ref{proposition:order_of_vDv_constant_degree}]
Recall that we denote $v_1$ as the eigenvector of the original adjacency $A$ corresponding to its largest eigenvalue, and $\underline{v}_1$ similarly as the top eigenvector of $\Ac$.
We first show that the same lower bound holds for $v_1^\top D v_1$. Then we argue that $\underline{v}_1$ is very close to $v_1$ under the constant degree regime so that $\underline{v}_1 ^\top D \underline{v}_1 \approx v_1^\top D v_1$.

Denote by $d_1^{\downarrow} \ge \ldots \ge d_n^{\downarrow}$ the decreasingly ordered degrees of $d_1, \ldots , d_n$.
The order of the largest eigenvalue of $A$ is shown in \citet{krivelevich2003largest, benaych2019largest}:
\begin{equation*}
     \lambda_1(A) \asymp \sqrt{d_1^{\downarrow}}, \textrm{ where } d_1^{\downarrow} \asymp \frac{\log n}{\log (\log n)}.
\end{equation*}
The same result holds with $A$ replaced by $\Ac$.
Meanwhile, Theorem 1.4 of \citet{hiesmayr2023spectral} shows that the eigenvector $v_1$ corresponding to the largest eigenvalue of $A$ is localized around some node $x \in [n]$ with degree $d_1^{\downarrow}$. In particular,
\begin{equation*}
    |v_{1x}| = \frac{1}{\sqrt{2}} + o_{\Prob}(1).
\end{equation*}
Therefore,
\begin{equation*}
    v_1^\top D v_1 = \sum_{i=1}^n v_{1i}^2 d_i \ge v_{1x}^2 d_1^{\downarrow} = \frac{1}{2} d_1^{\downarrow} (1 + o_{\Prob}(1)).
\end{equation*}

Let $\delta \coloneqq v_1 - \underline{v}_1 $. The difference between $v_1^\top D v_1$ and $\underline{v}_1^\top D \underline{v}_1$ is negligible if $\|\delta\| = o_{\Prob}(1)$:
\begin{equation*}
    |v_1^\top D v_1 - \underline{v}_1^\top D \underline{v}_1| = |-2 \delta^\top D v_1 + \delta^\top D \delta| = o_{\Prob}(d_1^{\downarrow}).
\end{equation*}
It only remains to show that $\underline{v}_1$ is close to $v_1$.
In particular, we show $\langle v_1, \underline{v}_1 \rangle = 1 - o_{\Prob}(1)$.
By Lemma 5.2 of \citet{demmel1997applied},
\begin{equation*}
    \underline{v}_1 = c \sum_{i=1}^n \frac{v_i^\top \bm{1}_n }{\lambda_i - \underline{\lambda}_1} v_i,
\end{equation*}
where $c$ is a normalizing constant so that $\|\underline{v}\| = 1$.
Since all coordinates of $v_1$ can be taken positive by the Perron–Frobenius theorem, we have $v_1^\top \bm{1}_n = \|v_{1}\|_1$.
We claim that $\|v_1\|_1 = O_{\Prob}(\sqrt{n}  (\log n)^{-1.9} )$, and leave its proof to the end.

Recall that $\Ac = A - \frac{d}{n} \bm{1}_n\bm{1}_n^\top$, ignoring the negligible difference of diagonal elements. We have $\underline{\lambda}_1 \le \lambda_1$, and
\begin{equation*}
    \underline{\lambda}_1 = \max_{x \in \mathbb{R}^n, \|x\|=1} x^\top ( A - \frac{d}{n} \bm{1}_n \bm{1}_n^\top) x \ge \lambda_1 - \frac{d}{n} (v_1^\top\bm{1}_n)^2
\end{equation*}
Therefore, $|\lambda_1 - \underline{\lambda}_1| \le \frac{d}{n} (v_1^\top\bm{1}_n)^2 = \frac{d}{n} \|v_1\|_1^2$, and with high probability,
\begin{equation*}
    \frac{(v_1^\top \bm{1}_n)^2}{ |\lambda_1 - \underline{\lambda}_1|^2} \ge  \frac{n^2}{d^2} \|v_1\|_1^{-2} = \Omega(n (\log n)^{3.8}).
\end{equation*}

On the other hand, we have the lower bound for the spacing of the largest eigenvalues of $A$ (Lemma 7.5 of \citet{hiesmayr2023spectral})
\begin{equation*}
    |\lambda_i - \lambda_1| = \Omega\left( \left(\frac{\log n}{\log \log n} \log^3 \left(\frac{\log n}{\log \log n}\right) 2^{3 \log \log \log n}\right)^{-1}\right)
    = \Omega ((\log n)^{-1.5}),
\end{equation*}
which holds with high probability. This implies that, for $i \ne 1$,
\begin{equation*}
    |\lambda_i - \underline{\lambda}_1| \ge ||\lambda_i - \lambda_1| - |\lambda_1 - \underline{\lambda}_1|| = \Omega ((\log n)^{-3}).
\end{equation*}
Recall the identity $\sum_{i=1}^n (v_i^\top \bm{1}_n)^2 = \|\bm{1}_n\|^2 = n$.
The remainder of the projection of $\underline{v}_1$ onto $v_1$ is negligible,
\begin{equation*}
    \sum_{i=2}^n \frac{(v_i^\top \bm{1}_n)^2}{(\lambda_i - \underline{\lambda}_1)^2} \le n (\log n)^3\ll \frac{(v_1^\top \bm{1}_n)^2}{ |\lambda_1 - \underline{\lambda}_1|^2}. 
\end{equation*}

Now it only remains to show our previous claim: $\|v_1\|_1 = O_{\Prob}(\sqrt{n}  (\log n)^{-1.9} )$. Intuitively, the weight of the eigenvector decays exponentially around that node $x$ with degree $d_1^{\downarrow}$.
Formally, for $i\ge 0$, we denote by $B_i(x)$ the ball of radius $i$ around $x$, rooted at $x$. Moreover, we define $S_i(x)$ to be the set of nodes $y$ such that the shortest path from $x$ to $y$ is of length $i$. Then we decompose $\|v_1\|$ as follows
\begin{eqnarray*}
    \|v_1\|_1 & = & |v_1\vert_{x}| + \|v_1 \vert_{S_1(x)}\|_1 + \|v_1 \vert_{S_2(x)}\|_1 +  \|v_1 \vert_{S_3(x)}\|_1 + \|v_1 \vert_{S_4(x)}\|_1 + \|v_1 \vert_{[n] \setminus B_4(x)}\|_1 \\
    & \le & |v_1\vert_{x}| + |S_1(x)|^{1/2} \|v_1 \vert_{S_1(x)}\| + |S_2(x)|^{1/2} \|v_1 \vert_{S_2(x)}\|+ |S_3(x)|^{1/2} \|v_1 \vert_{S_3(x)}\| \\
    && + |S_4(x)|^{1/2} \|v_1 \vert_{S_4(x)}\| + n^{1/2} \|v_1 \vert_{[n] \setminus B_4(x)}\| \\
    & \le & 1 + \sum_{i=1}^4|S_i(x)|^{1/2} + n^{1/2} \|v_1 \vert_{[n] \setminus B_4(x)}\|.
\end{eqnarray*}
By Lemma 4.2 of \citet{hiesmayr2023spectral}, the size of the neighborhood is bounded. For each $1 \le i \le 4$,
\begin{equation*}
    |S_i| = O_{\Prob}(d_1^{\downarrow}) =  O_{\Prob}\left(\frac{\log n}{\log \log n}\right).
\end{equation*}
On the other hand, we apply the bound in Theorem 1.4 of \citet{hiesmayr2023spectral} for $\|v_1 \vert_{[n] \setminus B_4(x)}\|$,
\begin{equation*}
    \|v_1 \vert_{[n] \setminus B_4(x)}\| = O_{\Prob}\left(\left(d/d_1^{\downarrow}\right)^2\right) = O_{\Prob}\left(\left(\frac{\log n}{\log \log n}\right)^{-2}\right) = O_{\Prob}\left(\left(\log n\right)^{-1.9}\right).
\end{equation*}
Combining the two bounds above gives the required bound for $\|v_1\|_1$.
\end{proof}

\section{Proof of Theorem~\ref{theorem:growth_rate_of_lambda_1_residual_under_alternative}: 
Growth rate of $\lambda_1(\Ace)$}
\label{section:proof_of_theorem_growth_rate_of_lambda_1_residual_under_alternative}
\begin{proof}[Proof of Theorem~\ref{theorem:growth_rate_of_lambda_1_residual_under_alternative}]
Note that $((n-1) \widehat{p} (1-\widehat{p}))^{-1/2}\|\Ace \|  \asymp \alpha^{-1/2} \|A -\widehat{P}^{(0)} \|$, where $\widehat{P}^{(0)}$ is a constant matrix in which all of its off-diagonal entries equal to $\widehat{p}$ and all diagonal entries being 0. 
By the triangle inequality,
\begin{eqnarray*}
    \alpha^{-1/2} \|A -\widehat{P}^{(0)} \| \ge \alpha^{-1/2} (\| \E[A] - \widehat{P}^{(0)} \| - \|A -\E[A] \|).
\end{eqnarray*}

On one hand, notice that $(\E[A] - \widehat{P}^{(0)})$ is a block-wise constant matrix with entries $Q_{ij} - \widehat{p}$.
Any of these blocks has $\Theta(n)$ rows and $\Theta(n)$ columns, 
and moreover, at least one of them has entries with absolute value greater than $\frac{\delta_n}{ 2}$.
Otherwise, if we have $|Q_{ij} - \widehat{p} |<\frac{\delta_n}{2}$ for all $i$, $j$, it will contradict with the assumption that $\max_{1\le i \le K, \, 1\le j \le K} |Q_{ij} - Q_{ij'}| = \delta_n$.
Therefore, since the 2-norm of a matrix is always no smaller than the 2-norm of its submatrix,
\begin{eqnarray*}
    \|\E[A] -\widehat{P}^{(0)} \| = \Omega(\delta_n n).
\end{eqnarray*}

On the other hand, we use the fact that $A$ is concentrated.
Denote $d_m \coloneqq \max_i \E[d_{i}]$.
Due to the assumption that $(\min_{1\le i \le K}\frac{n_i}{n}) \asymp 1$, we have $d_m / \alpha = O(1)$. Since $d_m /\log n\to \infty$, by \citet{benaych2020spectral}, 
\begin{equation*}
    \|A - \E[A]\| = (2+ o_{\Prob} (1)) \sqrt{d_m},
\end{equation*}
thus showing $\alpha^{-1/2} \|A - \E[A]\| = O_{\Prob} (1)$.
\end{proof}

\section{Proof of Proposition~\ref{proposition:difference_between_mu_1_H_and_lambda_1_alternative}: Difference between $\mu_1(\Hce)$ and $\lambda_1(\Ace)$ under alternative hypothesis}
\label{section:proof_of_proposition_difference_between_mu_1_H_and_lambda_1_alternative}
\begin{proof}[Proof of Proposition~\ref{proposition:difference_between_mu_1_H_and_lambda_1_alternative}]
We first show that
\begin{equation*}
    \alpha^{-1/2} |\mu_1(\Hc) - \lambda_1(\Ac)| = 
    O_{\Prob} (\delta_n n \alpha^{-1})
    + O_{\Prob} (1)
    = O_{\Prob} (1).
\end{equation*}
The empirical version counterpart can be shown in a similar manner.

Recall the definition of $\widetilde{H}$:
\begin{equation*}
    \widetilde{H} = \left(\begin{array}{cc}
       \frac{1}{\sqrt{\alpha}}  \Ac  &   \alpha^{-1} (\In - \Dc) \\
       \In  &  \bm{0}
    \end{array} \right) = \left(\begin{array}{cc}
       \frac{1}{\sqrt{\alpha}}  \Ac  &  \In - \alpha^{-1} D \\
       \In  &  \bm{0}
    \end{array} \right) , \label{eq:definition_of_tildeH}
\end{equation*}
and the signal-plus-noise decomposition $\widetilde{H} = \widetilde{H}_0 + E$. Since $\mu_1(\widetilde{H}) = \alpha^{-1/2} \mu_1(\Hc)$ and $\mu_1(\widetilde{H}_0) =  \alpha^{-1/2} \lambda_1(\Ac)$, 
the left-hand side equals
\begin{equation*}
    |\mu_1(\widetilde{H}) - \mu_1(\widetilde{H}_0)|.
\end{equation*}
We can similarly employ Theorem~\ref{theorem:eigenvalue_perturbation}, as demonstrated in Appendix~\ref{section:proof_of_proposition_tracy_widom_limit}, to show
\begin{equation*}
    |\mu_1(\widetilde{H}) - \mu_1(\widetilde{H}_0)| \le \lambdac_1^{-1}\|E\| + O(\|E\|^2).
\end{equation*}
where $\lambdac_1 = \lambda_1\left(\frac{\Ac}{\sqrt{\alpha}}\right) = \Omega_{\Prob}(1)$ \citep{benaych2020spectral}.
The remaining task is to bound $\|E\|$ under the alternative hypothesis.
We will next show $\|E\| = O_{\Prob}(1)$.

Note that 
\begin{eqnarray*}
    \|E\|  & = & \|\In - \alpha^{-1} D\| = \max_{1\le i \le n} |\alpha^{-1} d_i - 1|\\
    &\le&  \max_{1\le i \le n} |\alpha^{-1} \E[d_i] - 1| + \alpha^{-1}\max_{1\le i \le n} |d_i- \E[d_i]|.
\end{eqnarray*}
Because one must have $|P_{ij} - p| \le \delta_n$, for every $1\le i \le n$,
\begin{equation*}
    |\E[d_i] - (n-1)p| \le \sum_{j\ne i} |P_{ij}  - p| \le n\delta_n.
\end{equation*}
Therefore
\begin{equation*}
    \max_{1\le i \le n} |\alpha^{-1} \E[d_i] - 1| \le \alpha^{-1}  \max_{1\le i \le n} |\E[d_i] - (n-1)p - 1| \le \alpha^{-1} (n\delta_n + 1).
\end{equation*}
On the other hand, we can find a positive function $\omega(n)$ such that $\frac{\E[d_i]}{\omega(n) \log n} \to \infty$ for every $i$ while $\omega(n) \to \infty$.
The Chernoff's inequality (Theorem 2.3.1, Exercise 2.3.5 in \citet{vershynin2018high}) implies that for every $1 \le i \le n$,
\begin{equation*}
    \Prob\left(|d_i - \E[d_i]| \ge (\log n \cdot \omega(n))^{\frac{1}{2}} (\E[d_i])^{1/2}\right) \le 2 \exp(-c \log n \cdot \omega(n))
\end{equation*}
for some absolute constant $c>0$.
Here, we can always take $(\log n \cdot \omega(n) )^{\frac{1}{2}}(\E[d_i])^{-1/2} \in (0,1]$ when $n$ is large enough, due to our assumption $\frac{\E[d_i]}{\omega(n) \log n} \to \infty$. 
Then, we take the union bound over all $n$ nodes,
\begin{equation*}
    \Prob\left\{\exists 1 \le i \le n:  |d_i - \E[d_i]| \ge (\log n \cdot \omega(n))^{\frac{1}{2}} (\E[d_i])^{1/2} \right\} \le 2 n \exp(-c \log n \cdot \omega(n)) = o(1).
\end{equation*}
Thus, we have shown that with probability tending to 1,
\begin{equation*}
    \max_{1\le i \le n} |d_i- \E[d_i]| \le (\log n \cdot \omega(n))^{\frac{1}{2}} (\max_i\E[d_i])^{1/2} \le \left(\log n \cdot \omega(n) (\alpha - 1 + n\delta_n)\right)^{1/2}.
\end{equation*}
Combining the two bounds above, we conclude that 
\begin{eqnarray*}
    \|E\|&\le&  \alpha^{-1} (n\delta_n + 1) + \alpha^{-1}O_{\Prob} \left(\left(\log n \cdot \omega(n) (\alpha - 1 + n\delta_n)\right)^{1/2}\right)\\
    &=&  O_{\Prob} (\delta_n n \alpha^{-1}) + O_{\Prob} (1) \\
    & = & O_{\Prob}(1).
\end{eqnarray*}
The last equation is due to the fact that $\delta_n n = O(\alpha)$,
which can be inferred from the following inequality:
\begin{equation*}
    n \left(\min_{1\le i \le K} \frac{n_i}{n} \right)\delta_n \le n_{i} \max_{i,j} Q_{i,j} \le \max_i \E[d_i] \le \left(\min_{1\le i \le K} \frac{n_i}{n}\right)^{-1} \alpha,
\end{equation*}
and our assumption $\min_{1\le i \le K} \frac{n_i}{n} = \Omega(1)$.

Finally, the empirical counterpart 
\begin{equation*}
    \widehat{\alpha}^{-1/2} |\mu_1(\Hce) - \lambda_1(\Ace)| =  O_{\Prob} (\delta_n n \alpha^{-1}) + O_{\Prob}(1) = O_{\Prob}(1)
\end{equation*}
can be shown similarly.
In particular, one just needs to consider the empirical version of $\widetilde{H}$, denoted as $\widehat{\widetilde{H}}$:
\begin{equation*}
    \widehat{\widetilde{H}} = \left(\begin{array}{cc}
       \widehat{\alpha}^{-1/2}  \Ace  &  \In - \widehat{\alpha}^{-1} D \\
       \In  &  \bm{0}
    \end{array} \right),
\end{equation*}
and its decomposition $\widehat{\widetilde{H}} = \widehat{\widetilde{H}}_0 + \widehat{E}$,
\begin{equation*}
    \widehat{\widetilde{H}}_0
    = \left(\begin{array}{cc}
       \widehat{\alpha}^{-1/2}  \Ace  &  \bm{0} \\
       \In  &  \bm{0}
    \end{array} \right), 
    \quad 
    \widehat{E}=
    \left(\begin{array}{cc}
       \bm{0}  &  \In - \widehat{\alpha}^{-1} D \\
       \bm{0}  &  \bm{0}
    \end{array} \right).
\end{equation*}
Similar to $\widetilde{H}_0$, the nontrivial eigenvalues of $\widehat{\widetilde{H}}_0$ are given by the eigenvalues of $\widehat{\alpha}^{-1/2} \Ace$.
One can similarly apply Theorem~\ref{theorem:eigenvalue_perturbation} and then bound $\|\widehat{E}\|= \|\In - \widehat{\alpha}^{-1} D\|$. 
\end{proof}

\section{Eigenvalues of block matrix}
\begin{lemma}[Eigenvalues of block matrix]
\label{lemma:Eigenvalues of block matrix}
Suppose $B \in \mathbb{R}^{n\times n}$ is a block-wise real symmetric matrix with $K$ blocks each with size $n_i$, and $\sum_{i=1}^K n_i = n$. In particular, $B$ has the form
\begin{equation*}
    B = \left( \begin{array}{cccc}
        B_{11} (\bm{1}_{n_1}\bm{1}_{n_1}^{\top} - \ell_1\bm{I}_{n_1}) & B_{12} \bm{1}_{n_1}\bm{1}_{n_2}^{\top} & \ldots &  B_{1K} \bm{1}_{n_1}\bm{1}_{n_K}^{\top} \\
       B_{12} \bm{1}_{n_2}\bm{1}_{n_1}^{\top}  & B_{22} (\bm{1}_{n_2}\bm{1}_{n_2}^{\top} - \ell_2\bm{I}_{n_2}) & \ldots & B_{1K} \bm{1}_{n_2}\bm{1}_{n_K}^{\top} \\
       \ldots & \ldots & \ldots & \ldots \\
       B_{1K} \bm{1}_{n_K}\bm{1}_{n_1}^{\top}  & B_{2K} \bm{1}_{n_K}\bm{1}_{n_2}^{\top} & \ldots & B_{KK} (\bm{1}_{n_K}\bm{1}_{n_K}^{\top} -\ell_K\bm{I}_{n_K}) \\
    \end{array}\right).
\end{equation*}
Then $K$ of the eigenvalues of $B$ are given by the eigenvalues of
\begin{equation*}
    \left(\begin{array}{cccc}
       B_{11}(n_1-\ell_1)  &  \sqrt{n_1 n_2} B_{12} & \ldots & \sqrt{n_1 n_k} B_{1k} \\
       \sqrt{n_2 n_1} B_{12}  & B_{22}(n_2 -\ell_2) & \ldots & \sqrt{n_2 n_k} B_{2k} \\
       \ldots & \ldots & \ldots & \ldots \\
       \sqrt{n_kn_1} B_{1k} & \sqrt{n_2 n_k} B_{2k} & \ldots & B_{kk}(n_k-\ell_K) \\
    \end{array}\right).
\end{equation*}
The other $(n-K)$ eigenvalues are given by $\{-\ell_i B_{ii}\}_{i=1}^K$, each with multiplicity $n_i - 1$.
\end{lemma}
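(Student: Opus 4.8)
The plan is to exploit the block-constant structure of $B$ via an invariant-subspace decomposition. Write $\mathbb{R}^n = V \oplus V^\perp$, where $V$ is the $K$-dimensional subspace of vectors that are constant within each of the $K$ blocks, and $V^\perp$ consists of vectors whose coordinates sum to zero within each block. Since $B$ is symmetric, it suffices to show that $V$ is $B$-invariant; invariance of $V^\perp$ then follows automatically, and the spectrum of $B$ (counted with multiplicity) is the union of the spectra of the two restrictions $B|_V$ and $B|_{V^\perp}$.

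First I would take the orthonormal basis $u_1, \dots, u_K$ of $V$ with $u_i$ equal to $n_i^{-1/2}$ on the coordinates of block $i$ and $0$ elsewhere, and set $U = (u_1, \dots, u_K)$ so that $U^\top U = \bm{I}_K$. A direct computation of $B u_j$, using $\bm{1}_{n_i}\bm{1}_{n_j}^\top \bm{1}_{n_j} = n_j \bm{1}_{n_i}$ on the off-diagonal blocks and $(\bm{1}_{n_j}\bm{1}_{n_j}^\top - \ell_j \bm{I}_{n_j})\bm{1}_{n_j} = (n_j - \ell_j)\bm{1}_{n_j}$ on the diagonal block, shows that $B u_j \in V$ and that the matrix of $B|_V$ in this basis, namely $U^\top B U$, has off-diagonal entries $\sqrt{n_i n_j}\, B_{ij}$ and diagonal entries $(n_i - \ell_i) B_{ii}$. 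This is precisely the reduced $K \times K$ matrix in the statement, producing the first batch of $K$ eigenvalues.

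Next I would compute $B|_{V^\perp}$. For $v \in V^\perp$, each block-$j$ part $v_j$ satisfies $\bm{1}_{n_j}^\top v_j = 0$, so every off-diagonal block $B_{ij}\bm{1}_{n_i}\bm{1}_{n_j}^\top$ annihilates $v_j$, while the diagonal block contributes $B_{ii}(\bm{1}_{n_i}\bm{1}_{n_i}^\top - \ell_i \bm{I}_{n_i})v_i = -\ell_i B_{ii}\, v_i$. Hence $B$ restricted to $V^\perp$ acts, block by block, as multiplication by the scalar $-\ell_i B_{ii}$ on the $(n_i - 1)$-dimensional zero-sum subspace of block $i$. This contributes the eigenvalue $-\ell_i B_{ii}$ with multiplicity $n_i - 1$ for each $i$, i.e.\ the remaining $\sum_{i=1}^K (n_i - 1) = n - K$ eigenvalues, completing the count.

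I do not expect a genuine obstacle: once the decomposition is identified the rest is bookkeeping. The mild points of care are confirming that $V^\perp$ is truly $B$-invariant (immediate from symmetry of $B$, or directly from the explicit action above), the degenerate case $n_i = 1$ (where $n_i - 1 = 0$, so block $i$ contributes no eigenvalue to $V^\perp$), and tracking the $n_i^{-1/2}$ normalizations so that the reduced matrix emerges exactly as claimed, in particular symmetric, consistent with $B$ being symmetric.
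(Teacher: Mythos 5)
Your proof is correct, and it is essentially the paper's argument in different clothing: the paper conjugates $B$ by an orthogonal matrix sending each $\bm{1}_{n_i}$ to $\sqrt{n_i}\,\bm{e}_{n_i}$ and then permutes, which is exactly your change of basis adapted to the decomposition into block-constant vectors and block-wise zero-sum vectors. The computations of the reduced $K\times K$ matrix and of the scalar action $-\ell_i B_{ii}$ on each $(n_i-1)$-dimensional complement match the paper's.
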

\begin{proof}
We give the proof for $K=2$. The other cases with $K>2$ can be easily generalized similarly. When $K=2$,
 \begin{equation*}
    B = \left( \begin{array}{cc}
        B_{11} (\bm{1}_{n_1}\bm{1}_{n_1}^{\top} - \ell_1\bm{I}_{n_1}) & B_{12} \bm{1}_{n_1}\bm{1}_{n_2}^{\top}  \\
        B_{12} \bm{1}_{n_2}\bm{1}_{n_1}^{\top} &  B_{22} (\bm{1}_{n_2}\bm{1}_{n_2}^{\top} - \ell_2\bm{I}_{n_2})
    \end{array}\right) .
\end{equation*}   
Select an orthogonal matrix $P \in \mathbb{R}^{n_1 \times n_1}$ such that $P \bm{1}_{n_1} = \sqrt{n_1} \bm{e}_{n_1}$, where $\bm{e}_{n_1} \in \mathbb{R}^{n_1}$ is the column vector $(1, 0, \ldots, 0)^{\top}$ with dimension $n_1$.
Similarly select an orthogonal matrix $Q \in \mathbb{R}^{n_2 \times n_2}$ such that $Q \bm{1}_{n_2} = \sqrt{n_2} \bm{e}_{n_2}$.
Let $M = \mathrm{diag}(P, Q)$.
Then
\begin{eqnarray*}
    MBM^{\top} & = &
    \left(\begin{array}{cc}
      P   & \bm{0} \\
      0   & Q
    \end{array}\right) 
     \left( \begin{array}{cc}
        B_{11} (\bm{1}_{n_1}\bm{1}_{n_1}^{\top} -\ell_1 \bm{I}_{n_1}) & B_{12} \bm{1}_{n_1}\bm{1}_{n_2}^{\top}  \\
        B_{12} \bm{1}_{n_2}\bm{1}_{n_1}^{\top} &  B_{22} (\bm{1}_{n_2}\bm{1}_{n_2}^{\top} -\ell_2 \bm{I}_{n_2})
    \end{array}\right)
    \left(\begin{array}{cc}
      P^\top   & \bm{0} \\
      0   & Q^\top
    \end{array}\right) \\
    & = & \left(\begin{array}{cc}
       B_{11}(n_1 \bm{e}_{n_1} \bm{e}_{n_1}^{\top} - \ell_1\bm{I}_{n_1})   & \sqrt{n_1 n_2} B_{12} \bm{e}_{n_1} \bm{e}_{n_2}^{\top} \\
      \sqrt{n_1 n_2} B_{12} \bm{e}_{n_2} \bm{e}_{n_1}^{\top}   &  B_{22}(n_2 \bm{e}_{n_2} \bm{e}_{n_2}^{\top} -\ell_2 \bm{I}_{n_2})
    \end{array}\right).
\end{eqnarray*}
There exists a permutation matrix $R$ such that
\begin{equation*}
    R (MBM^{\top}) R^{\top} = \left(\begin{array}{cccc}
       B_{11}(n_1-\ell_1)  &  \sqrt{n_1 n_2} B_{12} & \bm{0} & \bm{0} \\
       \sqrt{n_2 n_1} B_{12}  & B_{22}(n_2 -\ell_2) & \bm{0} & \bm{0} \\
       0 & \bm{0} & -B_{11}\ell_1 \bm{I}_{n_1-1} & \bm{0} \\
       0 & \bm{0} & \bm{0} & -B_{22} \ell_2\bm{I}_{n_2-1} \\
    \end{array}\right),
\end{equation*}
and the claim is proved.
\end{proof}

\section{Proof of Proposition~\ref{proposition:lambda_1_EA_centered_larger_than_lambda_2_EA}: Inequality $\lambda_1(\E[A] - P^{(0)}) \ge \lambda_2(\E[A])$}
\label{section:proof_of_proposition:lambda_1_EA_centered_larger_than_lambda_2_EA}
\begin{proof}[Proof of Proposition~\ref{proposition:lambda_1_EA_centered_larger_than_lambda_2_EA}]
Let $p = \frac{n_1}{n}$ and $q = 1-p$. By Lemma~\ref{lemma:Eigenvalues of block matrix}, the two nontrivial eigenvalues of $\E[A]$ are given by the eigenvalues of 
\begin{equation*}
    n p_0\left(\begin{array}{cc}
       p (1+x\delta)  &  \sqrt{pq} (1 - \delta)\\
       \sqrt{pq} (1 - \delta)  &  q (1+kx\delta)
    \end{array}\right),
\end{equation*}
which are
\begin{equation*}
    \lambda_1(\E[A]), \lambda_2(\E[A]) = \frac{1}{2}np_0\left(1 + \delta x(p+kq) \pm \sqrt{(p - q + (p-kq) x \delta)^2 + 4 p q (1-\delta) ^2})\right). 
\end{equation*}
Meanwhile, the two nontrivial eigenvalues of $\E[A] - P^{(0)}$ are given by the eigenvalues of 
\begin{equation*}
    n p_0\delta \left(\begin{array}{cc}
       p x  &  -\sqrt{pq} \\
       -\sqrt{pq}   &  k qx
    \end{array}\right),
\end{equation*}
which are
\begin{equation*}
    \lambda_1(\E[A] - P^{(0)}), \lambda_2(\E[A]- P^{(0)}) = \frac{1}{2}np_0\delta\left(x(p+kq) \pm \sqrt{((p-kq) x)^2 + 4 p q })\right). 
\end{equation*}


We just need to show that
the following inequality holds for all $k\in [0, \infty)$:
\begin{equation}
    \delta \sqrt{x^2 (p-kq)^2 + 4pq} + \sqrt{(p - q + (p-kq) x \delta)^2 - 4 p q (1-\delta) ^2} \ge 1.
    \label{eq:inequality_for_showing_lambda_1_EA_centered_larger}
\end{equation}
Rewrite 
\begin{equation*}
    (p - q + (p-kq) x \delta)^2 - 4 p q (1-\delta) ^2 = (x^2(p-kq)^2+4pq)\delta^2 + 2(x(p-kq)(p-q) - 4pq) + 1.
\end{equation*}
Let $\alpha = x^2(p-kq)^2+4pq$ and $\beta =(p-kq)(p-q)x - 4pq = -pq((1+k)x + 2)$.
If we can show
\begin{equation*}
    \beta^2 \le \alpha,
\end{equation*}
then we can easily show \eqref{eq:inequality_for_showing_lambda_1_EA_centered_larger} because
\begin{equation*}
    \textrm{LHS} = \delta\sqrt{\alpha} + \sqrt{\alpha \delta^2 +1 +2\beta\delta} \ge \delta\sqrt{\alpha} + \sqrt{\alpha \delta^2 +1 -2\sqrt{\alpha}\delta} = 1.
\end{equation*}

Let $r = \frac{p}{q} \in [1, \infty)$, then $x = \frac{2 p q}{ p^2+ kq^2} = \frac{2}{r + kr^{-1}}$.
We can rewrite
\begin{equation*}
    \alpha = 4\frac{r}{(r+1)^2} \left[\frac{r}{(r^2+k)^2}(r-k)^2 + 1\right], \quad 
    \beta = \frac{r}{(r+1)^2}\left[\frac{(1+k)2r}{r^2 + k} + 2\right] = \frac{2r}{r+1}\frac{r+k}{r^2 + k}.
\end{equation*}
Therefore, 
\begin{eqnarray*}
    \alpha - \beta^2&  = &4\frac{r}{(r+1)^2}\left[\frac{r}{(r^2+k)^2}\left((r-k)^2 - (r+k)^2\right) + 1\right]\\
    & = & 4\frac{r}{(r+1)^2}\left[\frac{(r^2 +k)^2 - 4r^2k}{(r^2+k)^2}\right] \\
    & = & 4\frac{r}{(r+1)^2}\frac{(r^2 - k)^2}{(r^2+k)^2}\ge 0,
\end{eqnarray*}
with equality holds if and only if $k = r^2 = p^2/(1-p)^2$.
\end{proof}

\section{Proof of Proposition \ref{proposition:spectrum_of_B_equivalent_to_spectrum_of_H}: Spectral Equivalence of $\Bc$ and $\Hc$}
\label{section:proof_of_proposition_spectrum_of_B_equivalent_to_spectrum_of_H}
\begin{proof}[Proof of Proposition \ref{proposition:spectrum_of_B_equivalent_to_spectrum_of_H}]
    Given a vector $\tg \in \mathbb{R}^{\vec{E}(V)}$,
define $\tg^{\textrm{out}}$ and $\tg^{\textrm{in}}$ as the $n$-dimensional vectors
\begin{equation*}
    \tg^{\textrm{in}}_u = \sum_{v \ne u} \Ac_{uv} \tg_{v \to u}, 
    \quad 
    \tg^{\textrm{out}}_u = \sum_{v \ne u} \Ac_{uv} \tg_{u \to v}.
\end{equation*}
Then 
\begin{eqnarray*}
    (\Dc_u - 1) \tg^{\textrm{out}}_u & = & (\sum_{v \ne u} \Ac_{uv} - 1) \left(\sum_{v\ne u} \Ac_{uv} \tg_{u\to v}\right) \\
    & = & \sum_{v \ne u} \Ac_{uv} \sum_{w \ne u} \Ac_{uw} \tg_{u\to w} - \sum_{v \ne u} \Ac_{uv} \tg_{u\to v} \\
    & = & \sum_{v \ne u} \Ac_{uv} \left(\sum_{w \ne u}\Ac_{uw} \tg_{u\to w} -\tg_{u\to v} \right) \\
    & = & \sum_{v \ne u} \Ac_{uv} (\Bc \tg)_{v \to u} \\
    & = & (\Bc \tg)^{\textrm{in}}_u,
\end{eqnarray*}
and
\begin{eqnarray*}
    \sum_{v \ne u} \Ac_{uv} \tg^{\textrm{out}}_v - \tg^{\textrm{in}}_u
    & = & \sum_{v \ne u} \Ac_{uv} \sum_{w \ne v} \Ac_{vw} \tg_{v\to w} - \sum_{v \ne u} \Ac_{uv} \tg_{v\to u} \\
    & = & \sum_{v \ne u} \Ac_{uv} \left(\sum_{w \ne v}\Ac_{vw} \tg_{v\to w} -\tg_{v\to u} \right) \\
    & = & \sum_{v \ne u} \Ac_{uv} (\Bc \tg)_{u \to v} \\
    & = & (\Bc \tg)^{\textrm{out}}_u.
\end{eqnarray*}
Written in matrix form, we have
\begin{equation*}
    \left(\begin{array}{c}
       (\Bc \tg)^{\textrm{out}}  \\
       (\Bc \tg)^{\textrm{in}}  
    \end{array}\right) 
    = \left(\begin{array}{cc}
       \Ac & -\In \\
        \Dc - \In & \bm{0}  
    \end{array}
    \right)  \left(\begin{array}{c}
       \tg^{\textrm{out}}  \\
       \tg^{\textrm{in}}  
    \end{array}\right) 
\end{equation*}
Note that this matrix $\left(\begin{array}{cc}
       \Ac & -\In \\
        \Dc - \In & \bm{0}  
    \end{array}
    \right)$ has the same spectrum as $\Hc$ \eqref{eq:definition_of_Hc}.
\end{proof}

\end{document}